\journal{Journal of Multivariate Analysis}
\theoremstyle{plain}
\newtheorem{theorem}{Theorem}
\newtheorem{proposition}{Proposition}
\newtheorem{lemma}{Lemma}
\newtheorem{corollary}{Corollary}
\theoremstyle{definition}
\newtheorem{definition}{Definition}
\newtheorem{remark}{Remark}
\newcommand{\RR}{\mathbb R}
\newcommand{\PP}{\mathrm{Pr}}
\newcommand{\EE}{\mathbb E}
\newcommand{\sgn}{\operatorname{sgn}}
\newcommand{\Cov}{\operatorname{Cov}}
\newcommand*\diff{\mathop{}\!\mathrm{d}} 
\newcommand{\given}{\, | \,}
\begin{document}

\begin{frontmatter}

\title{
  Kendall's tau and Spearman's rho for normal location-scale and
  skew-normal scale mixture copulas
}

\author[1]{Ye Lu}

\address[1]{School of Economics, University of Sydney, Australia}


\begin{abstract}
  We derive explicit formulas for Kendall's tau and Spearman's rho for two broad
  classes of asymmetric copulas: normal location-scale mixture copulas and
  skew-normal scale mixture copulas. These classes encompass widely used
  specifications, including the normal scale mixture, skew-normal, and various
  skew-$t$ copulas, as special cases. The derived formulas establish functional
  mappings from copula parameters to rank correlation coefficients,
  and we investigate and compare how asymmetry parameters influence 
  rank correlation properties and drive departures from the elliptically symmetric
  case within these two classes.
  A notable finding is that the introduction of asymmetry in normal location-scale mixture
  copulas restricts the attainable range of rank correlations from the standard
  $[-1,1]$ interval, which is observed under elliptical symmetry, to a strict subset
  of $[-1,1]$. In contrast, the entire interval $[-1,1]$ remains attainable for
  skew-normal scale mixture copulas.
\end{abstract}

\begin{keyword} 

AC skew-$t$ \sep
asymmetric copulas \sep
attainable range \sep
GH skew-$t$ \sep
Kendall's tau \sep
normal location-scale mixture \sep
rank correlation \sep
skew-normal scale mixture \sep
Spearman's rho 

\MSC[2020] Primary 62H05 \sep Secondary 62H20
\end{keyword}

\end{frontmatter}

\section{Introduction\label{sec:1}}

Rank correlations are measures of association between two variables that play a
crucial role in many applications. Compared with Pearson's linear correlation,
rank correlations are invariant under nonlinear, rank-preserving
transformations of the measurement scale and are robust to outliers and violations of
normality. Among the various rank correlation measures, Kendall's tau, denoted by
$\tau$, and Spearman's rho, denoted by $\rho_S$, are the most commonly used.

A key concept underlying rank correlations is the \emph{copula}. A
$d$-dimensional copula is a distribution function on $[0,1]^d$ with univariate
marginals that are uniform on $[0,1]$.
By Sklar's theorem, any random vector with continuous marginal distribution
functions (hereafter referred to as \emph{continuous marginals})
admits a unique copula that fully characterizes its dependence structure. For
background on copula theory, see, for example, \cite{Nelsen2006} and 
\cite[][Chapter~5]{McNeilFreyEmbrechts2005}.
An important consequence of Sklar's theorem is that, for a pair of random variables
with continuous marginals, their rank correlations 
depend solely on the associated copula rather than on their marginal distributions.
Specifically, if $C$ denotes the copula of such a pair, Kendall's tau and
Spearman's rho can be expressed as (see, e.g., Prop. 5.29 in
\cite{McNeilFreyEmbrechts2005}):
\begin{equation*}
\tau = 4\int_0^1\int_0^1C(u_1,u_2)\diff C(u_1,u_2) -1
\quad\mbox{and}\quad
\rho_S = 12\int_0^1\int_0^1[C(u_1,u_2)-u_1u_2]\diff u_1\diff u_2.
\end{equation*}
These representations show that rank correlations may be viewed as moments of a
bivariate copula and can therefore be used to fit copula models to data under the
assumption of continuous marginals.

This approach is particularly well developed for \emph{elliptical copulas}, which are
defined as the unique copula implied by elliptical distributions with continuous
marginals via Sklar's theorem. Such copulas are characterized by a correlation matrix
together with the generating variable of the underlying elliptical distribution
(see, e.g., \cite[][Section~2]{Fang2002meta} and \cite[][Section~2]{KluppelbergKuhn2009}).
For a bivariate elliptical copula associated with a strictly positive and absolutely
continuous generating variable, Kendall's tau and the pseudo-correlation
coefficient $\rho$ satisfy:
\begin{equation}\label{eq:tau}
\tau = \frac2\pi\arcsin\rho.
\end{equation}
This relationship is well-established in the literature 
(e.g., \cite[][Theorem 2]{KluppelbergKuhn2009}, \cite[][Theorem 3.1]{Fang2002meta}
and \cite[][Theorem 2]{Lindskog2003kendall})
and is commonly used to estimate $\rho$ based on an estimate of $\tau$ by inversion. 
For the bivariate Gaussian copula, there is also a well-known relationship between
Spearman's rho and $\rho$:
\begin{equation}\label{eq:S-Gaussian}
  \rho_S = \frac6\pi\arcsin\frac\rho2.
\end{equation}
More recently, \cite{HeinenValdesogo2020} show that Spearman's rho for normal scale 
mixtures can be expressed as a correlation mixture of Spearman's rho in
the Gaussian case. 

In contrast to the extensive applications of Kendall's tau and Spearman's rho
in elliptically symmetric models, comparatively little is known about these 
rank correlations when asymmetry is introduced.
To the best of the author's knowledge, the only notable contribution in this
direction is \cite{HeinenValdesogo2022}, who derive formulas for these rank
correlations for the bivariate skew-normal distribution.
More generally, asymmetry can be introduced into elliptical distributions or their
subclasses through various mechanisms while preserving certain aspects of elliptical
structure. For a comprehensive overview, see the survey of various skew-elliptical
distributions in \cite[][Chapter 3]{Genton2004} and the discussion on skewed normal
mixture models in \cite[][Section 5.3.3]{McNeilFreyEmbrechts2005}.

Among the many proposed constructions of asymmetric models, two 
classes have attracted particular attention: \emph{normal location-scale mixtures} and
\emph{skew-normal scale mixtures}.
Their popularity stems from their generality, wide applicability, and tractable
stochastic representations based on multivariate normal variables.
Notably, both classes nest the \emph{normal scale mixture}---an
important subclass of elliptical distributions---as a benchmark special case.

Normal location-scale mixtures are obtained by mixing multivariate
normal distributions over both location and scale. In this class, the location-mixture
coefficients serve as skewness parameters; the model reduces to the elliptically symmetric
benchmark when these parameters are zero.
A notable subclass is the generalized hyperbolic (GH) distributions; see, for example,
\cite{BlaesildJensen1981}. A further special case, known as the GH skew-$t$
distribution, was highlighted by \cite{DemartaMcNeil2005} and has since found
extensive applications in economics and finance \citep{AasHaff2006,
Christoffersen2012, CrealTsay2015, LucasSchwaabZhang2017,OhPatton2023}. 

Skew-normal scale mixtures arise from mixing multivariate skew-normal
distributions over scale. These models reduce to the normal
scale mixture benchmark when the skewness parameters in the underlying skew-normal
distribution are set to zero. 
Significant special cases within this class include the skew-normal distribution
itself and the so-called AC skew-$t$ distribution, named after
\cite{AzzaliniCapitanio2003}, whose applications have been advanced by
\cite{KolloPettere2010}, \cite{Azzalini2014}, \cite{Yoshiba2018}, and
\cite{DengSmithManeesoonthorn2024}, among others. 

In this paper, we derive explicit formulas for Kendall's tau and Spearman's rho for
the these two asymmetric model classes described above. 
Since both classes
possess continuous marginals, the corresponding rank correlations depend solely on the
implied copula classes via Sklar's theorem. 
We demonstrate that, for both models, these rank correlations admit convenient
representations as expectations of mixtures of zero-mean normal cumulative distribution
functions (cdf's), yielding clear functional mappings from copula
parameters to each rank correlation measure.

Applying the derived formulas, we examine how skewness parameters influence both
rank correlations and drive departures from the elliptically symmetric benchmark case.
For both model classes, Kendall's tau and Spearman's
rho are symmetric with respect to the components of the skewness vector and invariant
under its sign change.
Moreover, under a single-skewness specification, both rank correlations are odd
functions of the pseudo-correlation coefficient $\rho$, and increasing asymmetry
reduces their magnitude.
This odd-function property implies that the sign of $\rho$
continues to determine the sign of $\tau$ and $\rho_S$, preserving the behavior
observed in the elliptically symmetric case.

From a modeling perspective, this
feature is noteworthy: the skewness parameter scales the magnitude of the rank
correlations without altering the symmetry between positive and negative dependence
regimes. 
This is advantageous for applications requiring a consistent interpretation of
$\rho$ across dependence regimes while allowing the strength of rank
dependence to vary. Conversely, the single-skewness specification
may be inappropriate for application demanding genuine asymmetry, where the
dependence structure itself differs between positive and negative regimes.

We also identify significant differences in the rank correlation properties of
the two model classes. For example, under the equi-skew setting, where skewness
parameters take identical values, the impact of asymmetry differs in direction
between the two classes. Specifically, increasing the equi-skewness parameter
\emph{increases} the rank correlations for normal location-scale mixture copulas but
\emph{decreases} them for skew-normal scale mixture copulas.

The most notable difference between these two classes concerns the attainable ranges of 
Kendall's tau and Spearman's rho under asymmetry. For normal location-scale mixture
copulas,
these ranges become \emph{strict} subsets of $[-1,1]$.
Specifically, we show that $\tau=\rho_S=-1$ if and only $\rho=-1$ and the skewness
parameters are negatives of each other; and $\tau=\rho_S=1$ if and only $\rho=1$ and
the skewness parameters coincide. 
This behavior sharply contrasts with the elliptical case,
where the full interval $[-1,1]$ is always attainable and both rank correlations
reach $-1$ or $1$ whenever $\rho$ equals $-1$ or $1$.
Conversely, these properties for elliptically symmetric case are well preserved when
asymmetry is introduced through skew-normal scale mixture copulas, which continue to span
the entire $[-1,1]$ interval. 

The rest of the paper is organized as follows. After introducing notation, we
formally define the two classes of asymmetric copulas implied by the normal
location-scale and skew-normal scale mixture distributions in
Section~\ref{sec:MN-MSN}.
Sections~\ref{sec:rank-corr-MN} and \ref{sec:rank-corr-MSN} then present the main
results on the expressions and properties of Kendall's tau and Spearman's rho for
these two copula classes, together with a thorough comparison.
Section~\ref{sec:invertibility} addresses the challenges of rank-based estimation for
these two copula classes shows how the derived formulas can be used to study the
invertibility of the mapping from copula parameters to rank correlations under
specific parameter configurations.
Finally, Section~\ref{sec:conclusion} concludes the paper, and
the mathematical proofs are provided in Section~\ref{sec:proofs}.

Before concluding this section, we clarify the notation used throughout the paper.
We use $\mathrm P$ to denote a Pearson correlation matrix, and
$\varrho$ the \emph{correlation operator}. When the correlation operator operates on
a $d\times d$ covariance matrix $\Sigma=[\sigma_{ij}]$, written $\varrho(\Sigma)$, it
returns the corresponding correlation matrix. When $\varrho$ operates on
$\rho\in(-1,1)$, written $\varrho(\rho)$, it returns a $2\times 2$ correlation matrix
with off-diagonal entries $\rho$. 
Moreover, we use $\mathrm N$ to denote the normal distribution; $\mathrm{Ga}(a,b)$
and $\mathrm{IG}(a,b)$ the gamma and inverse-gamma distributions with shape parameter
$a$ and rate parameter $b$.
We denote by $\phi$ and $\Phi$ the pdf and cdf of the univariate standard normal
distribution, respectively. The pdf of the univariate standard skew-normal
distribution with skewness parameter $\alpha\in\RR$ is
$\phi^{\mathrm{s}}(x;\alpha)\coloneqq2\phi(x)\Phi(\alpha x)$. For a $d\times d$
covariance matrix $\Sigma$, we denote by $\phi_d(x;\Sigma)$ and $\Phi_d(x;\Sigma)$,
for $x\in\RR^d$, the pdf and cdf of $\mathrm N(0,\Sigma)$. When $d=2$, we use the
shorthand notation $\phi_2(x;\rho)$ and $\Phi_2(x;\rho)$ for
$\phi_2(x;\varrho(\rho))$ and $\Phi_2(x;\varrho(\rho))$, respectively. 
Lastly, the sign $=_d$ means both sides of the equality have the same distribution,
and $X\perp Y$ means the two random variables $X$ and $Y$ are independent.

\section{Two classes of asymmetric copulas}
\label{sec:MN-MSN}

We define two classes of asymmetric copulas implied by the normal location-scale
mixture and skew-normal scale mixture distributions in Sections~\ref{sec:lsmn} and
\ref{sec:msn}, respectively.

\subsection{Normal location-scale mixture copulas} \label{sec:lsmn}

Given $\mu,\beta\in\RR^d$, a $d\times d$ positive definite matrix $\Sigma$, and a
univariate distribution $F$ on $(0,\infty)$, a $d$-dimensional random vector $X$ is
said to be a \emph{normal location-scale mixture}, denoted
$X\sim\mathrm{MN}(\mu,\Sigma,\beta,F)$, if $X$ can be expressed as 
\[
X=\mu+W\beta+\sqrt{W}Z,
\]
where $W$ and $Z$ are independent, $W\sim F$, and $Z\sim\mathrm N(0,\Sigma)$.
A normal location-scale mixture is also referred to as a normal \emph{mean-variance
mixture} with a linear mean function; see, for example, Section 3.2.2 in
\cite{McNeilFreyEmbrechts2005}.
The distribution of $X$ reduces to the normal scale mixture---hence becomes
elliptically symmetric---if and only if $\beta=0$. Consequently, the vector $\beta$
acts as the \emph{skewness/asymmetry} parameter. 

A notable subclass of this class of distributions is the \emph{generalized
hyperbolic} (GH) distribution introduced by \cite{Barndorff1977} for the univariate
case and extended by \cite{BlaesildJensen1981} for the multivariate case. This
subclass is derived by selecting the mixing distribution $F$ to be a
\emph{generalized inverse Gaussian} (GIG) distribution \citep{BarndorffHalgreen1977},
denoted by $\mathrm N^-(\lambda,\chi,\psi)$ with parameters $\lambda\in\RR,
\chi,\psi\ge0$. The pdf of the GIG distribution is given by
\begin{equation}\label{eq:GIG-density}
f(x;\lambda,\chi,\psi)
=\frac{(\psi/\chi)^{\lambda/2}}{2 K_\lambda(\sqrt{\psi\chi})}x^{\lambda-1}\exp\left(-\frac{\psi x+\chi/x}2\right) \qquad (x>0),
\end{equation}
where $K_\lambda$ is a modified Bessel function of the second kind with index
$\lambda$, defined as
$K_\lambda(x) = (\pi/2) [I_{-\lambda}(x)-I_\lambda(x)]/\sin\lambda\pi$ for $x>0$,
where $I_\lambda(x)=\sum_{m=0}^\infty[m!\,\Gamma(m+\lambda+1)]^{-1}(x/2)^{2m+\lambda}$
is the modified Bessel function of the first kind.
Historically, $K_\lambda$ has also been referred to as 
the modified Bessel function of the third kind 
(see, e.g., Section A.25 in \cite{McNeilFreyEmbrechts2005} and
10.2.15 in \cite{AbramowitzStegun1965}),
the modified Hankel function, and the MacDonald function (see, e.g., \cite{Kropac1982}).
Note that the GIG density contains the gamma and inverse gamma densities as special
limiting cases, corresponding to $\chi=0$ and $\psi=0$, respectively.
In these cases, \eqref{eq:GIG-density} must be interpreted as a limit.

The GH family is highly flexible, encompassing many known distributions as special
cases. For example, when $\lambda=(d+1)/2$, it corresponds to a $d$-dimensional
hyperbolic distribution. When $\lambda>0, \psi>0$ and $\chi=0$, the distribution is
known as the variance-gamma, generalized Laplace, or Bessel function distribution.
Setting $\lambda=-1/2$ yields the normal inverse gamma (NIG) distribution.
Particular attention is given to the case where $\lambda=-\nu/2, \chi=\nu$ for some
$\nu>0$ and $\psi=0$. Under this parameter setting, the mixing distribution is
$\mathrm N^-(-\nu/2,\nu,0)=_d\mathrm{IG}(\nu/2,\nu/2)$, and the GH family
reduces to a class of asymmetric or skew $t$ distributions, known as the \emph{GH
skew-$t$} distribution. 

Since standardizing the marginals involves strictly increasing transformations that
do not alter the copula, the copula of $\mathrm{MN}(\mu,\Sigma,\beta,F)$ is identical
to that of $\mathrm{MN}(0,\varrho(\Sigma),\mathrm{diag}(\Sigma)^{-1/2}\beta,F)$. We
define the \emph{normal location-scale mixture copula} as follows.

\begin{definition} \label{def:MN-copula}
Let $d\geq2$ be a positive integer. Given $\beta\in\RR^d$, a $d\times d$ correlation
matrix $\mathrm P$ and a univariate distribution $F$ on $(0,\infty)$, the normal
location-scale mixture copula $C_{\mathrm{mn}}(\mathrm P,\beta,F)$ is the copula of
the normal location-scale mixture $\mathrm{MN}(0,\mathrm P,\beta,F)$. In the case
$d=2$, we denote it by $C_{\mathrm{mn}}(\rho,\beta,F)$.
\end{definition}

When the skewness vector $\beta=0$, the copula
$C_{\mathrm{mn}}(\mathrm P,0,F)$ reduces to the normal scale mixture copula, an
important class of elliptical copulas, as considered in Example~1 of
\cite{KluppelbergKuhn2009}.
If the mixing distribution $F$ in Definition~\ref{def:MN-copula} is
$\mathrm{IG}(\nu/2,\nu/2)$ for some $\nu>0$, the resulting copula corresponds to the
so-called GH skew-$t$ copula with degrees of freedom $\nu$.
The bivariate GH skew-$t$ copula is denoted by $C_{\mathrm{GHt}}(\rho,\beta,\nu)$.

\subsection{Skew-normal scale mixture copulas} \label{sec:msn}

A $d$-dimensional \emph{normalized} skew-normal distribution $\mathrm{SN}(0,\mathrm
P,\alpha)$ with a correlation matrix $\mathrm P$ and skewness vector $\alpha\in\RR^d$
is defined by its pdf $2\,\phi_d(x;\mathrm P)\,\Phi(\alpha^\top x)$, for $x\in\RR^d$.
One of the attractive features of the skew-normal family is that it admits a variety
of stochastic representations; see Section~5.1.3 in \cite{Azzalini2014}. For example,
a $d$-dimensional random vector $X\sim \mathrm{SN}(0,\mathrm P,\alpha)$ admits the
representation
\begin{equation}\label{eq:SN-conditioning-0}
X = (Z\given Z_0>0),
\end{equation}
where $Z$ is a $d$-dimensional normal random vector and $Z_0$ is a univariate normal
random variable such that
\begin{equation}\label{eq:SN-conditioning}
  \begin{bmatrix} Z\\Z_0 \end{bmatrix}
  \sim
  \mathrm N_{d+1}(0, \mathrm R),
  \qquad
  \mathrm R = \begin{bmatrix} \mathrm P & \delta \\ \delta^\top & 1 \end{bmatrix},
  \qquad
  \delta =  \frac{\mathrm P\alpha}{\sqrt{1+\alpha^\top\mathrm P\alpha}}.
\end{equation}

It is important to note that although the skew-normal distribution is closed under
marginalization, the marginal distribution of the $i$th component of
$\mathrm{SN}(0,\mathrm P,\alpha)$ is \emph{not} simply $\mathrm{SN}(0,1,\alpha_i)$,
where $\alpha_i$ denotes the $i$th element of $\alpha$. An alternative
parameterization with simpler parameter transformation under marginalization is
motivated by the definition of $\delta$ in \eqref{eq:SN-conditioning} and is
summarized in the following remark.

\begin{remark} \label{remark:delta-prameterization}
  The $d$-dimensional normalized skew-normal distribution can be alternatively
  parameterized by $(\mathrm P,\delta)$, where
  $\delta=(\delta_1,\ldots,\delta_d)^\top\in\RR^d$ serves as the skewness parameter and
  satisfies $\delta^\top\mathrm P^{-1}\delta<1$ and $|\delta_i|<1$ for all
  $i\in\{1,\ldots,d\}$. 
  There is a one-to-one correspondence between $(\mathrm P,\alpha)$ and $(\mathrm
  P,\delta)$:
  \begin{equation}\label{eq:del-alp}
    \delta = \frac{\mathrm P\alpha}{\sqrt{1+\alpha^\top\mathrm P\alpha}},
    \qquad
    \alpha = \frac{\mathrm P^{-1}\delta}{\sqrt{1-\delta^\top\mathrm P^{-1}\delta}}.
  \end{equation}
  Clearly, $\alpha=0$ if and only if $\delta=0$, and changing the sign of one skewness
  vector induces the same sign change in the other. Under the $(\mathrm P,\delta)$
  parameterization, the $i$th marginal distribution of $\mathrm{SN}(0,\mathrm
  P,\delta)$ is simply $\mathrm{SN}(0,1,\delta_i)$.
  However, applying the second mapping in \eqref{eq:del-alp} to the univariate case
  shows that the marginal skewness parameter under the $(\mathrm P,\alpha)$
  parameterization is $\alpha_i^\dagger=\delta_i(1-\delta_i^2)^{-1/2}$, which is not
  equal to the $i$th element of $\alpha$.
\end{remark}

Given $\mu,\alpha\in\RR^d$, a $d\times d$ covariance matrix $\Sigma$, and a
univariate distribution $F$ on $(0,\infty)$, a random vector $X$ is said to be a
\emph{skew-normal scale mixture}, denoted $X\sim\mathrm{MSN}(\mu,\Sigma,\alpha,F)$, if
$X$ can be expressed as
\[
X = \mu+\sqrt WZ,
\]
where $W$ and $Z$ are independent random variables, with $W\sim F$ and
$Z=_d\operatorname{diag}(\Sigma)^{1/2}\mathrm{SN}(0,\varrho(\Sigma),\alpha)$.
Clearly, the distribution of $X$ reduces to the normal scale mixture, and hence
becomes elliptically symmetric, if and only if the skewness vector $\alpha=0$.

Within this class, particular attention is given to an important subclass known as
the \emph{AC skew-$t$} distribution, named after \cite{AzzaliniCapitanio2003} and was
independently proposed by \cite{Gupta2003}.
The AC skew-$t$ distribution is obtained by choosing the mixing distribution $F$ to
be 
$\mathrm{IG}(\nu/2,\nu/2)$ with $\nu>0$.
For a comprehensive analysis of the properties of this class of multivariate skew-$t$
distributions, as well as a historical overview, see \citep[][Section 6.2]{Azzalini2014}.

Since standardizing the margins does not alter the copula, the copula of
$X\sim\mathrm{MSN}(\mu,\Sigma,\alpha,F)$, is the same as the copula of
$\operatorname{diag}(\Sigma)^{-1/2}(X-\mu)\sim\mathrm{MSN}(0,\varrho(\Sigma),\alpha,F)$.
We define the \emph{skew-normal scale mixture copula} as follows.

\begin{definition}\label{def:MSN-copula}
Let $d\geq2$ be a positive integer. Given $\alpha\in\RR^d$, a $d\times d$ correlation
matrix $\mathrm P$ and a univariate distribution $F$ on $(0,\infty)$, the skew-normal
scale mixture copula $C_{\mathrm{msn}}(\mathrm P,\alpha,F)$ is the copula of the
scale mixture of skew-normals $\mathrm{MSN}(0,\mathrm P,\alpha,F)$.
In the case $d=2$, we denote it by $C_{\mathrm{msn}}(\rho,\alpha,F)$. 
\end{definition}

As a degenerate special case, the skew-normal copula is nested in this class of
copulas, and the bivariate skew-normal copula is denoted by
$C_{\mathrm{sn}}(\rho,\alpha)$. If the mixing distribution $F$ is
$\mathrm{IG}(\nu/2,\nu/2)$ for some $\nu>0$, the resulting copula is the AC skew-$t$
copula with degrees of freedom $\nu$. Its bivariate form is denoted by
$C_{\mathrm{ACt}}(\rho,\alpha,\nu)$. 

Due to the correspondence between $\alpha$ and $\delta$ in \eqref{eq:del-alp},
$C_{\mathrm{msn}}(\mathrm P,\alpha,F)$ can be equivalently represented by
$C_{\mathrm{msn}}(\mathrm P,\delta,F)$. The alternative parameterization 
by $\delta$ is instrumental for analyzing the rank correlations for bivariate
skew-normal scale mixture copulas, as shown in Section~\ref{sec:rank-corr-MSN}.
The following remark specializes Remark~\ref{remark:delta-prameterization} to the
bivariate setting.

\begin{remark} \label{remark:delta-prameterization-copula}
  In the bivariate case, the skewness vectors $\alpha=(\alpha_1,\alpha_2)^\top$ and
  $\delta=(\delta_1,\delta_2)^\top$ are related as 
  \begin{equation}\label{eq:del-alp-i}
    \delta_i
    =\frac{\alpha_i+\rho\alpha_{-i}}{\sqrt{1+\alpha_1^2+\alpha_2^2+2\rho\alpha_1\alpha_2}},
    \quad\quad
    \alpha_i
    =\frac{(\delta_i-\rho\delta_{-i})\,/\sqrt{1-\rho^2}}{\sqrt{1-\rho^2-\delta_1^2-\delta_2^2+2\rho\delta_1\delta_2}}, 
  \end{equation}
  for $i\in\{1,2\}$ and $\rho\in(-1,1)$. Here, $-i$ denotes the index other than $i$.
  From \eqref{eq:del-alp-i}, it follows that $\alpha_1=\alpha_2$ if and only if
  $\delta_1=\delta_2$. However, if one of $\alpha_1$ and $\alpha_2$ is zero, it need
  not be the case that one of $\delta_1$ and $\delta_2$ is zero, and vice versa.
  Moreover, $\delta_1$ and $\delta_2$ are constrained by
  $\delta_1^2+\delta_2^2-2\rho\delta_1\delta_2<1-\rho^2$.
\end{remark}

\section{Kendall's tau and Spearman's rho for normal location-scale mixture copulas}
\label{sec:rank-corr-MN}

In this section, we present the formulas and properties of Kendall's tau and
Spearman's rho for bivariate normal location-scale mixture copulas. Then, we
specialize the results to the GH skew-$t$ copula, an important and widely applicable
special case. In the last subsection, we discuss the equi-skew and single-skew
settings.

\subsection{Formulas for Kendall's tau and Spearman's rho}

The following theorem gives the formulas for Kendall's tau and Spearman's rho for the bivariate normal location-scale mixture copula. In the theorem, $\Phi_2(x_1,x_2;\rho)$ denotes the cdf of the bivariate normal distribution $\mathrm N(0,\varrho(\rho))$.

\begin{theorem}\label{thm:MN-tau-S}
  Suppose $X_1$ and $X_2$ are two random variables with continuous cdf's and copula
  $C_{\mathrm{mn}}(\rho,\beta,F)$, where $\rho\in[-1,1]$,
  $\beta=(\beta_1,\beta_2)\in\RR^2$, and $F$ is a univariate distribution on
  $(0,\infty)$. Kendall's tau of $X_1$ and $X_2$ is given by 
  \begin{align}\label{eq:MN-tau}
    \tau(\rho,\beta,F) = 4\,\EE\,\Phi_2(\beta_1V,\,\beta_2V;\rho) - 1,
  \end{align}
  where $V=(W^\star-W)/\sqrt{W^\star+W}$ with $W, W^\star\sim$ i.i.d.\,$F$, and
  Spearman's rho of $X_1$ and $X_2$ is given by
  \begin{align}\label{eq:MN-S}
    \rho_S(\rho,\beta,F) = 12\,\EE\,\Phi_2(\beta_1V_1,\,\beta_2V_2;\rho V_3) - 3, 
  \end{align}
  where 
  $V_i = (W_i-W_3)/\sqrt{W_i+W_3}$,\, for $i=1,2$, and
  $V_3 = W_3/\sqrt{(W_1+W_3)(W_2+W_3)}$, with $W_1,W_2,W_3\sim$ i.i.d.\,$F$.
\end{theorem}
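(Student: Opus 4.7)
\emph{Proof sketch.} My plan is to reduce each rank correlation to a quadrant probability for a difference of independent copies of $(X_1, X_2)$; conditional on the mixing weights, these differences are bivariate normal, so the claimed formulas emerge after standardizing and integrating over the mixing weights. Since rank correlations depend only on the copula, I may assume throughout that $(X_1, X_2) \sim \mathrm{MN}(0, \varrho(\rho), \beta, F)$, i.e.\ $X = W\beta + \sqrt{W}Z$ with $W \sim F$, $Z \sim \mathrm{N}(0, \varrho(\rho))$, and $W \perp Z$.

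For part~(i), let $X^\star = W^\star \beta + \sqrt{W^\star} Z^\star$ be an independent copy of $X$, and use the continuous-margin identity $\tau = 4\,\PP(X_1 < X_1^\star,\, X_2 < X_2^\star) - 1$. Conditional on $(W, W^\star)$, the vector $X - X^\star$ is bivariate normal with mean $(W - W^\star)\beta$ and covariance $(W + W^\star)\varrho(\rho)$. Standardizing each margin by $\sqrt{W + W^\star}$ and setting $V = (W^\star - W)/\sqrt{W^\star + W}$ converts the event $\{X_1 - X_1^\star < 0, X_2 - X_2^\star < 0\}$ into $\{\tilde Y_i < V\beta_i\}_{i=1,2}$, where $(\tilde Y_1, \tilde Y_2)$ is a standard bivariate normal with correlation $\rho$, so the conditional probability equals $\Phi_2(V \beta_1, V \beta_2; \rho)$. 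Taking expectation over $(W, W^\star)$ yields \eqref{eq:MN-tau}.

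For part~(ii), introduce two further independent copies $(X^\star_1, X_2^\star)$ and $(X_1^{\star\star}, X_2^{\star\star})$ of $(X_1, X_2)$, driven by mixing weights $W_1, W_2 \sim F$, and set $W_3 \coloneqq W$. Because $\EE[F_i(X_i)] = 1/2$ gives $\EE[(1-F_1(X_1))(1-F_2(X_2))] = \EE[F_1(X_1)F_2(X_2)]$, we have
\[
\rho_S \;=\; 12\,\EE[(1-F_1(X_1))(1-F_2(X_2))] - 3 \;=\; 12\,\PP(X_1 < X_1^\star,\, X_2 < X_2^{\star\star}) - 3.
\]
Conditional on $(W_1, W_2, W_3)$, the pair $(X_1 - X_1^\star, X_2 - X_2^{\star\star})$ is bivariate normal with means $(W_3 - W_1)\beta_1, (W_3 - W_2)\beta_2$, marginal variances $W_3 + W_1, W_3 + W_2$, and cross-covariance $W_3 \rho$ -- only the term $\sqrt{W_3}(Z_1, Z_2)$ inside $(X_1, X_2)$ is shared between the two coordinates, since the starred normal terms are independent across copies. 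Standardizing turns the correlation into the pseudo-correlation $W_3 \rho /\sqrt{(W_1 + W_3)(W_2 + W_3)} = \rho V_3$ and the event $\{X_1 < X_1^\star,\, X_2 < X_2^{\star\star}\}$ into $\{\tilde Y_i < V_i \beta_i\}_{i=1,2}$ with $(\tilde Y_1, \tilde Y_2)$ a standard bivariate normal with correlation $\rho V_3$. Integrating the conditional probability $\Phi_2(V_1 \beta_1, V_2 \beta_2; \rho V_3)$ over $(W_1, W_2, W_3)$ yields \eqref{eq:MN-S}.

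The main delicate point is the construction for part~(ii): to obtain cross-covariance $W_3 \rho$ -- and hence the random pseudo-correlation $\rho V_3$ -- in the relevant bivariate normal, one must use two extra independent copies of $(X_1, X_2)$ and keep exactly one coordinate from each, so that the shared mixing weight $W_3$ of $(X_1, X_2)$ is preserved while the starred coordinates are driven by fresh, independent weights. Once the correct conditional structure is set up, both parts reduce to the same core computation -- standardizing a conditional bivariate normal and taking an expectation over the mixing weights -- and the remaining bookkeeping with $V, V_1, V_2, V_3$ is routine.
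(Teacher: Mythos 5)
Your proposal is correct and follows essentially the same route as the paper: reduce $\tau$ and $\rho_S$ to quadrant probabilities for differences against independent comparison vectors, condition on the mixing weights to get a bivariate normal, standardize, and integrate. The only cosmetic difference is in part~(ii), where you realize the independent-margins comparison pair $(X_1^\star, X_2^{\star\star})$ as coordinates of two separate independent copies (justified via $\EE[F_1(X_1)F_2(X_2)]$), whereas the paper builds a single vector $X^\circ$ with independent normal components and invokes its Lemma~\ref{lemma:tau-rhoS-X}~(ii); the resulting conditional covariance structure and computation are identical.
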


Both Kendall's tau and Spearman's rho, as described in \eqref{eq:MN-tau} and
\eqref{eq:MN-S}, are expectations of mixtures of the bivariate normal cdf, where the
distributions of the mixing variables $V$ and $(V_1,V_2,V_3)$ are determined by the
distribution $F$. Some properties of the distributions of $V$ and $(V_1,V_2,V_3)$ are
summarized in the following remark, and will be used in establishing the properties
of these rank correlations in Section~\ref{sec:lsmn-properties}.

\begin{remark}\label{remark:MN-mix-dist}
  By constructions of $V$ and $(V_1,V_2,V_3)$ in Theorem~\ref{thm:MN-tau-S}, we have
  (i) $V=_d-V$, (ii) $(V_1,V_2,V_3)=_d(V_2,V_1,V_3)$, and (iii)
  $V_i=_d-V_i$ for $i\in\{1,2\}$.
\end{remark}

To highlight the role played by the skewness vector $\beta$, we may relate the
formulas for $\tau$ and $\rho_S$ to their counterparts under symmetry. In particular,
when $\beta=0$, the formula \eqref{eq:MN-tau} simplifies to
\begin{equation}\label{eq:tau-rho-0-F}
  \tau(\rho,0,F)=4\,\Phi_2(0,0;\rho)-1=\frac2\pi\arcsin\rho.
\end{equation}
Equation~\eqref{eq:tau-rho-0-F} is well expected and matches formula \eqref{eq:tau}. 
For Spearman's rho, we have 
\begin{equation}\label{eq:S-rho-0-F}
\rho_S(\rho,0,F)=12\,\EE\,\Phi_2(0,0;\rho V_3)-3 =\frac6\pi\,\EE\arcsin\rho V_3,
\end{equation}
which, unlike Kendall's tau, depends on the mixing distribution $F$ through
$V_3$. The formula in \eqref{eq:S-rho-0-F} was previously derived by
\cite{HeinenValdesogo2020}. We note that the bivariate Gaussian copula can be considered
as a scale normal mixture copula with $F$ degenerated at $1$;
in that case, $V_3=1/2$ almost surely, and $\rho_S=(6/\pi)\arcsin(\rho/2)$ as shown
in \eqref{eq:S-Gaussian}.

\subsection{Properties of Kendall's tau and Spearman's rho} \label{sec:lsmn-properties}

Since Kendall's tau and Spearman's rho given in Theorem~\ref{thm:MN-tau-S} are both
expressed in terms of $\Phi_2(x_1,x_2;\rho)$, their properties can be analyzed using
the characteristics of this bivariate normal cdf. In particular, for
$(x_1,x_2)\in\RR^2$ and $\rho\in(-1,1)$, two useful representations of
$\Phi(x_1,x_2;\rho)$ are given as follows
(see, e.g., Equation 4.6 in \cite{Owen1956} for the representation
\eqref{eq:Phi2-rep}, and Equation B.18 in \cite{Azzalini2014} for the
representation \eqref{eq:Phi2-rep-2}):
\begin{align}
\Phi_2(x_1,x_2;\rho)
&=
\Phi(x_1)\Phi(x_2)+\int_0^\rho\phi_2(x_1,x_2;r) \diff r, \label{eq:Phi2-rep}\\
\Phi_2(x_1,x_2;\rho)
&= 
\frac12\,[\Phi(x_1)+\Phi(x_2)] - a(x_1,x_2) 
-T\left(x_1,\,\frac{x_2-\rho x_1}{x_1\sqrt{1-\rho^2}}\right)
-T\left(x_2,\,\frac{x_1-\rho x_2}{x_2\sqrt{1-\rho^2}}\right), \label{eq:Phi2-rep-2}
\end{align}
where 
$a(x,y) = 1\{\sgn(x)+\sgn(y)<0\}/2$,
and $T$ is the Owen's T function \citep{Owen1956} defined by 
$T(h,a)=(2\pi)^{-1}\int_0^a(1+x^2)^{-1}\exp\{-h^2(1+x^2)/2\}\diff x$ for $h,a\in\RR$.
For two independent standard normal random variables $Z_1$ and $Z_2$, we have
$T(h,a)=\PP\{Z_1>h,0<Z_2<aZ_1\}$ for $h\geq0$ and $a\geq0$. Moreover,
$T(-h,a)=T(h,a)$ and $T(h,-a)=-T(h,a)$. 

The decomposition in~\eqref{eq:Phi2-rep-2} is particularly useful for analyzing the properties of Spearman's rho, whose expression in \eqref{eq:MN-S} involves more intricate mixing structure. We also note the following partial derivatives: 
\begin{equation} \label{eq:dPhi2}
  \frac{\partial}{\partial x_1}\,\Phi_2(x_1,x_2;\rho)  
  =
  \phi(x_1)\Phi\left(\frac{x_2-\rho x_1}{\sqrt{1-\rho^2}}\right),
  \qquad
  \frac{\partial}{\partial\rho}\,\Phi_2(x_1,x_2;\rho)  
  =
  \phi_2(x_1,x_2;\rho).
\end{equation}
Drawing on Theorem~\ref{thm:MN-tau-S}, Remark~\ref{remark:MN-mix-dist}, and the
properties of $\Phi_2(\,\cdot\,;\rho)$, we obtain several properties of the rank
correlations of the normal location-scale mixture copula and summarize them in the
following proposition.

\begin{proposition} \label{prop:MN-tau-S-properties}
  Kendall's tau and Spearman's rho for the bivariate normal location-scale mixture
  copula $C_{\mathrm{mn}}(\rho,\beta,F)$ possess the following properties:
  \smallskip

  \noindent \emph{(i)}
  $\tau(\rho,(\beta_1,\beta_2),F)=\tau(\rho,(\beta_2,\beta_1),F)$,
  $\rho_S(\rho,(\beta_1,\beta_2),F)=\rho_S(\rho,(\beta_2,\beta_1),F)$.
  \smallskip
  
  \noindent \emph{(ii)}
  $\tau(\rho,\beta,F)=\tau(\rho,-\beta,F)$
  and
  $\rho_S(\rho,\beta,F)=\rho_S(\rho,-\beta,F)$.
  \smallskip
  
  \noindent \emph{(iii)}
  $\partial\tau/\partial\rho > 0$
  and
  $\partial\rho_S/\partial\rho>0$.
  \smallskip 
  
  \noindent \emph{(iv)}
  If $\rho=0$, then $\sgn\tau = \sgn\rho_S = \sgn \beta_1\beta_2$.
  \smallskip 
  
  \noindent \emph{(v)}
  $\tau=\rho_S = -1$ if and only if $\rho=-1$ and $\beta_1=-\beta_2$.
  \smallskip 
  
  \noindent \emph{(vi)}
  $\tau=\rho_S = 1$ if and only if $\rho=1$ and $\beta_1=\beta_2$.
\end{proposition}

Properties~(i) and (ii) show that the rank correlations are symmetric
in the two components and invariant to a sign change of the skewness vector.
Note that for two random variables $X_1$ and $X_2$, the rank correlation between
$X_1$ and $X_2$ is always equal to that between $X_2$ and $X_1$ and that between
$-X_1$ and $-X_2$.
Here, if $(X_1,X_2)\sim\mathrm{MN}(0,\varrho(\rho),(\beta_1,\beta_2),F)$, then 
$(X_2,X_1)\sim\mathrm{MN}(0,\varrho(\rho),(\beta_2,\beta_1),F)$
and $(-X_1,-X_2)\sim\mathrm{MN}(0,\varrho(\rho),-(\beta_1,\beta_2),F)$,
so the equalities in (i) and (ii) follow directly from these properties.

Property~(iii) states that both rank correlations are strictly increasing in 
$\rho$, which aligns with the pattern observed for elliptical copulas and confirms
that, even in the presence of asymmetry, increasing $\rho$ strengthens concordance
between the two marginals. 
Property (iv) implies that, when $\rho=0$, the signs of $\tau$ and $\rho_S$ are
determined solely by the signs of the skewness parameters.
If $\beta_1$ and $\beta_2$ are both nonzero and have the same sign, the common
location-mixture term $W$ shifts the margins in the same direction, inducing positive
concordance and hence $\tau>0$ and $\rho_S>0$.
If they have opposite signs, $W$ shifts the margins in opposite directions, resulting
in negative concordance. Only when at least one skewness parameter vanishes do we
obtain $\tau=\rho_S=0$ at $\rho=0$. 
Thus, unlike in the elliptically symmetric case, zero pseudo-correlation does
\emph{not} in general imply zero rank correlation.
Moreover, Fig.~\ref{fig:GHt_tau_S} shows that with nonzero skewness, zero rank
correlation need not correspond to zero pseudo-correlation either.

Properties~(v) and (vi) are particularly revealing: unlike in the elliptically
symmetric case, the rank correlations for the normal location-scale mixture copula
$C_{\mathrm{mn}}(\rho,\beta,F)$ do \emph{not} automatically attain the upper bound $1$ or
lower bound $-1$ when $\rho=1$ or $-1$. Attainment of the upper or lower bound
now depends on the skewness parameters:
when $\rho=1$, we must have $\beta_1=\beta_2$ in order to obtain $\tau=\rho_S=1$;
when $\rho=-1$, we must have $\beta_1+\beta_2=0$ to obtain $\tau=\rho_S=-1$.%

\begin{remark} \label{remark:MN-attainment}
Since the only choice of $(\beta_1,\beta_2)$ that satisfies both conditions
$\beta_1=\beta_2$ and $\beta_1=-\beta_2$ in
Proposition~\ref{prop:MN-tau-S-properties}~(v) and (vi)
is $\beta_1=\beta_2=0$, the attainable range of rank correlations within this copula
class is always a proper subset of $[-1,1]$ whenever at least one of the
skewness parameters is nonzero.
The maximal and minimal values of the rank correlations depend on the skewness
parameters and the mixing distribution, and are given by
$\tau_{\mathrm{min}}(\rho,\beta,F) = \tau(-1,\beta,F)$ and
$\tau_{\mathrm{max}}(\rho,\beta,F) = \tau( 1,\beta,F)$, 
with analogous expressions for Spearman's rho.
\end{remark}

It is worth recalling that for two random variables with continuous cdf's, their 
Kendall's tau and Spearman's rho equal $1$ (resp. $-1$) if and only if their copula is
the comonotonicity copula, i.e., the Fr\'echet upper bound
(resp. countermonotonicity copula, i.e., the Fr\'echet lower bound);
see, e.g., Theorem 3 of \cite{EmbrechtsMcNeilStraumann2002}. 
The conditions in Proposition~\ref{prop:MN-tau-S-properties}~(v) and (vi)
for attaining these boundary cases can be understood directly from the
stochastic representation of the normal location-scale mixture distribution presented
at the beginning of Section~\ref{sec:lsmn}, as explained in the following remark.

\begin{remark} \label{remark:MN-attainment-SR}
  Let $(X_1,X_2)\sim\mathrm{MN}(0,\varrho(\rho), (\beta_1,\beta_2),F)$ 
  so that 
  $X_1 = X_2 + W(\beta_1-\beta_2) + \sqrt W (Z_1-Z_2)$
  with $W\sim F$ and $(Z_1,Z_2) \sim \mathrm N(0,\varrho(\rho))$.
  The copula $C_{\mathrm{mn}}(\rho,\beta,F)$, as the copula of $X_1$ and $X_2$,
  is comonotonic (resp. countermonotonic) if and only if $X_1$ is an almost surely
  increasing (resp. decreasing) transformation of $X_2$;
  see, e.g., Corollary 5.17 and Remark 5.20 of \cite{McNeilFreyEmbrechts2005}.
  When $\rho=1$, $Z_1=Z_2$ almost surely,
  and hence $X_1 = X_2 + W(\beta_1-\beta_2)$ almost surely.
  If $F$ and hence $W$ is nondegenerate, $X_1$ can be an increasing function of $X_2$ almost
  surely only when $\beta_1=\beta_2$.
  Likewise, when $\rho=-1$, $Z_1=-Z_2$ almost surely, and hence
  $X_1 = - X_2 + W(\beta_1+\beta_2)$ almost surely, and $X_1$ can be a decreasing
  function of $X_2$ almost surely only when $\beta_1+\beta_2=0$.
  Thus, the Fr\'echet bounds are attained only in these two special cases.
\end{remark}

\subsection{Kendall's tau and Spearman's rho for GH skew-\texorpdfstring{$t$}{} copula}
\label{sec:rank-corr-GH-t}

\begin{figure}[t!]\centering
  \includegraphics[width=0.95\textwidth]{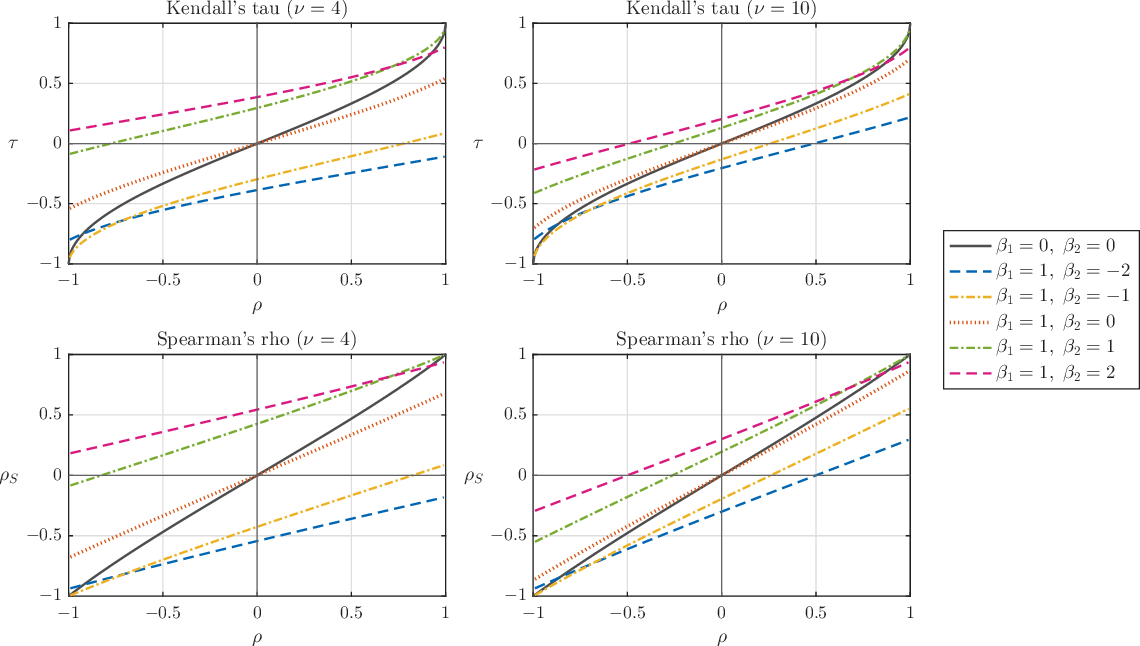}
  \caption{Kendall's tau and Spearman's rho for GH skew $t$ copula under various
  settings of skewness and degrees of freedom parameters.}
  \label{fig:GHt_tau_S}
\end{figure}

As noted earlier, the widely used GH skew-$t$ copula emerges as a special case of the
normal location-scale mixture copula when the mixing distribution $F$ is chosen as
$\mathrm{IG}(\nu/2,\nu/2)$. The formulas for $\tau$ and $\rho_S$ for the bivariate GH
skew-$t$ copula $C_{\mathrm{GHt}}(\rho,\beta,\nu)$ can then be directly obtained from
Theorem~\ref{thm:MN-tau-S}.
We evaluate both rank correlations for the GH skew-$t$ copula under various parameter
settings, illustrate the functional relationships between $\rho$ and each rank
correlation in Fig.~\ref{fig:GHt_tau_S}. For a comparison, the benchmark $t$ copula
case ($\beta_1=\beta_2=0$) is depicted as a dark solid line.

As illustrated in Fig.~\ref{fig:GHt_tau_S}, the asymmetry introduced by nonzero
skewness parameters $\beta_1$ and $\beta_2$ can significantly alter the behavior of
both $\tau$ and $\rho_S$ as functions of $\rho$.
Notably, while both rank correlations remain strictly increasing in $\rho$,
the full interval $[-1,1]$ is no longer attainable once asymmetry is introduced, 
as noted in Remark~\ref{remark:MN-attainment}.
Moreover, the property that rank correlations take the values $-1,0$ or $1$ whenever 
$\rho$ equals $-1,0$ or $1$, respectively, generally ceases to hold in the presence 
of asymmetry. Consistent with Proposition~\ref{prop:MN-tau-S-properties}~(iv),
when $\rho=0$ the sign of the rank correlations is determined by the sign of
$\beta_1\beta_2$.
Furthermore, in all four panels of Fig.~\ref{fig:GHt_tau_S}, except for the benchmark elliptical case
$\beta_1=\beta_2=0$, the rank correlations attain the upper bound $1$ only when
$\beta_1=\beta_2=1$, and they attain the lower bound $-1$ only when $\beta_1=1$ and
$\beta_2=-1$. 
This is consistent with Proposition~\ref{prop:MN-tau-S-properties}~(v) and (vi).

\subsection{Equi-skew and single-skew cases}\label{sec:mn-equi-single-skew}

Building on the general result of Theorem~\ref{thm:MN-tau-S}, we now examine two
special cases: the \emph{equi-skew} setting $(\beta_1=\beta_2)$ and the
\emph{single-skew} setting (one of $\beta_1$ and $\beta_2$ is zero). 

To appreciate the relevance of the equi-skew case, it is helpful to recall that 
elliptical copulas exhibit two key forms of symmetry: \emph{exchangeability} and
\emph{radial symmetry}. In the equi-skew setting where $\beta_1=\beta_2=b\in\RR$,
the copula $C_{\mathrm{nm}}(\rho,(b,b),F)$ preserves the exchangeability property 
of elliptical copulas, but it no longer retains radial symmetry.
Therefore, the equi-skew specification is useful in situations where radial asymmetry
is desirable while exchangeability remains important. In addition, imposing 
equi-skewness reduces the number of parameters in this class of copulas,
a feature that is especially valuable in high-dimensional applications, where parsimony
is crucial for both estimation and interpretability.
For example, both \cite{Christoffersen2012} and \cite{OhPatton2023} use the
GH skew-$t$ copula under equi-skewness or grouped equi-skewness assumptions in their 
high-dimensional applications.

\begin{figure}[t!]\centering
  \includegraphics[width=0.95\textwidth]{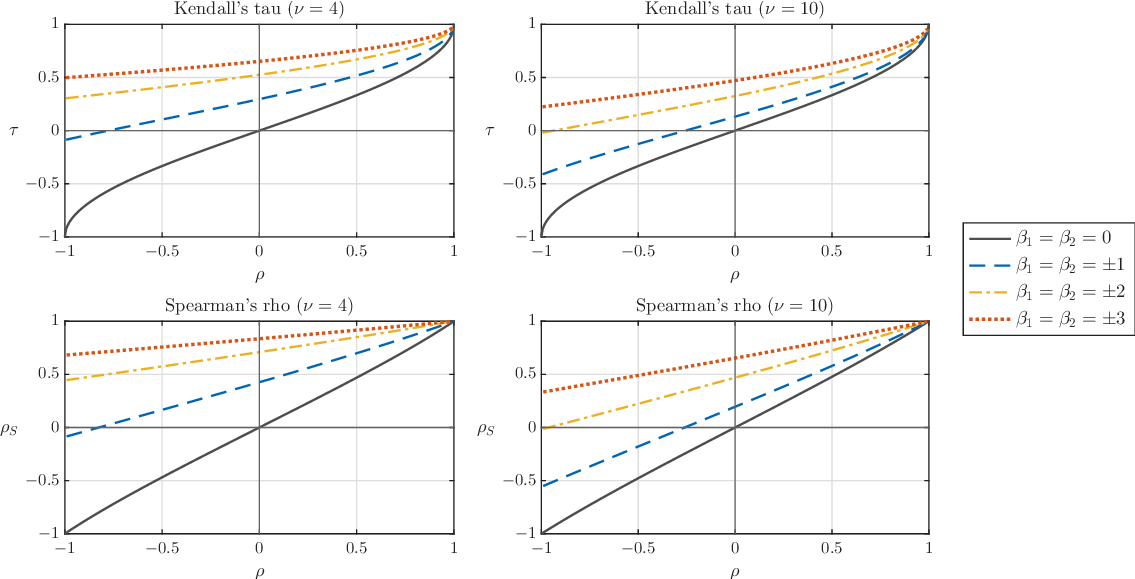}
  \caption{Kendall's tau and Spearman's rho for GH skew $t$-copula under
  equi-skewness.}
  \label{fig:GHt_tau_S_equi_skew}
\end{figure}

Corollary~\ref{cor:equi-skewness-MN} to Theorem~\ref{thm:MN-tau-S} presents
the formulas and properties of Kendall's tau and Spearman's rho for the copula
$C_{\mathrm{mn}}(\rho,(b,b),F)$ with equi-skewness. Since both rank correlations are
invariant under a sign change of $b$
(Proposition~\ref{prop:MN-tau-S-properties}~(ii)), the results in
Corollary~\ref{cor:equi-skewness-MN} focuses
on $b>0$.

\begin{corollary}\label{cor:equi-skewness-MN} (Equi-skewness)
  Suppose $X_1$ and $X_2$ are two random variables with continuous cdf's and copula
  $C_{\mathrm{mn}}(\rho,(b,b),F)$ with $b>0$. Then, the following holds.
  \smallskip
  
  \noindent \emph{(i)}
  Kendall's tau of $X_1$ and $X_2$ is given by 
  $\tau = 4\,\EE\,\Phi_2(bV,bV;\rho) - 1$,
  where $V$ is defined in Theorem~\ref{thm:MN-tau-S}~(i).
  If additionally $\EE\, V^2<\infty$, then
  $\partial\tau/\partial b > 0$ and $\partial^2\tau/\partial b\partial\rho < 0$.
  \smallskip
  
  \noindent \emph{(ii)}
  Spearman's rho of $X_1$ and $X_2$ is given by
  $\rho_S = 12\,\EE\,\Phi_2(bV_1,bV_2;\rho V_3) - 3$,
  where $V_1,V_2,V_3$ are defined in Theorem~\ref{thm:MN-tau-S}~(ii). 
\end{corollary}

Both rank correlations for GH skew-$t$ copulas under equi-skewness are presented in 
Fig.~\ref{fig:GHt_tau_S_equi_skew}. We observe that increasing the level of 
asymmetry rises both $\tau$ and $\rho_S$ for all values of $\rho$.
Moreover, the magnitude of this asymmetry-induced increase diminishes as $\rho$ grows.
In Corollary~\ref{cor:equi-skewness-MN}~(i), we establish that this monotonic pattern
holds in general within this class of copulas under mild assumptions on the mixing
distribution.
Fig.~\ref{fig:GHt_tau_S_equi_skew} also illustrates property (vi)
in Proposition~\ref{prop:MN-tau-S-properties}: under equi-skewness, both rank
correlations attain the upper bound $1$ when $\rho=1$.
In contrast, neither $\tau$ or $\rho_S$ approaches $-1$ as $\rho$ tends to $-1$,
which is fully consistent with property (v) of Proposition~\ref{prop:MN-tau-S-properties}.

Next, we consider the single-skew setting, where one of the skewness parameters is
zero. The following proposition summarizes the properties of Kendall's tau and
Spearman's rho under this condition. In light of
Proposition~\ref{prop:MN-tau-S-properties}~(i) and (ii), we consider $\beta_1=b>0$
and $\beta_2=0$ without loss of generality.  

\begin{proposition}\label{prop:single-skew-MN} (Single-skewness)
  Suppose $X_1$ and $X_2$ are two random variables with continuous cdf's and copula
  $C_{\mathrm{mn}}(\rho,(b,0),F)$ with $b>0$. Then, the following holds.
  \smallskip
  
  \noindent \emph{(i)}
  Kendall's tau and Spearman's rho of $X_1$ and $X_2$ are odd functions of $\rho$.
  \smallskip
  
  \noindent \emph{(ii)}
  Kendall's tau and Spearman's rho of $X_1$ and $X_2$ are strictly decreasing (or
  strictly increasing) in $b$ when $\rho>0$ (or when $\rho<0$).
\end{proposition}

As shown in Proposition~\ref{prop:single-skew-MN}~(i), in the single-skew case both
$\tau$ and $\rho_S$ are odd functions of $\rho$, a property that matches the
behavior of $\tau$ and $\rho_S$ for elliptical copulas. However, as illustrated
in Fig.~\ref{fig:GHt_tau_S_single_skew} for the GH skew-$t$ copula, the introduction of single skewness prevents
$\tau$ and $\rho_S$ from reaching the boundary values $\pm 1$ as $\rho$ tends to
$\pm1$. This is precisely in line with properties (v) and (vi) of
Proposition~\ref{prop:MN-tau-S-properties}. Finally,
Proposition~\ref{prop:single-skew-MN}~(ii) shows that both rank correlations strictly
decrease (increase) with the level of the single skewness when $\rho$ is positive
(negative), a pattern clearly visible in Fig.~\ref{fig:GHt_tau_S_single_skew}.

\begin{figure}[t!]\centering
  \includegraphics[width=0.95\textwidth]{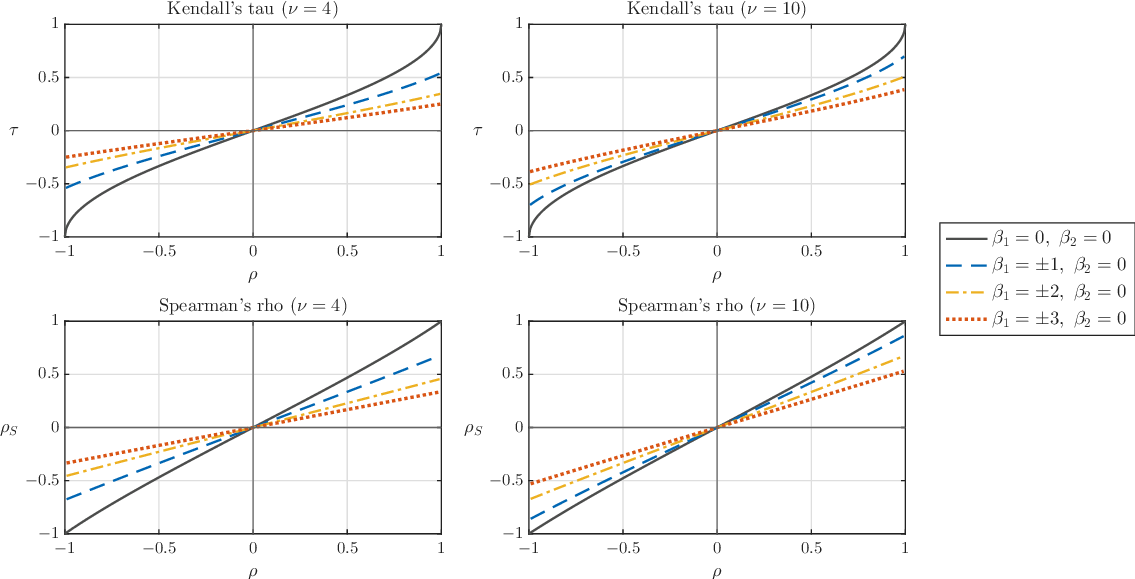}
  \caption{Kendall's tau and Spearman's rho for GH skew-$t$ copula under
  single-skewness.}
  \label{fig:GHt_tau_S_single_skew}
\end{figure}

\section{Kendall's tau and Spearman's rho for skew-normal scale mixture copulas}
\label{sec:rank-corr-MSN}

We now turn our attention to the skew-normal scale mixture copulas and establish the
formulas and properties of Kendall's tau and Spearman's rho for this class. We then
specialize the results to the skew-normal and AC skew-$t$ copulas. We also discuss
the equi-skew and single-skew settings in the final subsection.

\subsection{Formulas for Kendall's tau and Spearman's rho}

We first introduce the following $4\times 4$ and $5\times 5$ correlation-matrix-valued functions, which will play a key role in Theorem~\ref{thm:MSN-tau-S}. For $\rho\in[-1,1]$ and $u=(u_1,u_2)$ with $|u_i|<1$ for $i=1,2$, we define
\begin{equation}\label{eq:P-tau}
  \mathrm P_\tau(\rho,u,v)
  =\left[\begin{array}{cc|cc}
  1&&&\\\rho&1&&\\\cline{1-4}
  u_1v_1&u_2v_1&1&\\
  u_1v_2&u_2v_2&0&1
  \end{array}\right],
  \qquad v=(v_1,v_2),
\end{equation}
and
\begin{equation}\label{eq:P-S}
  \mathrm P_S(\rho,u,v)
  =
  \left[
  \begin{array}{cc|ccc}
  1&&&&\\
  \rho v_3&1&&&\\\cline{1-5}
  u_1v_1&0&1&&\\
  0&u_2v_2&0&1&\\
  u_1v^-_1&u_2v^-_2&0&0&1
  \end{array} 
  \right],
  \qquad v=(v_1,v_2,v_1^-,v_2^-,v_3).
\end{equation}
Here, the upper-triangular entries are omitted in the correlation matrix.
In the following theorem, $\Phi_d(0;\Sigma)$ denotes the cdf of the $d$-dimensional
normal distribution $\mathrm N(0,\Sigma)$ evaluated at zero.

\begin{theorem}\label{thm:MSN-tau-S} 
  Suppose $X_1$ and $X_2$ are two random variables with continuous cdf's and copula
  $C_{\mathrm{msn}}(\rho,\alpha,F)$ where $\rho\in[-1,1]$,
  $\alpha=(\alpha_1,\alpha_2)\in\RR^2$, and $F$ is a univariate distribution on
  $(0,\infty)$. Let $\mathrm P_\tau$ and $\mathrm P_S$ be the two
  correlation-matrix-valued functions defined in \eqref{eq:P-tau} and \eqref{eq:P-S},
  respectively, and let $\delta=(\delta_1,\delta_2)$ be the alternative skewness
  vector that corresponds to $\alpha$ via \eqref{eq:del-alp-i}.
  Then, we have the following.
  \smallskip
  
  \noindent \emph{(i)}
  Kendall's tau of $X_1$ and $X_2$ is given by
  \begin{equation} \label{eq:MSN-tau}
    \tau(\rho,\alpha,F)
    = 
    16\,\EE\,\Phi_4\big(0;\mathrm P_\tau(\rho,\delta,V)\big)-1,
  \end{equation}
  where $V=(V_1,V_2)$ in which $V_1=\sqrt{W_2/(W_1+W_2)}$ and
  $V_2=-\sqrt{W_1/(W_1+W_2)}$, with $W_1,W_2\sim$ i.i.d.\,$F$.
  \smallskip
  
  \noindent \emph{(ii)}
  Spearman's rho of $X_1$ and $X_2$ is given by 
  \begin{equation} \label{eq:MSN-rhoS}
    \rho_S(\rho,\alpha, F)
    =
    96\,\EE\,\Phi_5\big(0;\mathrm P_S(\rho,\delta,V)\big) - 3,
  \end{equation}
  where 
  $V = (V_1,V_2,V^-_1,V^-_2,V_3)$ in which 
  $V_i = \sqrt{W_i/(W_i+W_3)}$, 
  $V^-_i = -\sqrt{W_3/(W_i+W_3)}$, for $i\in\{1,2\}$, and 
  $V_3 = W_3/\sqrt{(W_1+W_3)(W_2+W_3)}$, with $W_1,W_2,W_3\sim$ i.i.d.\,$F$.
\end{theorem}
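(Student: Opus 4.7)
The plan is to exploit the canonical conditioning representation of the skew-normal: if $(Y_0, Y_1, Y_2)$ is a centered Gaussian triple with unit marginal variances, $\mathrm{Cov}(Y_0, Y_j) = \delta_j$ for $j=1,2$, and $\mathrm{Cov}(Y_1, Y_2) = \rho$, with $\delta=(\delta_1,\delta_2)$ the partner of $\alpha$ via \eqref{eq:del-alp-i}, then $(Y_1,Y_2) \mid Y_0 > 0 \sim \mathrm{SN}(0, \varrho(\rho), \alpha)$. Combined with an independent mixing factor $\sqrt{W}$, $W \sim F$, the bivariate skew-normal scale mixture satisfies $(X_1, X_2) \overset{d}{=} \sqrt{W}(Y_1, Y_2) \mid Y_0 > 0$. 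The proof then rewrites each rank-correlation expression as an orthant probability for a centered Gaussian vector, which by reflection through the origin equals $\Phi_d(0;\Sigma)$ for the appropriate correlation matrix $\Sigma$.

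For part (i), let $(W_k, Y_0^{(k)}, Y_1^{(k)}, Y_2^{(k)})_{k=1,2}$ be two independent copies. Starting from $\tau = 4\,\PP(X_1 > X_1^\star, X_2 > X_2^\star) - 1$ and removing the implicit conditioning events $\{Y_0^{(1)} > 0\},\{Y_0^{(2)} > 0\}$ via the prefactor $[\PP(Y_0^{(1)} > 0)\PP(Y_0^{(2)} > 0)]^{-1} = 4$ gives
\[
\tau = 16\,\PP\bigl(\sqrt{W_2}\,Y_1^{(2)} - \sqrt{W_1}\,Y_1^{(1)} > 0,\ \sqrt{W_2}\,Y_2^{(2)} - \sqrt{W_1}\,Y_2^{(1)} > 0,\ Y_0^{(1)} > 0,\ Y_0^{(2)} > 0\bigr) - 1.
\]
Conditional on $(W_1, W_2)$, the four-vector inside the probability is centered Gaussian. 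A direct covariance calculation, followed by standardization of the first two components by $\sqrt{W_1+W_2}$, shows its correlation matrix is exactly $\mathrm{P}_\tau(\rho,\delta,V)$ with $V=(V_1,V_2)$ as in Theorem~\ref{thm:MSN-tau-S}(i). The all-positive orthant probability then equals $\Phi_4(0;\mathrm{P}_\tau(\rho,\delta,V))$ by reflection symmetry, and taking the outer expectation over $(W_1, W_2)$ yields \eqref{eq:MSN-tau}.

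For part (ii), introduce a third independent copy $(W_3, Y_0^{(3)}, Y_1^{(3)}, Y_2^{(3)})$, with $X = X^{(3)} = \sqrt{W_3}(Y_1^{(3)}, Y_2^{(3)})$ playing the role of the ``main'' copy. The Spearman identity $\rho_S = 12\,\EE[F_1(X_1)F_2(X_2)] - 3$, combined with the symmetry $\EE[F_1(X_1)F_2(X_2)] = \EE[(1-F_1(X_1))(1-F_2(X_2))]$, lets us write $\rho_S = 12\,\PP(X_1^{(1)} \ge X_1^{(3)},\, X_2^{(2)} \ge X_2^{(3)}) - 3$. Removing the three conditioning events $\{Y_0^{(k)} > 0\}_{k=1}^{3}$ contributes a factor $2^3 = 8$, giving
\[
\rho_S = 96\,\PP\bigl(T_1 \ge 0,\ T_2 \ge 0,\ Y_0^{(1)}>0,\ Y_0^{(2)}>0,\ Y_0^{(3)}>0\bigr) - 3
\]
with $T_i := \sqrt{W_i}\,Y_i^{(i)} - \sqrt{W_3}\,Y_i^{(3)}$ for $i=1,2$. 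Conditional on $(W_1, W_2, W_3)$, standardizing $T_i$ by $\sqrt{W_i + W_3}$ yields a five-dimensional centered Gaussian whose correlation matrix is precisely $\mathrm{P}_S(\rho, \delta, V)$: the star-copy terms $\sqrt{W_i}Y_i^{(i)}$ produce $\mathrm{Corr}(T_i, Y_0^{(i)}) = \delta_i V_i$, the main-copy terms $-\sqrt{W_3}Y_i^{(3)}$ produce $\mathrm{Corr}(T_i, Y_0^{(3)}) = \delta_i V_i^-$, and the shared $Y_j^{(3)}$ variables give $\mathrm{Corr}(T_1, T_2) = \rho V_3$, while all remaining cross-correlations vanish by independence across copies. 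The all-positive orthant probability equals $\Phi_5(0;\mathrm{P}_S(\rho,\delta,V))$, and taking the expectation over $(W_1, W_2, W_3)$ delivers \eqref{eq:MSN-rhoS}.

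The main obstacle is sign and index bookkeeping: one must orient each difference so that the star-copy term enters positively and the main-copy term negatively, so that every resulting cross-correlation agrees in sign with the matrices \eqref{eq:P-tau} and \eqref{eq:P-S}. For part (ii) this is what forces the use of the ``$\ge$'' version of the Spearman probability, making the identity $\EE[F_1(X_1)F_2(X_2)] = \EE[(1-F_1(X_1))(1-F_2(X_2))]$ essential. Verifying that the conditioning construction indeed yields $\mathrm{SN}(0, \varrho(\rho), \alpha)$ under the $\alpha \leftrightarrow \delta$ correspondence is a standard preliminary; see, e.g., \citet[][Section~5.1.1]{Azzalini2014}.
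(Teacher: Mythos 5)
Your proposal is correct and follows essentially the same route as the paper: reduce Kendall's tau and Spearman's rho to concordance-type probabilities involving independent copies (the paper's Lemma~\ref{lemma:tau-rhoS-X}), invoke the hidden-truncation representation of the skew-normal to turn each probability, conditional on the mixing variables, into a Gaussian orthant probability, and check that the standardized covariance matrix is exactly $\mathrm P_\tau(\rho,\delta,V)$, respectively $\mathrm P_S(\rho,\delta,V)$. The only differences are organizational: the paper isolates the orthant-probability computation in two standalone lemmas (Lemmas~\ref{lemma:orthant-prob-tau} and \ref{lemma:orthant-prob-S}, stated for fixed scale constants and then applied conditionally on the $W$'s), whereas you inline it, and you reach the Spearman probability via $\EE[F_1(X_1)F_2(X_2)]$ and the $F\leftrightarrow 1-F$ identity rather than the paper's explicit construction of an independent-margins partner $(X_1^\circ,X_2^\circ)$.
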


The following remark highlights some properties of the distributions of the mixing
variables in Theorem~\ref{thm:MSN-tau-S}.

\begin{remark}\label{remark:MSN-mix-dist-1}
  The distribution of $(V_1,V_2)$ in Theorem~\ref{thm:MSN-tau-S}~(i) satisfies
  $(V_1,V_2)=_d -(V_2,V_1)$. Moreover, the distribution of
  $(V_1,V_2,V_1^-,V_2^-,V_3)$ in Theorem~\ref{thm:MSN-tau-S}~(ii) satisfies
  $(V_i,V_i^-)=_d -(V_i^-,V_i)$, for $i\in\{1,2\}$.
\end{remark}

Unlike Theorem~\ref{thm:MN-tau-S}, which expresses $\tau$ and $\rho_S$ for normal
location-scale mixture copulas using only the bivariate normal cdf, the formulas
\eqref{eq:MSN-tau} and \eqref{eq:MSN-rhoS} in Theorem~\ref{thm:MSN-tau-S} involves
four- and five-dimensional normal cdf's. To provide a clearer comparison,
Corollary~\ref{cor:MSN-tau-S-Phi2} offers alternative formulas for $\tau$ and
$\rho_S$ that rely on the bivariate normal cdf. In particular, we define
\begin{equation}\label{eq:alpha-dagger}
  \alpha_i^\dagger
  \coloneqq
  \frac{\delta_i}{\sqrt{1-\delta_i^2}} 
  =
  \frac{\alpha_i+\rho\alpha_{-i}}{\sqrt{1+(1-\rho^2)\alpha_{-i}^2}},
  \qquad i\in\{1,2\},
\end{equation}
and
\begin{equation}\label{eq:rho-dagger}
  \rho^\dagger
  \coloneqq 
  \frac{\rho-\delta_1\delta_2}{\sqrt{(1-\delta_1^2)(1-\delta_2^2)}},
\end{equation}
where $(\delta_1,\delta_2)$ are related to $\rho$ and $(\alpha_1,\alpha_2)$ via
\eqref{eq:del-alp-i}, and the subscript $-i$ in \eqref{eq:alpha-dagger} denotes the
index other than $i$, for $i\in\{1,2\}$.
As noted in Remark~\ref{remark:delta-prameterization},
$\alpha_1^\dagger$ and $\alpha_2^\dagger$ defined in \eqref{eq:alpha-dagger} are the
marginal $\alpha$-skewness parameters of the bivariate skew-normal distribution
$\mathrm{SN}(0,\varrho(\rho),\alpha)$. 

\begin{corollary} \label{cor:MSN-tau-S-Phi2}
  Suppose $X_1$ and $X_2$ are two random variables with continuous cdf's 
  and copula $C_{\mathrm{msn}}(\rho,\alpha,F)$ where
  $\rho\in[-1,1]$, $\alpha=(\alpha_1,\alpha_2)\in\RR^2$, and $F$ is a univariate
  distribution on $(0,\infty)$. Let $\alpha^\dagger_1,\alpha^\dagger_2$ and
  $\rho^\dagger$ be defined by \eqref{eq:alpha-dagger} and \eqref{eq:rho-dagger},
  respectively. Then, we have the following.
  \smallskip
  
  \noindent \emph{(i)}
  Kendall's tau of $X_1$ and $X_2$ in \eqref{eq:MSN-tau} can be alternatively
  expressed as
  \begin{equation}\label{eq:MSN-tau-Phi2}
  \tau(\rho,\alpha,F) = 4\,\EE\,\Phi_2\big(
  \alpha_1^\dagger Z,\, \alpha_2^\dagger Z \,;\, \rho^\dagger
  \big) - 1,
  \end{equation}
  where $Z\coloneqq V_1Y_1+V_2Y_2$, $V=(V_1,V_2)$ is as defined in Theorem~\ref{thm:MSN-tau-S}~(i), and $Y_1,Y_2$ are independent half-normal random variables conditional on $V$.
  \smallskip
  
  \noindent \emph{(ii)}
  Spearman's rho of $X_1$ and $X_2$ in \eqref{eq:MSN-rhoS} can be alternatively
  expressed as
  \begin{equation}\label{eq:MSN-S-Phi2}
  \rho_S(\rho,\alpha,F) = 12\,\EE\,\Phi_2\big(
  \alpha_1^\dagger Z_1,\, \alpha_2^\dagger Z_2;\, \rho^\dagger V_3
  \big) - 3,
  \end{equation}
  where $Z_1\coloneqq V_1Y_1+V_1^-Y_3$, $Z_2\coloneqq V_2Y_2+V_2^-Y_3$,
  $V=(V_1,V_2,V_1^-,V_2^-,V_3)$ is as defined in Theorem~\ref{thm:MSN-tau-S}~(ii), and
  $Y_1,Y_2,Y_3$ are mutually independent half-normal random variables conditional on
  $V$.
\end{corollary}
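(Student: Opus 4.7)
The plan is to reduce the four- and five-dimensional normal cdf's at zero appearing in \eqref{eq:MSN-tau} and \eqref{eq:MSN-rhoS} to bivariate normal cdf's by conditioning on the auxiliary coordinates, exploiting the identity-matrix structure of the lower-right block of the correlation matrices $\mathrm P_\tau$ and $\mathrm P_S$.

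For part~(i), let $(Z_1, Z_2, Z_3, Z_4) \sim \mathrm N(0, \mathrm P_\tau(\rho, \delta, V))$ conditional on $V$. Since the lower-right $2 \times 2$ block of $\mathrm P_\tau$ is $I_2$, the pair $(Z_3, Z_4)$ consists of independent standard normals. The off-diagonal block has the rank-one form $\delta v^\top$ with $v = (V_1, V_2)^\top$, so by Gaussian conditioning the conditional mean of $(Z_1, Z_2)$ given $(Z_3, Z_4)$ is $(\delta_1 \zeta, \delta_2 \zeta)$ with $\zeta = V_1 Z_3 + V_2 Z_4$, and the conditional covariance is $\Sigma_{11} - (V_1^2 + V_2^2)\, \delta \delta^\top$. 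The key identity $V_1^2 + V_2^2 = 1$, immediate from the definitions in Theorem~\ref{thm:MSN-tau-S}~(i), eliminates all $V$-dependence, leaving diagonal entries $1-\delta_i^2$ and off-diagonal $\rho - \delta_1 \delta_2$, whose correlation coefficient is exactly $\rho^\dagger$ from \eqref{eq:rho-dagger}. Standardizing and using $\alpha_i^\dagger = \delta_i/\sqrt{1-\delta_i^2}$, the event $\{Z_1 \le 0, Z_2 \le 0\}$ has conditional probability $\Phi_2(-\alpha_1^\dagger \zeta, -\alpha_2^\dagger \zeta; \rho^\dagger)$. Integrating over $\{Z_3 \le 0, Z_4 \le 0\}$ via the sign-flip $Y_i = -Z_{i+2}$ converts the restriction into an expectation against two independent half-normals with weight $\PP(Z_3 \le 0, Z_4 \le 0) = 1/4$, which combines with the outer $16$ in \eqref{eq:MSN-tau} to yield the coefficient $4$ in \eqref{eq:MSN-tau-Phi2}.

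Part~(ii) follows the same recipe with one additional auxiliary coordinate. Since the lower-right block of $\mathrm P_S$ is $I_3$, the triple $(Z_3, Z_4, Z_5)$ is i.i.d.\ standard normal; the conditional mean of $(Z_1, Z_2)$ given them is $(\delta_1 (V_1 Z_3 + V_1^- Z_5), \delta_2 (V_2 Z_4 + V_2^- Z_5))$, and the conditional covariance simplifies, via the two identities
\[
V_i^2 + (V_i^-)^2 = 1 \quad (i = 1, 2), \qquad V_1^- V_2^- = V_3,
\]
both immediate from the definitions in Theorem~\ref{thm:MSN-tau-S}~(ii), to a matrix with diagonal $1 - \delta_i^2$ and off-diagonal $V_3(\rho - \delta_1 \delta_2)$; after standardizing, the conditional correlation becomes $\rho^\dagger V_3$. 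The sign-flip substitution $Y_i = -Z_{i+2}$ for $i = 1, 2, 3$ then produces the factor $\PP(Z_3 \le 0, Z_4 \le 0, Z_5 \le 0) = 1/8$, converting the $96$ in \eqref{eq:MSN-rhoS} into $12$.

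The main obstacle is verifying the three algebraic identities $V_1^2 + V_2^2 = 1$ (part~i) and $V_i^2 + (V_i^-)^2 = 1$, $V_1^- V_2^- = V_3$ (part~ii) that make the conditional covariance essentially free of the mixing variables (up to the single factor $V_3$ in part~ii). Beyond these, the computation is routine Gaussian conditioning; care is only needed for sign bookkeeping in the substitution $Y_i = -Z_{i+2}$, which must absorb the negatives coming from the conditional means into the bivariate normal arguments so that the final expression has the form $\Phi_2(\alpha_1^\dagger Z_1, \alpha_2^\dagger Z_2; \cdot)$ with $Z_1, Z_2$ linear combinations of the $V_i$'s and the half-normal $Y_i$'s as claimed.
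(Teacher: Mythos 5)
Your proposal is correct and follows essentially the same route as the paper's proof: condition on the auxiliary Gaussian coordinates whose lower-right block of $\mathrm P_\tau$ (resp.\ $\mathrm P_S$) is the identity, use the identities $V_1^2+V_2^2=1$ (resp.\ $V_i^2+(V_i^-)^2=1$ and $V_1^-V_2^-=V_3$) to collapse the conditional covariance to $\varrho(\rho^\dagger)$ (resp.\ correlation $\rho^\dagger V_3$), and absorb the orthant restriction into half-normal variables with the factors $1/4$ and $1/8$ converting $16$ to $4$ and $96$ to $12$. No gaps; the sign-flip bookkeeping you flag is exactly the step the paper handles via $(Y_1,Y_2)=-(X_3,X_4)$.
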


In Corollary~\ref{cor:MSN-tau-S-Phi2}, the dimension of the normal cdf in the rank
correlation formulas is reduced from four or five (as in Theorem~\ref{thm:MSN-tau-S})
to only two, at a cost of a more intricate construction of mixing variables.
Importantly, the alternative expressions of $\tau$ and $\rho_S$ in
Corollary~\ref{cor:MSN-tau-S-Phi2} closely parallel those in
Theorem~\ref{thm:MN-tau-S} for the other copula class. In the
following remark, we clarify the similarities and differences between the rank
correlation formulas in Theorem~\ref{thm:MN-tau-S} and
Corollary~\ref{cor:MSN-tau-S-Phi2}, and relate them to the common and unique
properties of the rank correlations across different models, as
further discussed in Section~\ref{sec:msn-properties}.

\begin{remark}\label{remark:MSN-mix-dist-2}
(i) Although the distributions of the mixing variables $Z$ and $(Z_1,Z_2)$ in
Corollary~\ref{cor:MSN-tau-S-Phi2} are different from those of $V$ and
$(V_1,V_2)$ in Theorem~\ref{thm:MN-tau-S}, their marginal distributions are all
symmetric about zero. As a result, Kendall's tau and Spearman's rho for both classes
of copulas are invariant under the sign change of the skewness
vector, as shown in Proposition~\ref{prop:MN-tau-S-properties}~(ii) and will be shown
in Proposition~\ref{prop:MSN-tau-S-properties}~(ii).\smallskip

\noindent (ii)
A subtle but important difference in the rank correlation formulas in
Corollary~\ref{cor:MSN-tau-S-Phi2}, compared to those in Theorem~\ref{thm:MN-tau-S},
lies in the nature of the coefficients.
In Corollary~\ref{cor:MSN-tau-S-Phi2}, the
coefficients $\alpha_1^\dagger,\alpha_2^\dagger$ and $\rho^\dagger$ that multiply the
mixing variables are transformations of the skewness parameters $\alpha$
and correlation parameter $\rho$, rather than these parameters themselves. This
distinction is crucial to understanding why, unlike the case for normal
location-scale mixture copulas, the entire interval $[-1,1]$ is attainable for $\tau$
and $\rho_S$ in the class of skew-normal scale mixture copulas
(Proposition~\ref{prop:MSN-tau-S-properties}~(iii)).
\end{remark}
 
\subsection{Properties of Kendall's tau and Spearman's rho} \label{sec:msn-properties}

Using the representations of $\tau$ and $\rho_S$ in
Corollary~\ref{cor:MSN-tau-S-Phi2}, we establish the following properties for
bivariate skew-normal scale mixture copulas.

\begin{proposition} \label{prop:MSN-tau-S-properties}
  Kendall's tau and Spearman's rho for the bivariate skew-normal scale mixture copula
  $C_{\mathrm{msn}}(\rho,\alpha,F)$ possess the following properties:
  \smallskip
  
  \noindent \emph{(i)} 
  $\tau(\rho,(\alpha_1,\alpha_2),F)=\tau(\rho,(\alpha_2,\alpha_1),F)$  
  and  
  $\rho_S(\rho,(\alpha_1,\alpha_2),F)=\rho_S(\rho,(\alpha_2,\alpha_1),F)$.
  \smallskip
  
  \noindent \emph{(ii)} 
  $\tau(\rho,\alpha,F)=\tau(\rho,-\alpha,F)$
  and
  $\rho_S(\rho,\alpha,F)=\rho_S(\rho,-\alpha,F)$.
  \smallskip 
  
  \noindent \emph{(iii)}
  $\tau=\rho_S = -1$ if and only if $\rho=-1$.
  \smallskip 
  
  \noindent \emph{(iv)}
  $\tau=\rho_S = 1$ if and only if $\rho=1$.
\end{proposition}

Properties (i) and (ii) in Proposition~\ref{prop:MSN-tau-S-properties} parallel those
for the normal location-scale mixture copulas stated in 
Proposition~\ref{prop:MN-tau-S-properties}~(i) and (ii).
Interestingly, Proposition~\ref{prop:MSN-tau-S-properties}~(iii)--(iv) show that,
despite the presence of asymmetry, both rank correlations for this class of copulas
still attain their lower and upper bound ($-1$ or $1$) when $\rho=-1$ or $1$,
exactly as in the elliptical case. Moreover, since $\tau$ and $\rho_S$ are continuous
functions of $\rho$, the entire range $[-1,1]$ is attainable.
These findings stand in sharp contrast to the behavior of the normal location-scale
mixture copulas, for which the full range is not attainable; see 
Proposition~\ref{prop:MN-tau-S-properties}~(v)--(vi) and
Remark~\ref{remark:MN-attainment}.

Using the stochastic representations of the skew-normal distribution and its scale
mixtures introduced in Section~\ref{sec:msn}, it is not difficult to see why the 
skew-normal scale mixture copulas always attain the Fr\'echet lower
or upper bounds when the pseudo-correlation $\rho$ reaches its end values,
mirroring the behavior of the elliptical copula. The argument is given in the
following remark, which may be compared with Remark~\ref{remark:MN-attainment-SR}.

\begin{remark}\label{remark:MSN-attain-SR}
  Let $(X_1,X_2)\sim\mathrm{MSN}(0,\varrho(\rho), \alpha, F)$ 
  so that we have $X_1 = X_2 + \sqrt W (Z_1-Z_2)$
  with $W\sim F$ and $(Z_1,Z_2)\sim \mathrm{SN}(0,\varrho(\rho),\alpha)$.
  By the stochastic representation of skew-normal distributions in 
  \eqref{eq:SN-conditioning-0}--\eqref{eq:SN-conditioning}, we may write
  $(Z_1,Z_2) = ((Y_1,Y_2) \given Y_0 > 0)$, where 
  \[
  \begin{bmatrix} Y_1\\Y_2\\Y_0 \end{bmatrix}
  \sim
  \mathrm N_3(0, \mathrm R),
  \qquad 
  \mathrm R = 
  \begin{bmatrix} \mathrm \varrho(\rho) & \delta \\ \delta^\top & 1  \end{bmatrix},
  \qquad
  \delta 
  =
  \frac{\varrho(\rho)\,\alpha}{\sqrt{1+\alpha^\top \varrho(\rho) \alpha}}.
  \]
  When $\rho=1$, we have $Y_1=Y_2$ almost surely, and hence $Z_1=Z_2$ almost surely
  by construction. Consequently, $X_1=X_2$ almost surely as well, implying that the 
  copula $C_{\mathrm{msn}}(-1,\alpha,F)$ is comonotonic, i.e., equal to the Fr\'echet 
  upper bound, regardless of the choices of $\alpha$ or $F$.
  Similarly, when $\rho=-1$, we have $X_1=-X_2$ almost surely, implying that the copula 
  $C_{\mathrm{msn}}(-1,\alpha,F)$ is countermonotonic, i.e., equal to the Fr\'echet 
  lower bound, again independently of $\alpha$ or $F$.
\end{remark}

Although Proposition~\ref{prop:MN-tau-S-properties}~(iii) shows that, for normal
location-scale mixture copulas, both rank correlations increase strictly with $\rho$
when $\beta$ and $F$ are fixed, establishing an analogous result for
skew-normal scale mixture copulas appears theoretically challenging.
Lemmas 1 and 3 of \cite{HeinenValdesogo2022} assert that
$\partial\tau/\partial\rho>0$ and $\partial\rho_S/\partial\rho>0$ for bivariate
skew-normal copulas; however, their partial derivatives are meaningful only if
$\delta_1$ and $\delta_2$ are held fixed as $\rho$ varies.
Since the admissible values of $(\delta_1,\delta_2)$ themselves depend on $\rho$ (see
Remark~\ref{remark:delta-prameterization-copula}), this condition cannot generally be
satisfied.

Nevertheless, Fig.~\ref{fig:ACt_tau_S} suggests that, for AC skew-$t$ copulas,
both Kendall’s tau and Spearman’s rho increase strictly with $\rho$
for any fixed skewness vector $\alpha$.
Moreover, under equi-skewness and single-skewness assumptions, we prove that Kendall’s 
tau is strictly increasing in $\rho$ for skew-normal scale mixture
copulas (Propositions~\ref{prop:equi-skewness-MSN} (i) and
\ref{prop:single-skew-MSN} (ii)).

\subsection{Kendall's tau and Spearman's rho for skew-normal and AC
skew-\texorpdfstring{$t$}{} copulas} 
\label{sec:rank-corr-AC-t}

\begin{figure}[t!]\centering
  \includegraphics[width=0.95\textwidth]{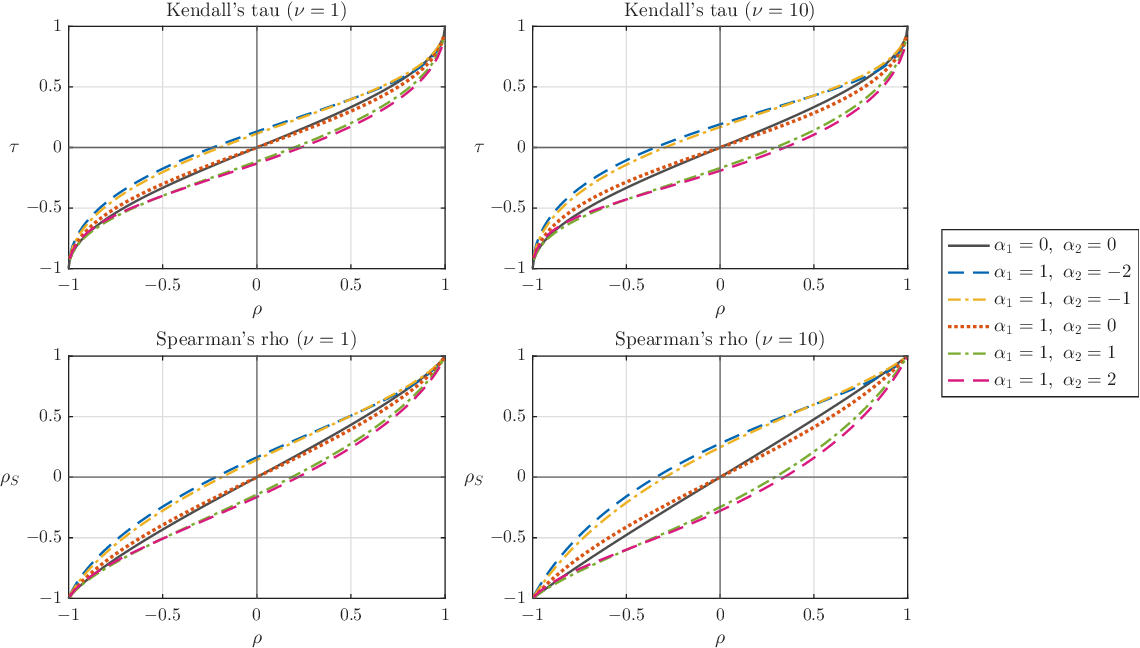}
  \caption{Kendall's tau and Spearman's rho of AC skew $t$ copula under various
  settings of the skewness and degrees of freedom parameters.}
  \label{fig:ACt_tau_S}
\end{figure}

As discussed in Section~\ref{sec:msn}, two important special cases of the skew-normal
scale mixture copula arise when the mixing distribution is either
$\mathrm{IG}(\nu/2,\nu/2)$ or degenerated at $1$. Specifying $F$ as
$\mathrm{IG}(\nu/2,\nu/2)$ in Theorem~\ref{thm:MSN-tau-S} or
Corollary~\ref{cor:MSN-tau-S-Phi2}, we can obtain the formulas for $\tau$ and
$\rho_S$ for the bivariate AC skew-$t$ copula.

On the other hand, substituting the degenerate value $1$ for all the mixing variables
$W$'s in Theorem~\ref{thm:MSN-tau-S} and Corollary~\ref{cor:MSN-tau-S-Phi2}
yields Corollary~\ref{cor:SN-tau-S} in the
supplementary material, which provides formulas for $\tau$ and $\rho_S$ for the
bivariate skew-normal copula. 

\begin{corollary}\label{cor:SN-tau-S} 
  For the bivariate skew-normal copula $C_{\mathrm{sn}}(\rho,\alpha)$, we have the
  following alternative formulas for Kendall's tau and Spearman's rho.
  \smallskip
  
  \noindent\emph{(i)} 
  $\tau=16\,\Phi_4\big(0;\mathrm P_{\tau}(\rho,\delta,(c,-c))\big)-1$ and
  $\rho_S=96\,\Phi_5\big(0;\mathrm P_S(\rho,\delta,(c,c,-c,-c,c^2))\big)-3$,
  where $c=1/\sqrt 2$, and $\delta$ is the alternative skewness vector that
  corresponds to $\alpha$ via \eqref{eq:del-alp-i}.
  \smallskip
  
  \noindent\emph{(ii)} 
  $\tau = 4\,\EE\,\Phi_2\big(\alpha_1^\dagger Z, \alpha_2^\dagger Z;\rho^\dagger\big)-1$ and 
  $\rho_S = 12\,\EE\,\Phi_2\big(\alpha_1^\dagger Z_1, \alpha_2^\dagger Z_2;\rho^\dagger/2\big)-3$,
  where $\alpha^\dagger_1,\alpha^\dagger_2$ and $\rho^\dagger$ defined in \eqref{eq:alpha-dagger} and \eqref{eq:rho-dagger}, $Z=(Y_1-Y_2)/\sqrt 2$, $Z_1=(Y_1-Y_3)/\sqrt 2$, and $Z_2=(Y_2-Y_3)/\sqrt 2$, with $(Y_1,Y_2,Y_3)$ being mutually independent half-normal random variables.
\end{corollary}

The expressions for $\tau$ and $\rho_S$ in part (i) of Corollary~\ref{cor:SN-tau-S}, written in
terms of four- and five-dimensional normal orthant probabilities, are consistent with 
Propositions 1 and 3 of \cite{HeinenValdesogo2022}, who study rank correlations for
bivariate skew-normal distributions. By contrast, the formulas in part (ii) of the
corollary, which involve only bivariate normal cdf's, do not appear in their work.

Fig.~\ref{fig:ACt_tau_S} plots $\tau$ and $\rho_S$ as functions of $\rho$
for the bivariate AC skew-$t$ copula with $\nu=1$ and $\nu=10$. 
It is worth noting that the skew-normal copula arises as the limit $\nu\to\infty$,
and its rank-correlation curves are visually close to those for $\nu=10$.
In contrast to the GH skew-$t$ case in Fig.~\ref{fig:GHt_tau_S}, the
most notable difference in Fig.~\ref{fig:ACt_tau_S} is that both ends of the curves
are tied at $(-1,-1)$ and $(1,1)$, as indicated by
Proposition~\ref{prop:MSN-tau-S-properties}~(iii)--(iv). Moreover, both rank correlations
increase strictly with $\rho$ and, unlike for GH skew-$t$ copulas, always span the
full interval $[-1,1]$. Finally, holding one skewness parameter (here $\alpha_1$),
increasing the other (here $\alpha_2$) \emph{decreases} both rank correlations, which is 
the opposite behavior to that observed for GH skew-$t$ copulas.

\subsection{Equi-skew and single-skew cases} \label{sec:msn-equi-single-skew}

As in Section~\ref{sec:mn-equi-single-skew}, we also consider the equi-skew and 
single-skew settings. First, if $\alpha_1=\alpha_2=a$ for some $a\in\RR$,
it follows from \eqref{eq:del-alp-i} that 
\begin{equation}\label{eq:delta-circ}
\delta_1=\delta_2
=\bar\delta\coloneqq\frac{a(1+\rho)}{\sqrt{1+2a^2(1+\rho)}}.
\end{equation}
Moreover, it follows from \eqref{eq:alpha-dagger} that
\begin{equation}\label{eq:alpha-circ}
\alpha^\dagger_1 = \alpha^\dagger_2 
= a^\dagger \coloneqq
\frac{a(1+\rho)}{\sqrt{1+a^2(1-\rho^2)}}.
\end{equation}
It is straightforward to verify that 
(i) $\mathrm{sgn}(\bar\delta)=\mathrm{sgn}(a^\dagger)=\mathrm{sgn}(a)$,
(ii) $\partial\bar\delta/\partial a>0$ and $\partial a^\dagger/\partial a>0$, and
(iii) $\mathrm{sgn}(\partial\bar\delta/\partial\rho)=\mathrm{sgn}(\partial
a^\dagger/\partial \rho)=\mathrm{sgn}(a)$, for $a\in\RR$ and $\rho\in(-1,1)$.
Furthermore, when $\alpha_1=\alpha_2=a$, it follows from \eqref{eq:rho-dagger} and
\eqref{eq:delta-circ} that
\begin{equation}\label{eq:rho-dagger-circ}
  \rho^\dagger\coloneqq \frac{1+\rho}{1+a^2(1-\rho^2)}-1.
\end{equation}

\begin{figure}[t!] \centering
  \includegraphics[width=0.95\textwidth]{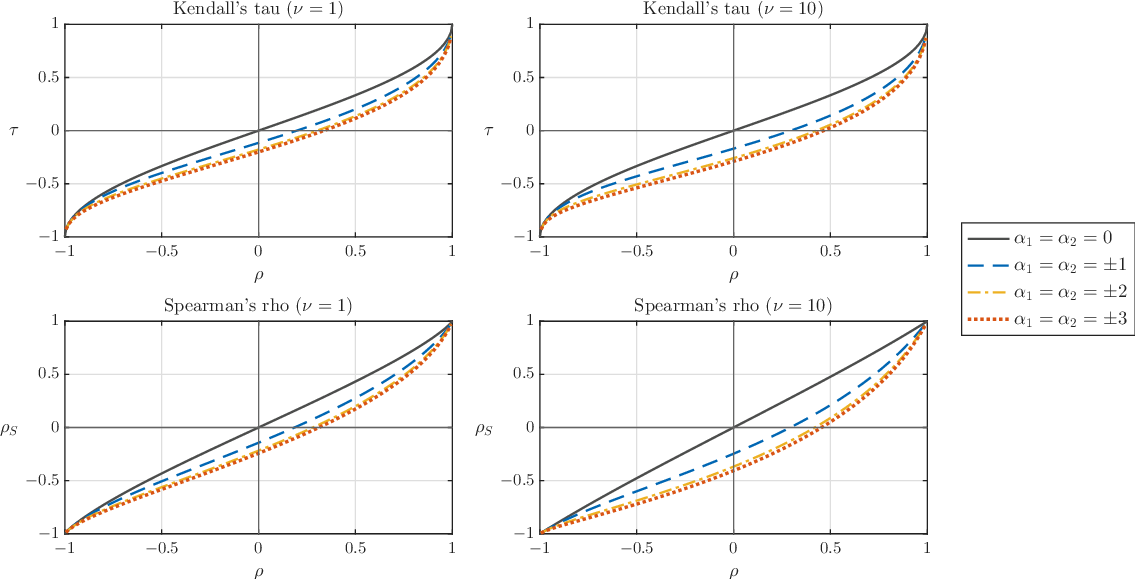}
  \caption{Kendall's tau and Spearman's rho of AC skew $t$-copula under
  equi-skewness.}
  \label{fig:ACt_tau_S_equi_skew}
\end{figure}

Since both rank correlations are invariant under a sign change of the skewness vector
(Proposition~\ref{prop:MSN-tau-S-properties}~(ii)), we focus on $a>0$ in the
following proposition.

\begin{proposition}\label{prop:equi-skewness-MSN} (Equi-skewness)
  Suppose $X_1$ and $X_2$ are two random variables with continuous cdf's
  and copula $C_{\mathrm{msn}}(\rho,(a,a),F)$ with $a>0$.
  Let $a^\dagger$ and $\rho^\dagger$ be defined by \eqref{eq:alpha-circ} and
  \eqref{eq:rho-dagger-circ}. Then, we have the following.
  \smallskip
  
  \noindent \emph{(i)}
  Kendall's tau of $X_1$ and $X_2$ is given by $\tau=4\,\EE\,\Phi_2(a^\dagger Z,
  a^\dagger Z;\rho^\dagger)-1$, where $Z$ is defined in
  Corollary~\ref{cor:MSN-tau-S-Phi2}~(i). It is strictly increasing in $\rho$ and
  strictly decreasing in $a$.
  \smallskip
  
  \noindent \emph{(ii)}
  Spearman's rho of $X_1$ and $X_2$ is given by $\rho_S=12\,\EE\,\Phi_2(a^\dagger
  Z_1, a^\dagger Z_2;\rho^\dagger V_3) - 3$, where $(Z_1,Z_2)$ and $V_3$ are defined,
  respectively, in Corollary~\ref{cor:MSN-tau-S-Phi2}~(ii) and
  Theorem~\ref{thm:MSN-tau-S}~(ii).
\end{proposition}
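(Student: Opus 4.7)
The plan is to establish the rank-correlation formulas by direct substitution and then address the two monotonicity assertions in (i) by chain-rule arguments of increasing technical subtlety.

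The formulas in (i) and (ii) follow immediately from Corollary~\ref{cor:MSN-tau-S-Phi2}. Setting $\alpha_1=\alpha_2=a$ in \eqref{eq:alpha-dagger} yields $\alpha_1^\dagger=\alpha_2^\dagger=a^\dagger$ as in \eqref{eq:alpha-circ}, while \eqref{eq:rho-dagger} reduces to \eqref{eq:rho-dagger-circ}. Plugging these into \eqref{eq:MSN-tau-Phi2} and \eqref{eq:MSN-S-Phi2} gives the stated expressions for $\tau$ and $\rho_S$, respectively.

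For strict monotonicity of $\tau$ in $\rho$, introduce the auxiliary function $\tilde\tau(t,r):=4\,\EE\,\Phi_2(tZ,tZ;r)-1$, so that $\tau(\rho,a)=\tilde\tau(a^\dagger,\rho^\dagger)$. From \eqref{eq:dPhi2-rho}, $\partial\tilde\tau/\partial r=4\,\EE\,\phi_2(tZ,tZ;r)>0$. For $\partial\tilde\tau/\partial t$, equation \eqref{eq:dPhi2-x} gives $\partial\tilde\tau/\partial t=8\,\EE[\phi(tZ)\Phi(\lambda(r)tZ)Z]$ with $\lambda(r):=\sqrt{(1-r)/(1+r)}$. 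The property $(V_1,V_2)\overset{d}{=}-(V_2,V_1)$ in Remark~\ref{remark:MSN-mix-dist-1}, together with the i.i.d.\ property of $Y_1,Y_2$, implies $Z\overset{d}{=}-Z$. Decomposing $\Phi(\lambda tZ)=\tfrac12+[\Phi(\lambda tZ)-\tfrac12]$, the first summand integrates to zero against $\phi(tZ)Z$ by oddness, while the integrand from the second summand, $\phi(tZ)[\Phi(\lambda tZ)-\tfrac12]Z$, is nonnegative and strictly positive on a set of positive measure when $t,\lambda>0$. Thus $\partial\tilde\tau/\partial t>0$. A direct computation yields $\partial a^\dagger/\partial\rho>0$ and $\partial\rho^\dagger/\partial\rho=[1+a^2(1+\rho)^2]/[1+a^2(1-\rho^2)]^2>0$ for $a>0$, so the chain rule delivers $\partial\tau/\partial\rho>0$.

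The strict monotonicity of $\tau$ in $a$ is the main obstacle, because the chain rule produces a positive contribution through $\partial a^\dagger/\partial a>0$ and a negative one through $\partial\rho^\dagger/\partial a=-2a(1+\rho)(1-\rho^2)/[1+a^2(1-\rho^2)]^2<0$, and the two effects must be shown to net out negatively. My proposed route is to use \eqref{eq:Phi2-rep-2} together with the observation $a(x,x)=0$ to obtain $\Phi_2(x,x;\rho)=\Phi(x)-2T(x,\sqrt{(1-\rho)/(1+\rho)})$, where $T$ is Owen's T-function. Combined with $\EE\,\Phi(a^\dagger Z)=\tfrac12$ (by symmetry of $Z$), this yields
\[
\tau(\rho,a)=1-8\,\EE\,T\bigl(a^\dagger Z,\,\lambda(\rho^\dagger)\bigr),
\]
reducing the claim to showing that $\EE\,T(a^\dagger Z,\lambda(\rho^\dagger))$ strictly increases in $a$ at fixed $\rho$. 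Using the integral representation $T(x,\lambda)=(2\pi)^{-1}\int_0^\lambda e^{-x^2(1+u^2)/2}/(1+u^2)\,du$ and differentiating in $a$, the boundary contribution from the changing upper limit $\lambda(\rho^\dagger)$ is unambiguously positive, while the interior contribution produces terms of the form $Z^2 e^{-(a^\dagger Z)^2(1+u^2)/2}$ whose sign must be resolved after integration over $Z$. The crux is to combine the two contributions via the algebraic identities $(a^\dagger)^2=\bar\delta^2/(1-\bar\delta^2)$ and $1+\rho^\dagger=(1+\rho)/[1+a^2(1-\rho^2)]$ (with $\bar\delta$ as in \eqref{eq:delta-circ}) into a single, manifestly positive integrand. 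A conceptual alternative is to work with the four-dimensional representation in Theorem~\ref{thm:MSN-tau-S}(i), where at fixed $\rho$ only $\bar\delta$ varies in $\mathrm P_\tau(\rho,(\bar\delta,\bar\delta),V)$; Plackett's identity then decomposes $\partial\Phi_4(0;\mathrm P)/\partial\bar\delta$ into four terms whose signs are controlled by $V_1>0$ and $V_2<0$, and one must argue that the $V_2$-contribution dominates in expectation.
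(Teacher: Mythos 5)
The formula derivations and the claim $\partial\tau/\partial\rho>0$ are handled correctly and essentially as in the paper: your direct argument that $\EE[\phi(tZ)\Phi(\lambda tZ)Z]>0$ via the decomposition $\Phi(\lambda tZ)=\tfrac12+[\Phi(\lambda tZ)-\tfrac12]$ is just an unpacked version of the paper's Lemma~\ref{lemma:positive}~(i), and your computations of $\partial a^\dagger/\partial\rho$ and $\partial\rho^\dagger/\partial\rho$ check out.

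The genuine gap is the claim that $\tau$ is strictly decreasing in $a$, which you correctly identify as the hard part but do not prove. Your Route A (the Owen $T$ representation $\tau=1-8\,\EE\,T(a^\dagger Z,\lambda(\rho^\dagger))$) terminates exactly at the decisive step: the boundary term from the moving upper limit $\lambda(\rho^\dagger)$ is positive while the interior term is pointwise negative (since $\partial a^\dagger/\partial a>0$), and you offer no mechanism for showing the sum has the right sign; "combine into a manifestly positive integrand" is a statement of the goal, not an argument. Your Route B is the paper's actual route, but your framing of it is off in a way that matters. The structural point you should exploit is that in the representation $\tau=16\,\EE\,\Phi_4(0;\mathrm P_\tau(\rho,(\bar\delta,\bar\delta),V))-1$ of Theorem~\ref{thm:MSN-tau-S}~(i), the parameter $\rho$ enters $\mathrm P_\tau$ only through its $(2,1)$ entry, so varying $a$ at fixed $\rho$ moves \emph{only} $\bar\delta$ (via $\partial\bar\delta/\partial a>0$); the competing $a^\dagger$-versus-$\rho^\dagger$ effects that plague the $\Phi_2$ representation simply do not arise here. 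Moreover, no argument that "the $V_2$-contribution dominates in expectation" is needed: the paper's Lemma~\ref{lemma:Phi4-derivative-delta} uses the exchange symmetry $(V_1,V_2)\overset{d}{=}-(V_2,V_1)$ to collapse the Plackett terms into the single expression
\[
\frac{\partial}{\partial\bar\delta}\,\Phi_4\big(0;\mathrm P_\tau(\rho,(\bar\delta,\bar\delta),V)\big)
\overset{d}{=}\frac{V_1\arcsin\tilde V(\bar\delta,\bar\delta)}{\pi^2\sqrt{1-V_1^2\bar\delta^2}},
\]
which is negative realization by realization when $\bar\delta>0$, because $V_1>0$ and $\tilde V(\bar\delta,\bar\delta)$ carries the sign of $V_2<0$. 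Without that lemma (or an equivalent computation of the Schur complements behind it), your proof of the monotonicity in $a$ is incomplete.
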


Fig.~\ref{fig:ACt_tau_S_equi_skew} illustrates Kendall's tau and Spearman's rho as
functions of $\rho$ for AC skew-$t$ copulas under the equi-skew setting. All the
functions are strictly increasing in $\rho$, and their ends meet at $(-1,-1)$ and
$(1,1)$. Unlike the GH skew-$t$ copula case shown in
Fig.~\ref{fig:GHt_tau_S_equi_skew}, increasing the level of asymmetry (i.e. $a$ or
$\bar\delta$ in the equi-skew case) in AC skew-$t$ copulas reduces both rank
correlations for any fixed $\rho$.

For the single-skew case, we recall that
Remark~\ref{remark:delta-prameterization-copula} noted that setting one of the skewness
parameters in $(\alpha_1,\alpha_2)$ to zero does not necessarily imply that one of
$(\delta_1,\delta_2)$ is zero, and vice versa. Here, we consider the single-skew case
as requiring one of $\alpha_1$ and $\alpha_2$ to be zero, since $\alpha_1$
and $\alpha_2$ are unconstrained. Without loss of generality, let
$\alpha_1=a$, for some $a\in\RR$, and $\alpha_2=0$. 
In this case, from \eqref{eq:del-alp-i} we obtain that $\delta_1 = \delta_\circ$ and
$\delta_2 = \delta_\circ\rho$, where $\delta_\circ\coloneqq a/\sqrt{1+a^2}$. Note
that $\mathrm{sgn}(\delta_\circ)=\mathrm{sgn}(a)$  and $\partial\delta_\circ/\partial
a>0$. Moreover, it follows from \eqref{eq:rho-dagger} and \eqref{eq:alpha-dagger}
that in this case, 
\begin{equation}\label{eq:ra-dagger-single}
  \rho^\dagger
  =
  \frac{\rho}{\sqrt{1+a^2(1-\rho^2)}},
  \qquad
  \alpha_1^\dagger =a,
  \qquad 
  \alpha_2^\dagger = \frac{a\rho}{\sqrt{1+a^2(1-\rho^2)}}=a\rho^\dagger.
\end{equation}

Again, in light of Proposition~\ref{prop:MSN-tau-S-properties}~(ii) we focus on the
$a>0$ case in the following proposition. The corresponding results for $a<0$ can be
deduced accordingly.

\begin{figure}[t!]\centering
  \includegraphics[width=0.95\textwidth]{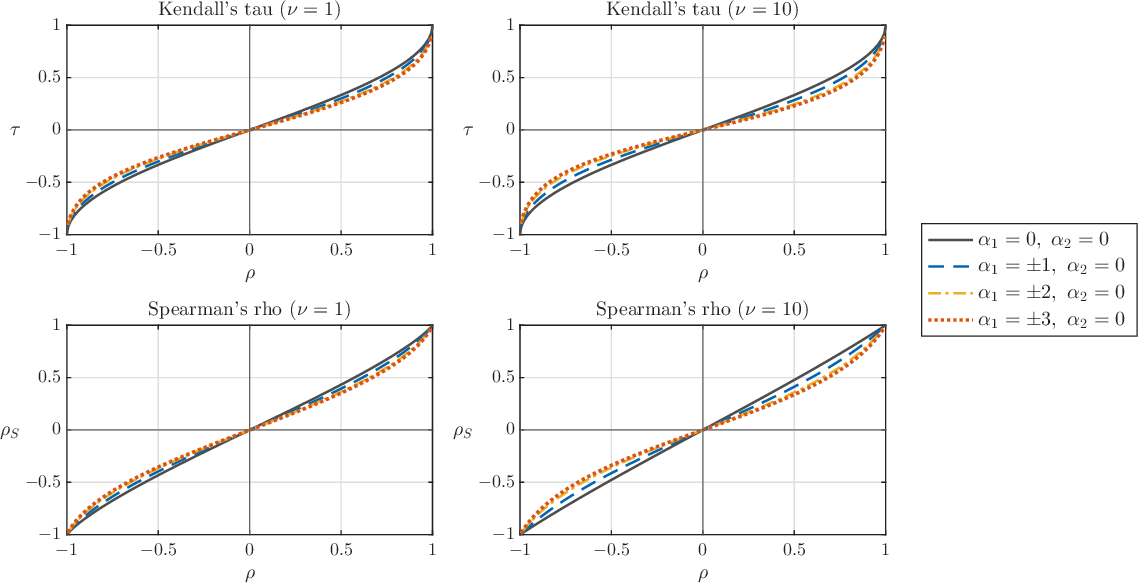}
  \caption{Kendall's tau and Spearman's rho of AC skew $t$-copula under
  single-skewness.}
  \label{fig:ACt_tau_S_single_skew}
\end{figure}

\begin{proposition}\label{prop:single-skew-MSN} (Single-skewness)
  Suppose $X_1$ and $X_2$ are two random variables with continuous cdf's
  and copula $C_{\mathrm{msn}}(\rho,(a,0),F)$ with $a>0$. Let $\rho^\dagger$ be defined
  by \eqref{eq:ra-dagger-single}. Then, we have the following.
  \smallskip
  
  \noindent \emph{(i)}
  Kendall's tau and Spearman's rho of $X_1$ and $X_2$ are odd functions of $\rho$.
  \smallskip
  
  \noindent \emph{(ii)}
  Kendall's tau of $X_1$ and $X_2$ is given by $\tau=4\,\EE\,\Phi_2(a Z,
  a\rho^\dagger Z;\rho^\dagger)-1$,   where $Z$ is defined in 
  Corollary~\ref{cor:MSN-tau-S-Phi2}~(i).
  It is strictly increasing in $\rho$, and strictly decreasing (or increasing) in $a$ when
  $\rho>0$ (or when $\rho<0$).
  \smallskip
  
  \noindent \emph{(iii)}
  Spearman's rho of $X_1$ and $X_2$ is given by $\rho_S=12\,\EE\,\Phi_2(aZ_1,
  a\rho^\dagger Z_2;\rho^\dagger V_3)-3$, where $(Z_1,Z_2)$ and $V_3$ are defined,
  respectively, in Corollary~\ref{cor:MSN-tau-S-Phi2}~(ii) and
  Theorem~\ref{thm:MSN-tau-S}~(ii).
\end{proposition}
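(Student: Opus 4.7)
The strategy is to derive everything from Corollary~\ref{cor:MSN-tau-S-Phi2} via substitution, to handle the symmetry and oddness in part (i) via structural properties of the mixing variables, and to establish monotonicity in part (ii) by differentiating under the expectation. Substituting $\alpha_1 = a$, $\alpha_2 = 0$ into \eqref{eq:del-alp-i} and \eqref{eq:alpha-dagger} gives the algebraic identities $\delta_1 = \delta_\circ$, $\delta_2 = \delta_\circ \rho$, $\alpha_1^\dagger = a$, $\alpha_2^\dagger = a\rho^\dagger$; inserting these into \eqref{eq:MSN-tau-Phi2} and \eqref{eq:MSN-S-Phi2} produces the formulas in parts (ii) and (iii).

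For part (i), setting $\rho = 0$ forces $\rho^\dagger = \alpha_2^\dagger = 0$, so $\Phi_2(aZ, 0; 0) = \tfrac12\Phi(aZ)$ and $\Phi_2(aZ_1, 0; 0) = \tfrac12\Phi(aZ_1)$, reducing the claim $\tau(0) = \rho_S(0) = 0$ to $\EE\Phi(aZ) = \EE\Phi(aZ_1) = 1/2$. I would establish $Z \overset{d}{=} -Z$ by observing that swapping the i.i.d.\ pair $(W_1, Y_1) \leftrightarrow (W_2, Y_2)$ sends $(V_1, V_2)$ to $(-V_2, -V_1)$ and thus $Z = V_1 Y_1 + V_2 Y_2$ to $-Z$; an analogous swap $(W_1, Y_1) \leftrightarrow (W_3, Y_3)$ handles $Z_1$. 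Combined with $\Phi(-x) = 1 - \Phi(x)$, this yields both expectations. Oddness in $\rho$ follows because $\rho \mapsto -\rho$ flips only $\rho^\dagger$ and $\alpha_2^\dagger = a\rho^\dagger$ while leaving $\alpha_1^\dagger$, $V_3$, and the laws of $Z, Z_1, Z_2$ unchanged; applying $\Phi_2(x_1, -x_2; -r) = \Phi(x_1) - \Phi_2(x_1, x_2; r)$ termwise together with the established expectations yields $\tau(-\rho) = -\tau(\rho)$ and analogously for $\rho_S$.

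For the monotonicity in (ii), differentiating $\tau = 4\EE\Phi_2(aZ, a\rho^\dagger Z; \rho^\dagger) - 1$ under the expectation and using \eqref{eq:dPhi2-x}--\eqref{eq:dPhi2-rho} together with $\partial\rho^\dagger/\partial\rho = (1+a^2)/(1+a^2(1-\rho^2))^{3/2} > 0$ reduces strict increase in $\rho$ to signing
\[
\EE\!\bigl[\phi(a\rho^\dagger Z)\,\Phi\!\bigl(aZ\sqrt{1-(\rho^\dagger)^2}\bigr)\,aZ + \phi_2(aZ, a\rho^\dagger Z; \rho^\dagger)\bigr].
\]
The $\phi_2$ term is pointwise positive, and pairing $Z \overset{d}{=} -Z$ converts the first summand into $\tfrac12\EE[\phi(a\rho^\dagger Z)(aZ)(2\Phi(aZ\sqrt{1-(\rho^\dagger)^2})-1)]$, whose integrand is nonnegative since $aZ$ and $2\Phi(\cdot)-1$ share sign. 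The main obstacle is the strict decrease in $a$ when $\rho > 0$: after the $Z \overset{d}{=} -Z$ pairing kills the contribution from $\alpha_1^\dagger = a$ (since the accompanying $\Phi$-factor evaluates to $\Phi(0) = 1/2$ and $\EE[Z\phi(aZ)] = 0$), the remaining two contributions involve $\partial(a\rho^\dagger)/\partial a = \rho/(1+a^2(1-\rho^2))^{3/2}$ and $\partial\rho^\dagger/\partial a = -\rho a(1-\rho^2)/(1+a^2(1-\rho^2))^{3/2}$, which for $\rho > 0$ carry opposite signs and must be quantitatively balanced. I would exploit the simplification $\phi_2(aZ, a\rho^\dagger Z; \rho^\dagger) = \phi(aZ)/\sqrt{2\pi(1-(\rho^\dagger)^2)}$ -- valid because $(aZ, \rho^\dagger \cdot aZ)$ lies on the regression line $x_2 = \rho^\dagger x_1$ -- combined with the pointwise bound $|2\Phi(y) - 1| \le |y|\sqrt{2/\pi}$ after pairing, reducing the question to a one-dimensional inequality between $(1-(\rho^\dagger)^2)\EE[(aZ)^2\phi(a\rho^\dagger Z)]$ and $(1-\rho^2)\EE[\phi(aZ)]$. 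The $\rho < 0$ case then follows from the oddness established above.
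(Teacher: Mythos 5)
Your part (i), the substitution formulas in (ii)--(iii), and the strict increase in $\rho$ all match the paper's proof in substance: the paper derives the same three facts from Corollary~\ref{cor:MSN-tau-S-Phi2}, and your symmetrization of the term $\EE\,Z\phi(a\rho^\dagger Z)\Phi(aZ\sqrt{1-\rho^{\dagger 2}})$ is exactly what the paper packages as Lemma~\ref{lemma:positive}~(i). The genuine gap is in the strict decrease in $a$, which you correctly identify as the obstacle. Your decomposition of $\partial\tau/\partial a$ into a vanishing term, a positive term carrying $\partial(a\rho^\dagger)/\partial a$, and a negative term carrying $\partial\rho^\dagger/\partial a$ is right, but the inequality you propose to close it with -- after the bound $|2\Phi(y)-1|\le|y|\sqrt{2/\pi}$ and the identity $\phi_2(x,\rho^\dagger x;\rho^\dagger)=\phi(x)/\sqrt{2\pi(1-\rho^{\dagger2})}$ it reads $(1-\rho^{\dagger2})\,\EE[Z^2\phi(a\rho^\dagger Z)]\le(1-\rho^2)\,\EE\,\phi(aZ)$ -- is false for some admissible $F$. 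Using $1-\rho^{\dagger2}=(1-\rho^2)(1+a^2)/(1+a^2(1-\rho^2))$, it is equivalent to $r\,\EE[Z^2\phi(a\rho^\dagger Z)]\le\EE\,\phi(aZ)$ with $r\coloneqq(1+a^2)/(1+a^2(1-\rho^2))=1+a^2\rho^{\dagger2}$. Now take $F$ so dispersed that $W_1/W_2$ is almost surely either very large or very small (e.g.\ a lognormal with huge log-variance); then $(V_1,V_2)$ is close to $(1,0)$ or $(0,-1)$ and $Z$ is approximately a signed half-normal, so $|Z|$ is approximately $|N(0,1)|$. In that limit $\EE[Z^2\phi(cZ)]=(1+c^2)^{-3/2}/\sqrt{2\pi}$ and $\EE\,\phi(aZ)=(1+a^2)^{-1/2}/\sqrt{2\pi}$, and the required inequality becomes $r\cdot r^{-3/2}\le(1+a^2)^{-1/2}$, i.e.\ $r\ge 1+a^2$, i.e.\ $a^2(1-\rho^2)\le 0$ -- violated strictly for every $a>0$ and $\rho\in(-1,1)\setminus\{0\}$, and by continuity still violated for admissible $F$ sufficiently close to this extreme. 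The linear bound on $2\Phi(\cdot)-1$ is simply too lossy to capture the cancellation, so the "one-dimensional inequality" you reduce to cannot be established.

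The paper sidesteps this entirely by not using the $\Phi_2$ representation for this step. It returns to $\tau=16\,\EE\,\Phi_4\big(0;\mathrm P_\tau(\rho,(\delta_\circ,\delta_\circ\rho),V)\big)-1$ with $\delta_\circ=a/\sqrt{1+a^2}$ and differentiates in $\delta_\circ$ via Lemma~\ref{lemma:Phi4-derivative-delta}. The structural point is that the $\partial/\partial\delta_1$ contribution vanishes identically, because $\tilde V(\delta_1,\delta_2)\propto\delta_2-\rho\delta_1=0$ on the single-skew locus $(\delta_1,\delta_2)=(\delta_\circ,\delta_\circ\rho)$; what survives is the single term $\rho V_1\arcsin\tilde V(\delta_\circ\rho,\delta_\circ)\big/\big(2\pi^2\sqrt{1-(V_1\delta_\circ\rho)^2}\big)$, whose sign is manifestly negative for $\rho,\delta_\circ>0$ since $\tilde V(\delta_\circ\rho,\delta_\circ)\propto V_2\delta_\circ(1-\rho^2)<0$. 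In other words, the delicate positive-versus-negative balance you are trying to control is an artifact of the $(\alpha^\dagger,\rho^\dagger)$ parameterization; in the $\delta$ parameterization the offending positive piece is identically zero. To repair your proof you would need either to switch to that representation or to find a genuinely sharper device than the linear bound on $2\Phi(\cdot)-1$.
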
 

As for normal location-scale mixture copulas (Proposition~\ref{prop:single-skew-MN}),
under single-skewness both rank correlations for skew-normal scale mixture copulas
are odd functions of $\rho$. This is illustrated in Fig.~\ref{fig:ACt_tau_S_single_skew}
which plots $\tau$ and $\rho_S$ against $\rho$ for AC skew-$t$ copulas.
The figure further shows that increasing the level of single-skewness reduces the magnitudes
of both rank correlations, mirroring the behavior observed for GH skew-$t$ copulas.

\section{Invertibility} \label{sec:invertibility}

As mentioned in the Introduction, the formulas for rank correlations
for elliptical copulas are particularly useful for rank-based estimation of copula 
parameters.
In particular, the relationship~\eqref{eq:tau} shows that the map from
$\rho$ to $\tau$ is strictly increasing and be inverted to yield 
$\rho=\sin(\pi\tau/2)$.
Consequently, the pseudo-correlation $\rho$ can be estimated directly from the
empirical Kendall's tau, which is robust to various characteristics of the marginal
distributions.
To estimate an additional parameter in non-Gaussian elliptical copulas using
rank-based methods, the information contained in $\rho_S$ can be further exploited.
For example, 
\cite{HeinenValdesogo2020} investigated rank-based estimation of the parameters
$(\rho,\nu)$ in a $t$-copula by inverting the map from $(\rho,\nu)$ to
$(\tau,\rho_S)$. 

For the two classes of asymmetric copulas analyzed in this paper, however, the
one-to-one relationship between $\rho$ and $\tau$ no longer holds. In fact, both
$\tau$ and $\rho_S$ depend additionally on the skewness and mixing-distribution
parameters.
For instance, in a bivariate GH skew-$t$ copula, $\tau$ and $\rho_S$ are functions of
four parameters, $(\rho, \beta_1,\beta_2,\nu)$; similarly, they are functions of
$(\rho, \alpha_1,\alpha_2,\nu)$ in a bivariate AC skew-$t$ copula. 
It is clearly impossible to invert the map from all these parameters to $(\tau,
\rho_S)$.
So, if one aims to estimate all parameters using a rank-based method,
additional copula moments---such as quantile/tail dependence functions---would be
required.

If, on the other hand, the additional parameters governing asymmetry and the mixing
distribution are known or can be determined from other sources of information, it may
be possible to recover $\rho$ from a rank correlation.
For example, if there is sufficient evidence to assume that the data indeed follow
some normal location-scale or skew-normal scale mixture distribution
(a stronger assumption than specifying only the copula), the
skewness and mixing-distribution parameters can often be interpreted in term of certain
moments (such as skewness or kurtosis) of the marginal distributions. In such cases, 
information from the marginals may be used first to estimate these parameters.

Indeed, Proposition~1 shows that $\tau$ and $\rho_S$ for normal location-scale
mixture copulas are \emph{strictly increasing} functions of $\rho$.
As discussed following Proposition 2, we were unable to formally prove the strict
monotonicity of the rank correlations in $\rho$ for skew-normal scale mixture
copulas, although simulation results for a variety of mixing distributions strongly
suggest that this property holds. In practice, however, the values of the skewness
and mixing-distribution parameters are typically unknown or hard to be precisely
estimated, and the fact that the attainable ranges of $\tau$ and $\rho_S$ are
generally strictly smaller than $[-1,1]$ further complicates rank-based estimation. 

Despite these challenges, it is still of interest to examine the invertibility of
the mapping from copula parameters to $(\tau,\rho_S)$ in certain special cases.
Specifically, we consider settings in which the mixing distribution is known and the
skewness parameters are either equal or one is zero.
In both cases, the copula is characterized by only two unknown parameters, $\rho$ and
a single skewness parameter. These cases are examined in the following two subsections.

\subsection{Invertibility under equi-skewness and known mixing distribution}

For a bivariate normal location-scale mixture copula with equal skewness parameters
($\beta_1=\beta_2=b$) and a \emph{known} mixing distribution $F$, 
the model involves two unknown parameters: the pseudo-correlation $\rho$ and the
common skewness $b$.
To assess whether the map from $(\rho, b)$ to $(\tau,\rho_S)$ can be inverted,
we begin with two preliminary observations.
First, by Proposition~\ref{prop:MN-tau-S-properties}~(ii),
the sign of $b$ is not identifiable from the rank correlations; hence, we
restrict attention to $b>0$. 
Second, Proposition~\ref{prop:MN-tau-S-properties}~(vi) shows that 
$\rho=1$ if and only if $\tau=\rho_S=1$, regardless of the value of $b$. 
Thus, $b$ is not identifiable when $\rho=1$.
Given these observations, the remaining question of interest is the identification of
$(\rho,b)$ up to a sign flip of $b$ for $\rho\in[-1,1)$, which amounts to examining
the invertibility of 
\begin{equation}\label{eq:R-mn-equi}
  R_{\mathrm{mn}}(\rho, b) 
  \coloneqq
  \begin{bmatrix} \tau(\rho,(b,b),F) \\ \rho_S(\rho,(b,b),F) \end{bmatrix},
  \qquad 
  \rho\in[-1,1), \ b>0,
\end{equation}
with the expressions for $\tau$ and $\rho_S$ given in Corollary~\ref{cor:equi-skewness-MN}.

Analogously, for a bivariate skew-normal scale mixture copula with equal skewness
parameters ($\alpha_1=\alpha_2=a$) and a known mixing distribution $F$, the model
also depends on two parameters, $\rho$ and $a$, with the sign of $a$ likewise not
identifiable from rank correlations
(Proposition~\ref{prop:MSN-tau-S-properties}~(ii)). 
Moreover, Proposition~\ref{prop:MSN-tau-S-properties}~(iii) and (iv) imply that
$\tau=\rho_S=\pm1$ whenever $\rho=\pm1$, so $a$ is not identifiable at
$\rho=\pm1$. Accordingly, we will study the identification of $(\rho,a)$ up to a sign
flip of $a$ for $\rho\in(-1,1)$ by analyzing the invertibility of 
\begin{equation}\label{eq:R-msn-equi}
  R_{\mathrm{msn}}(\rho, a) 
  \coloneqq
  \begin{bmatrix} \tau(\rho,(a,a),F) \\ \rho_S(\rho,(a,a),F) \end{bmatrix},
  \qquad 
  \rho\in(-1,1), \ a>0,
\end{equation}
with the corresponding expressions for $\tau$ and $\rho_S$ provided in 
Proposition~\ref{prop:equi-skewness-MSN}.

Note that the inverses functions of $R_{\mathrm{mn}}(\rho,b)$ and 
$R_{\mathrm{msn}}(\rho,a)$ exist if and only if their corresponding Jacobians are
nonsingular. The following proposition provides the explicit expressions of these
Jacobians.

\begin{proposition}\label{prop:J-equi} 
The Jacobian of the function $R_{\mathrm{mn}}$ defined in \eqref{eq:R-mn-equi} is
given by
\begin{equation*}\label{eq:J-mn-equi}
J_{\mathrm{mn}}(\rho, b) = 4
\begin{bmatrix}
  \EE\,\phi_2(bV,bV;\rho) &
  \EE\, V\phi^{\mathrm s}(bV;\alpha_\rho) \\
  3\,\EE\, V_3 \phi_2(bV_1,bV_2;\rho V_3) & 
  3\,\EE\, [g(V_1,V_2,V_3; \rho, b) + g(V_2,V_1,V_3; \rho, b)]
\end{bmatrix},
\quad \rho\in(-1,1),\ b>0,
\end{equation*}
where 
$\alpha_\rho \coloneqq \sqrt{(1-\rho)/(1+\rho)})$,
$g(x,y,z; \rho, b) \coloneqq x \phi(bx)
\Phi\big(b(y-\rho xz)/\sqrt{1-\rho^2z^2}\big)$, 
and $V,V_1,V_2,V_3$ are defined in Theorem~\ref{thm:MN-tau-S}.
The Jacobian of the function $R_{\mathrm{msn}}$ defined in \eqref{eq:R-msn-equi} is
given by
\[
  J_{\mathrm{msn}}(\rho,a) = 4
  \begin{bmatrix}
  \EE\,\phi_2(a^\dagger Z, a^\dagger Z;\rho^\dagger)
  & \EE\, Z \phi^{\mathrm s}(a^\dagger Z, \alpha_{\rho^\dagger})\\
  3\,\EE\, V_3\,\phi_2(a^\dagger Z_1, a^\dagger Z_2;\rho^\dagger V_3)
  & 3\,\EE\,[g(Z_1,Z_2,V_3;\rho^\dagger,\alpha^\dagger) 
           + g(Z_2,Z_1,V_3;\rho^\dagger,\alpha^\dagger) ]
  \end{bmatrix}
  B,
\]
for $\rho\in(-1,1)$ and $a>0$, where $a^\dagger$ and $\rho^\dagger$ are defined in
\eqref{eq:alpha-circ} and \eqref{eq:rho-dagger-circ}, $\alpha_{\rho^\dagger}\coloneqq
\sqrt{(1-\rho^\dagger)/(1+\rho^\dagger)})$, $(Z,Z_1,Z_2)$ and $V_3$ are defined in
Corollary~\ref{cor:MSN-tau-S-Phi2} and Theorem~\ref{thm:MSN-tau-S}~(ii), $g$ is the
same function as defined
above, and
\[
  B\coloneqq
  \begin{bmatrix}
    s^{-1} + 2a^2\rho(1+\rho)s^{-2} & -2a(1+\rho)(1-\rho^2) s^{-2}\\
    a[1+a^2(1+\rho)] s^{-3/2}       & (1+\rho) s^{-3/2}
  \end{bmatrix},
  \qquad
  s\coloneqq 1+a^2(1-\rho^2).
\]
\end{proposition}

\begin{figure}[t!]\centering
  \includegraphics[width=0.95\textwidth]{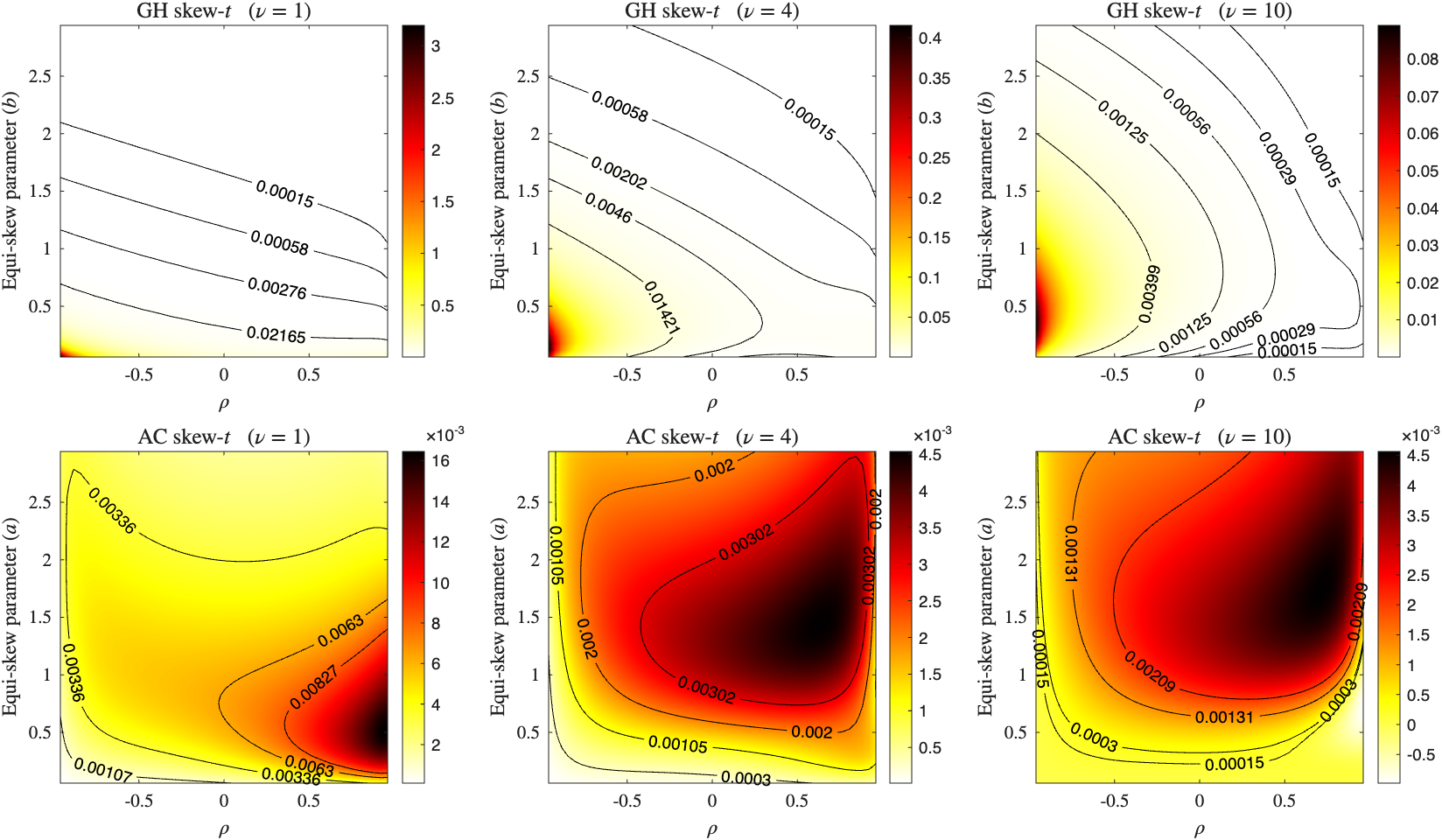}
  \caption{Contour plots of $\det J_{\mathrm{GHt}}(\rho,b)$
  for GH skew-$t$ copulas with equi-skewness $b\in(0,3)$ and $\rho\in(-1,1)$ 
  in the upper panels, and $\det J_{\mathrm{ACt}}(\rho,a)$ for AC skew-$t$ copulas
  with equi-skewness $a\in(0,3)$ and $\rho\in(-1,1)$ in the lower panels.}
  \label{fig:Jacobian_equi_skew}
\end{figure}

From these expressions for the Jacobians $J_{\mathrm{mn}}(\rho,b)$ and
$J_{\mathrm{mn}}(\rho,a)$, it is generally difficult to assess their invertibility (or
nonsingularity) analytically. Nevertheless, these formulas can be used to diagnose
invertibility of a Jacobian at certain parameter values by numerically evaluating 
its determinants. 

Fig.~\ref{fig:Jacobian_equi_skew} shows contour plots of the determinants of the
Jacobians for the GH skew-$t$ copulas $\det J_{\mathrm{GHt}}(\rho,b)$
and for the AC skew-$t$ copulas $\det J_{\mathrm{ACt}}(\rho,b)$, under
equi-skewness and known mixing distribution $\mathrm{IG}(\nu/2,\nu/2)$
for $\nu=1,4$ and $10$. The expectations in Proposition~\ref{prop:J-equi} are
computed via quasi-Monte Carlo integration.

The figure shows that $\det J_{\mathrm{GHt}}(\rho,b)>0$ across the entire parameter
region for all three values of $\nu$. By contrast, for the AC skew-$t$ copula with 
$\nu=10$, $J_{\mathrm{ACt}}(\rho,a)$ becomes singular when the equi-skewness
parameter $a$ is near zero or when $\rho$ approaches the boundary values $\pm1$. 
The contour plots also reveal regions where the Jacobian is nearly singular,
indicating \emph{weak} invertibility of the rank-correlation mapping
even though the numerical value of the determinant remains nonzero.
For example, taking $1.5\times 10^{-4}$ as a threshold, increasing $\nu$ tends to enlarge 
the region with $|\det J_{\mathrm{GHt}}(\rho,b)|>1.5\times 10^{-4}$,
while the corresponding region with $|\det J_{\mathrm{ACt}}(\rho,a)|>1.5\times
10^{-4}$ tends to shrink. 
Moreover, weak invertibility for GH skew-$t$ copulas typically arises at large values
of both $\rho$ and $b$, whereas for AC skew-$t$ copulas it occurs typically when
$\rho$ is near $-1$ and $a$ is close to zero.

\subsection{Invertibility under single-skewness and known mixing distribution}

Next, we turn to the single-skew case. For both copula classes under consideration,
when only one skewness parameter is nonzero and the mixing distribution $F$ is known,
the model involves two unknown parameters: the pseudo-correlation $\rho$ and the 
single skewness parameter, denoted by $b$ or $a$ depending on the copula class.

To assess whether $(\rho, b)$ or $(\rho, a)$ can be identified from the rank correlations
$(\tau,\rho_S)$ in this setting, we make several preliminary observations.
First, as in the equi-skewness case, the sign of the single skewness parameter $b$ or
$a$ is not identifiable either; hence, we restrict attention to $b>0$ or $a>0$.
Second, the single skewness parameter is not identifiable when $\rho=0$, since both 
rank correlations are odd functions of $\rho$ and therefore is zero at $\rho=0$,
regardless of the skewness parameter; see Proposition~\ref{prop:single-skew-MN}~(i)
and Proposition~\ref{prop:single-skew-MSN}~(i).
Third, in the skew-normal scale mixture case, the skewness parameter is not
identifiable when $\rho=\pm1$, because $\tau=\rho_S=\pm1$ at these end points.
Finally, since both $\tau$ and $\rho_S$ are odd functions of $\rho$, it suffices to 
only consider $\rho>0$, as the case $\rho<0$ follows by symmetry.

Given these observations, what remains of interest for the normal location-scale
mixture copulas is the identification of $(\rho,b)$ up to a sign flip of $b$ for
$\rho\in(0,1]$, which can be analyzed by examining the invertibility of
\begin{equation}\label{eq:R-mn-single}
  R^\circ_{\mathrm{mn}}(\rho, b) \coloneqq
  \begin{bmatrix}
    \tau(\rho,(0,b),F) \\ \rho_S(\rho,(0,b),F) 
  \end{bmatrix},
  \qquad 
  \rho\in(0,1], \ b>0,
\end{equation}
with the expressions for $\tau$ and $\rho_S$ given in
Proposition~\ref{prop:single-skew-MN}. 
Similarly, for the skew-normal scale mixture copulas, the remaining question of
interest concerns the identification of $(\rho,a)$ up to a sign flip of $a$ for
$\rho\in(0,1)$, which can be analyzed by examining the invertibility of
\begin{equation}\label{eq:R-msn-single}
  R^\circ_{\mathrm{msn}}(\rho, a) \coloneqq
  \begin{bmatrix}
    \tau(\rho,(0,a),F) \\ \rho_S(\rho,(0,a),F) 
  \end{bmatrix},
  \qquad 
  \rho\in(0,1), \ a>0,
\end{equation}
with the expressions for $\tau$ and $\rho_S$ given in
Proposition~\ref{prop:single-skew-MSN}.

The following proposition provides the explicit expressions of the Jacobians of
$R^\circ_{\mathrm{mn}}(\rho,b)$ and $R^\circ_{\mathrm{msn}}(\rho,a)$.

\begin{proposition}\label{prop:J-single} 
The Jacobian of the function $R^\circ_{\mathrm{mn}}$ defined in
\eqref{eq:R-mn-single} is given by
\[
J^\circ_{\mathrm{mn}}(\rho, b) = 
\begin{bmatrix}
  4\, \EE\,\phi_2(0,bV;\rho) & 
  2\, \EE\,V\phi^{\mathrm s}(bV; \alpha^\circ(\rho)) \\
  12\,\EE\,V_3 \phi_2(0,bV_2;\rho V_3) & 
  6\, \EE\,V_1 \phi^{\mathrm s}(bV_1; \alpha^\circ(\rho V_3))
\end{bmatrix},
\quad \rho\in(0,1),\ b>0,
\]
where $\alpha^\circ(r)=-r(1-r^2)^{-1/2}$
and $V,V_1,V_2,V_3$ are defined in Theorem~\ref{thm:MN-tau-S}.
The Jacobian of the function $R^\circ_{\mathrm{msn}}$ defined in
\eqref{eq:R-msn-single} is given by
\[
  J^\circ_{\mathrm{msn}}(\rho,a) = 4
  \begin{bmatrix}
  \gamma_1 \EE\, h(Z;a,\rho^\dagger)
  & 
  \gamma_2 \EE\, h(Z;a,\rho^\dagger) 
  + \EE\, c(Z;a,\rho^\dagger)
  \\
  3 \gamma_1 \EE\, \tilde h(Z_1,Z_2,V_3;a,\rho^\dagger)
  &
  3 \left(\gamma_2 \EE\,\tilde h(Z_1,Z_2,V_3;a,\rho^\dagger)
  + \EE\,\tilde c(Z_1,Z_2,V_3;a,\rho^\dagger) \right)
  \end{bmatrix}
\]
for $\rho\in(0,1)$ and $a>0$,
where $\rho^\dagger$ is defined in \eqref{eq:ra-dagger-single}, 
$\gamma_1\coloneqq (1+a^2)[1+a^2(1-\rho^2)]^{-3/2}$,
$\gamma_2\coloneqq -a\rho(1-\rho^2)[1+a^2(1-\rho^2)]^{-3/2}$,
$(Z,Z_1,Z_2)$ and $V_3$ are defined in Corollary~\ref{cor:MSN-tau-S-Phi2}
and Theorem~\ref{thm:MSN-tau-S}~(ii), 
and
\begin{align*}
  h(Z;a,\rho^\dagger) 
  &=
  a f(Z;a,\rho^\dagger) + \phi_2(aZ, a\rho^\dagger Z;\rho^\dagger),
  \qquad
  c(Z;a,\rho^\dagger)
  =
  \rho^\dagger f(Z; a,\rho^\dagger) + \frac12Z\phi(aZ),
  \\
  \tilde h(Z_1,Z_2,V_3;a,\rho^\dagger) 
  &=
  a \tilde f(Z_1,Z_2,V_3;a,\rho^\dagger) + V_3 \phi_2(aZ_1, a\rho^\dagger Z_2;\rho^\dagger
  V_3),
  \\
  \tilde c(Z_1,Z_2,V_3;a,\rho^\dagger) 
  &=
  \rho^\dagger \tilde f(Z_1,Z_2,V_3;a,\rho^\dagger)
  + Z_1\phi(aZ_1)\, 
  \Phi\left( \frac{a\rho^\dagger(Z_2 - Z_1V_3)}{\sqrt{1-(\rho^\dagger
  V_3)^2}}\right),
\end{align*}
with 
$
f(Z;a,\rho^\dagger) 
= Z \phi(a\rho^\dagger Z) \Phi\big( a\sqrt{1-\rho^{\dagger2}} Z \big)
$
and
$
\tilde f(Z_1,Z_2,V_3;a,\rho^\dagger) 
= Z_2 \phi(a\rho^\dagger Z_2)
\Phi
\big(
  a(Z_1 - \rho^{\dagger2} Z_2 V_3)\big[1-(\rho^\dagger V_3)^2\big]^{-1/2}
\big)
$.
\end{proposition}

\begin{figure}[t!]\centering
  \includegraphics[width=0.95\textwidth]{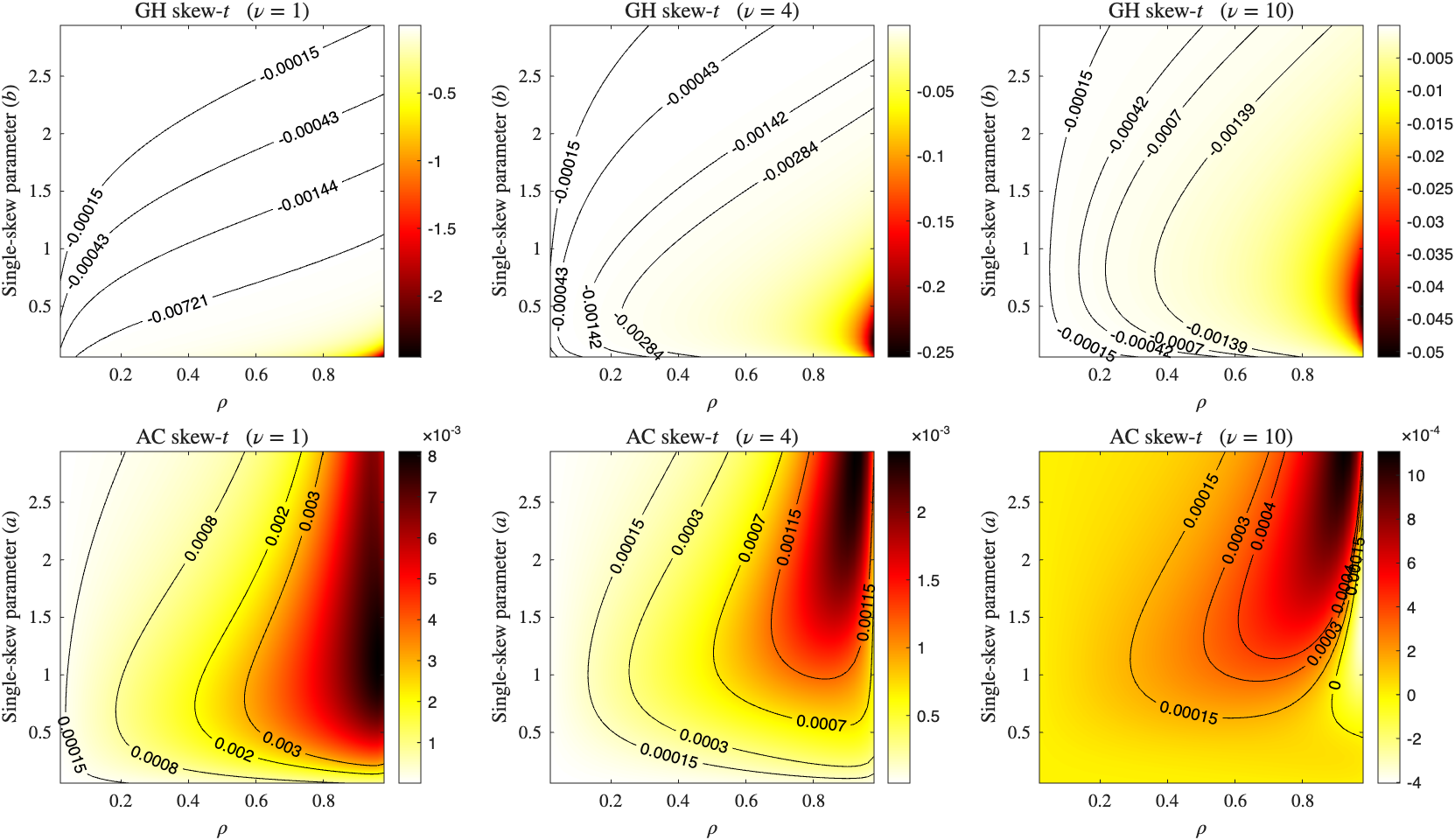}
  \caption{Contour plots of $\det J^\circ_{\mathrm{GHt}}(\rho,b)$ for GH skew-$t$
  copulas with single-skewness $b\in(0,3)$ and $\rho\in(0,1)$ in the upper panels,
  and $\det J^\circ_{\mathrm{ACt}}(\rho,a)$ for AC skew-$t$ copulas
  with single-skewness $a\in(0,3)$ and $\rho\in(0,1)$ in the lower panels.}
  \label{fig:Jacobian_single_skew}
\end{figure}

Similar to the previous case, the invertibility of 
$J^\circ_{\mathrm{mn}}(\rho,b)$ and $J^\circ_{\mathrm{mn}}(\rho,a)$ cannot, in
general, be easily assessed from the formulas in the above proposition.
However, numerical integration can be used to compute these Jacobians and hence
their determinants at specific parameter values.

Fig.~\ref{fig:Jacobian_single_skew} presents contour plots of the determinants of the
Jacobians for the GH skew-$t$ copulas $\det J^\circ_{\mathrm{GHt}}(\rho,b)$ and for
the AC skew-$t$ copulas $\det J^\circ_{\mathrm{ACt}}(\rho,b)$, under single-skewness
and known mixing distribution $\mathrm{IG}(\nu/2,\nu/2)$ for $\nu=1,4$ and $10$.
The plots in Fig.~\ref{fig:Jacobian_single_skew} show that
$|\det J^\circ_{\mathrm{GHt}}(\rho,b)|>0$
across the entire parameter region for all three values of $\nu$, whereas 
$|\det J^\circ_{\mathrm{ACt}}(\rho,b)|>0$ 
holds throughout the region only for $\nu=1$ and $4$. 
When $\nu=10$, $\det J^\circ_{\mathrm{ACt}}(\rho,a)$ can reach zero.
The contour plots also highlight regions where the Jacobian is nearly singular,
indicating weak invertibility of the rank-correlation mapping.
As in the equi-skew case, increasing $\nu$ generally enlarges the region where $|\det
J_{\mathrm{GHt}}(\rho,b)|$ exceeds a small threshold, while the corresponding region
where $|\det J_{\mathrm{ACt}}(\rho,a)|$ exceeds the same threshold tends to shrink
with larger $\nu$.
Moreover, for both GH and AC skew-$t$ copulas, weak invertibility 
tends to occur when $\rho$ is close to zero, which is consistent with the fact that
the skewness parameter is not identified when $\rho=0$.

\section{Concluding remarks} \label{sec:conclusion}

We derived explicit formulas for Kendall's tau and Spearman's rho for two
broad classes of asymmetric copulas, with the widely used GH skew-$t$, AC skew-$t$
and skew-normal copulas arising as special cases.
These tractable results rely on the special properties of the
multivariate normal distribution, the building block of both model classes.
In particular, the multivariate normal is the only elliptical family for which uncorrelated
components are independent. This property is crucial for the stochastic
representation used to derive our formulas for skew-normal scale mixture copulas,
as highlighted in the proofs of Lemmas~\ref{lemma:orthant-prob-tau} and
\ref{lemma:orthant-prob-S} (which underpin Theorem~\ref{thm:MSN-tau-S}).
Although skew-normal scale mixtures already form a broad subclass of skew-elliptical
models (see \cite{BrancoDey2001}), extending these results further is challenging because this
property is not available in general.

Our analysis reveals several important findings regarding how asymmetry affects 
rank correlations in different models.
In particular, we show that, unlike for skew-normal scale mixture copulas,
the presence of asymmetry in normal location-scale mixture copulas restricts
the attainable range of Kendall's tau and Spearman's rho to a \emph{strict} subset of
$[-1,1]$. 
Specifically, pushing the pseudo-correlation $\rho$ toward $\pm1$ does not
necessarily yield rank correlations close to $\pm1$ unless the skewness parameters
align (e.g. $\beta_1=\beta_2$ for $\rho=1$ and $\beta_1=-\beta_2$ for $\rho=-1$). 
This finding has important implications for both interpretation and application of this
class of asymmetric copulas. 
It indicates that extreme linear correlation and extreme rank concordance need not
coincide under asymmetry. In practice, this restriction may complicate rank-based
estimation: inversion procedures with predetermined asymmetry parameters may fail
when the empirical $\tau$ or $\rho_S$ lies outside the model's attainable set, and
model selection should therefore account for these range limitations. 
Empirical studies employing this class of copulas, including the GH skew-$t$ copula
as a special case, are advised to report the estimated parameters together with
the implied attainable interval for $\tau$ and $\rho_S$, to clarify whether observed
dependence levels are compatible with the chosen model specification.

\section{Mathematical proofs} \label{sec:proofs}

\subsection{Useful lemmas} \label{sec:lemma}

We first present some useful lemmas that will be used in the proofs in
Sections~\ref{sec:proof-sec4} and \ref{sec:proof-sec5}.

\begin{lemma}\label{lemma:tau-rhoS-X}
  Let $(X_1, X_2)$ be a bivariate random vector with continuous marginal distribution
  functions. Then,
  \smallskip
  
  \noindent \emph{(i)}
  Kendall's tau of $X_1$ and $X_2$ is given by 
  $\tau = 4\,\PP\{X_1<X^\star_1,\,X_2<X^\star_2\}-1$,
  where $(X^\star_1,X^\star_2)$ is an independent copy of $(X_1,X_2)$;
  \smallskip
  
  \noindent \emph{(ii)}
  Spearman's rho of $X_1$ and $X_2$ is given by
  $\rho_S=12\,\PP\{X_1<X^\circ_1,\,X_2<X^\circ_2\}-3$,
  where $(X^\circ_1,X^\circ_2)$ is a pair of independent random variables independent
  of $(X_1,X_2)$ and having the same marginal distribution functions.
\end{lemma}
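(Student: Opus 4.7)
\textbf{Proof proposal for Lemma~\ref{lemma:tau-rhoS-X}.}

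For part (i), the plan is to start from the definition $\tau = \EE\,\mathrm{sign}(X_1-X^\star_1)(X_2-X^\star_2)$ and exploit the fact that the continuity of the margins $F_1$ and $F_2$ ensures that $\PP\{X_1=X^\star_1\}=\PP\{X_2=X^\star_2\}=0$. Consequently the sign function takes only values $\pm 1$ almost surely, and
\[
\tau = \PP\{(X_1-X^\star_1)(X_2-X^\star_2)>0\}-\PP\{(X_1-X^\star_1)(X_2-X^\star_2)<0\}.
\]
Since the two probabilities sum to one, it suffices to rewrite one of them. Because $(X_1,X_2)$ and $(X^\star_1,X^\star_2)$ are exchangeable, swapping the two pairs shows $\PP\{X_1<X^\star_1,X_2<X^\star_2\}=\PP\{X_1>X^\star_1,X_2>X^\star_2\}$, so the concordance probability equals $2\PP\{X_1<X^\star_1,X_2<X^\star_2\}$. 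Substituting back and simplifying yields the stated formula.

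For part (ii), I would set $U_i=F_i(X_i)$, which are $\mathrm U(0,1)$-distributed under continuous margins, so $\EE U_i=1/2$ and $\mathrm{Var}(U_i)=1/12$. By definition,
\[
\rho_S=\mathrm{Corr}(U_1,U_2)=\frac{\EE[U_1U_2]-1/4}{1/12}=12\,\EE[U_1U_2]-3.
\]
The remaining task is to identify $\EE[U_1U_2]$ with the claimed probability. Conditioning on $(X_1,X_2)$ and using the independence of $X^\circ_1$ from $X^\circ_2$ (and both from $(X_1,X_2)$),
\[
\PP\{X_1<X^\circ_1,\,X_2<X^\circ_2\mid X_1,X_2\}=(1-F_1(X_1))(1-F_2(X_2))=(1-U_1)(1-U_2).
\]
Taking expectations gives $\PP\{X_1<X^\circ_1,X_2<X^\circ_2\}=\EE[(1-U_1)(1-U_2)]=1-\EE U_1-\EE U_2+\EE[U_1U_2]=\EE[U_1U_2]$, which is exactly what is needed.

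Neither part presents a genuine obstacle, but the two points that require a bit of care are (a) the use of continuity of the margins to rule out ties in part (i), which is what permits the clean decomposition into concordance and discordance probabilities, and (b) the observation in part (ii) that the probability involves the \emph{independent} pair $(X^\circ_1,X^\circ_2)$, so the product structure $(1-U_1)(1-U_2)$ appears only after conditioning on $(X_1,X_2)$ and separately integrating out $X^\circ_1$ and $X^\circ_2$. Once these two points are noted, both identities follow by a short computation.
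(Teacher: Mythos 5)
Your proof is correct. Part (i) is essentially the paper's argument: both decompose $\tau$ into concordance minus discordance probabilities (continuity ruling out ties) and use the exchangeability of $(X_1,X_2)$ with $(X_1^\star,X_2^\star)$ to halve the concordance probability; you simply make that last symmetry step explicit where the paper leaves it implicit. Part (ii) takes a genuinely different, more self-contained route. The paper invokes the copula-integral representation $\rho_S=12\int_0^1\int_0^1 C(u_1,u_2)\,\diff u_1\diff u_2-3$ (citing Proposition 5.29 of McNeil, Frey and Embrechts) and then identifies that double integral with $\PP\{U_1<U_1^\circ,\,U_2<U_2^\circ\}$ for independent uniforms $(U_1^\circ,U_2^\circ)$. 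You instead work directly from the definition $\rho_S=\mathrm{Corr}(F_1(X_1),F_2(X_2))$, compute $\rho_S=12\,\EE[U_1U_2]-3$ from the uniform moments, and show $\PP\{X_1<X_1^\circ,X_2<X_2^\circ\}=\EE[(1-U_1)(1-U_2)]=\EE[U_1U_2]$ by conditioning on $(X_1,X_2)$ and using $\EE U_i=1/2$. Your version avoids the external copula-integral result entirely at the cost of a short moment computation; the paper's is quicker once that result is granted. Both are valid, and your conditioning step correctly exploits the independence of $X_1^\circ$ and $X_2^\circ$ and the continuity of the margins.
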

  
\begin{proof}[\textbf{\upshape Proof of Lemma~\ref{lemma:tau-rhoS-X}:}]
  See the proof of Proposition 5.29 in \cite{McNeilFreyEmbrechts2005}.
\end{proof}
  
\begin{lemma}\label{lemma:bi-normal-diag}
  For $x\in\RR$ and $\rho\in(-1,1]$, we have $\Phi_2(x,x;\rho)=\Phi^{\mathrm
  s}(x;\alpha_\rho)$, where $\Phi^{\mathrm s}(\,\cdot\,;\alpha_\rho)$ is the univariate
  standard skew-normal cdf with $\alpha_\rho=\sqrt{(1-\rho)/(1+\rho)}>0$.
\end{lemma}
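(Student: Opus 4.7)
The plan is to prove the identity by representing both sides probabilistically and identifying $\Phi_2(x,x;\rho)$ as the distribution function of the maximum of two correlated standard normals, which turns out to have a skew-normal law.

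First I would note that $\Phi_2(x,x;\rho) = \PP\{Z_1 \le x,\, Z_2 \le x\} = \PP\{\max(Z_1,Z_2) \le x\}$, where $(Z_1,Z_2) \sim \mathrm N(0,\varrho(\rho))$. The case $\rho = 1$ is trivial since $Z_1 = Z_2$ a.s.\ and $\alpha_\rho = 0$ gives $\Phi^{\mathrm s}(x;0) = \Phi(x)$. So the content of the lemma lies in $\rho \in (-1,1)$.

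Next I would use the standard decomposition
\[
\max(Z_1,Z_2) = \frac{Z_1+Z_2}{2} + \frac{|Z_1-Z_2|}{2}.
\]
Setting $U = (Z_1+Z_2)/\sqrt{2(1+\rho)}$ and $V = (Z_1-Z_2)/\sqrt{2(1-\rho)}$, a direct covariance computation gives $\mathrm{Cov}(U,V) = 0$ and $\mathrm{Var}(U) = \mathrm{Var}(V) = 1$, so $U$ and $V$ are independent standard normals. Hence
\[
\max(Z_1,Z_2) \;=\; \sqrt{\tfrac{1+\rho}{2}}\,U + \sqrt{\tfrac{1-\rho}{2}}\,|V|.
\]
This is exactly the canonical stochastic representation of a univariate standard skew-normal variable: if $\delta \in (-1,1)$ and $U,V$ are i.i.d.\ $\mathrm N(0,1)$, then $\sqrt{1-\delta^2}\,U + \delta\,|V|$ has the $\mathrm{SN}(0,1,\alpha)$ distribution with $\alpha = \delta/\sqrt{1-\delta^2}$ (see, e.g., Azzalini 2014, Section 2.1.2).

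Finally, I would identify $\delta = \sqrt{(1-\rho)/2}$, which satisfies $1-\delta^2 = (1+\rho)/2$, so the resulting skewness parameter is
\[
\alpha = \frac{\delta}{\sqrt{1-\delta^2}} = \sqrt{\frac{1-\rho}{1+\rho}} = \alpha_\rho,
\]
yielding $\Phi_2(x,x;\rho) = \PP\{\max(Z_1,Z_2) \le x\} = \Phi^{\mathrm s}(x;\alpha_\rho)$, as claimed. There is no real obstacle here; the only thing to be careful about is the boundary case $\rho = 1$ (handled separately above) and the fact that $\alpha_\rho \ge 0$ whenever $\rho \le 1$, which is consistent with $\max(Z_1,Z_2)$ being positively skewed.
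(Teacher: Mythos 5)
Your proof is correct, but it takes a genuinely different route from the paper's. The paper exploits exchangeability to write $\Phi_2(x,x;\rho)=2\,\PP\{X_1\le X_2,\,X_2\le x\}$, then conditions on $X_2=y$ (using $(X_1\given X_2=y)\sim\mathrm N(\rho y,1-\rho^2)$) to obtain $\PP\{X_1\le y\given X_2=y\}=\Phi(\alpha_\rho y)$, so that the skew-normal density $2\phi(y)\Phi(\alpha_\rho y)$ appears directly under the integral sign. You instead recognize $\Phi_2(x,x;\rho)$ as the cdf of $\max(Z_1,Z_2)$, rotate to the independent coordinates $U\propto Z_1+Z_2$ and $V\propto Z_1-Z_2$, and invoke the additive representation $\delta|V|+\sqrt{1-\delta^2}\,U\sim\mathrm{SN}(0,1,\delta/\sqrt{1-\delta^2})$; the parameter matching $\delta=\sqrt{(1-\rho)/2}$, $1-\delta^2=(1+\rho)/2$, $\alpha=\alpha_\rho$ is verified correctly, and your separate treatment of $\rho=1$ coincides with the paper's. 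The trade-off: the paper's argument is fully self-contained, deriving the skew-normal density from nothing beyond the bivariate normal conditional law, whereas yours outsources the key step to the (standard, citable) stochastic representation of the skew-normal but in return delivers a cleaner structural explanation --- namely the classical fact that the maximum of two exchangeable correlated standard normals is skew-normal --- which also makes the positivity of $\alpha_\rho$ transparent. Both are complete and valid.
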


\begin{proof}[\textbf{\upshape Proof of Lemma~\ref{lemma:bi-normal-diag}:}]
  Suppose the random vector $(X_1,X_2)$ has cdf $\Phi_2(\cdot;\rho)$ with
  $\rho\in(-1,1)$. Then, it follows that 
  $\Phi_2(x,x;\rho) = 2\PP\{X_1\le X_2,X_2\le x\}
  = 2\int_{-\infty}^x\PP\{X_1\le y \given X_2=y\}\phi(y) \diff y$.
  Moreover, since $(X_1 \given X_2=y)\sim\mathrm N(\rho y,1-\rho^2)$, we have
  $
  \PP\{X_1\le y \given X_2=y\}
  =\PP
  \big\{(X_1-\rho y)/\sqrt{1-\rho^2}\le (1-\rho)y/\sqrt{1-\rho^2} \,\big|\, X_2=y \big\}
  =\Phi(\alpha_\rho y).
  $
  It then follows that
  $
  \Phi_2(x,x;\rho)
  =\int_{-\infty}^x2\phi(y)\Phi(\alpha_\rho y) \diff y
  =\int_{-\infty}^x\phi^{\mathrm s}(y;\alpha_\rho) \diff y = \Phi^{\mathrm s}(x;\alpha_\rho) 
  $.
  Lastly, if $\rho=1$, then $\alpha_\rho=0$.
  Finally, it can be easily shown that $\Phi_2(x,y;1)=\Phi(\min\{x,y\})$, from which
  it follows that $\Phi_2(x,x;1)=\Phi(x)=\Phi^{\mathrm{s}}(x;0)$ for any $x\in\RR$.
\end{proof}

\begin{lemma}\label{lemma:positive}
  Suppose $X$ is a real valued symmetrically distributed random variable with a
  well-defined pdf and finite first moment, and $\phi^{\mathrm s}(\,\cdot\,;\alpha)$ is
  the pdf of the univariate standard skew-normal distribution with $\alpha>0$. For
  any $b>0$,
  \smallskip
  
  \noindent \emph{(i)}
  $\EE\, X\phi^{\mathrm s}(bX;\alpha) > 0$;
  \smallskip
  
  \noindent \emph{(ii)}
  $\EE\, X\phi^{\mathrm s}(bX;\alpha)$ is strictly increasing in $\alpha$, if,
  additionally, $0<\EE X^2<\infty$. 
\end{lemma}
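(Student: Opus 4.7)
The plan is to use the definition $\phi^{\mathrm s}(x;\alpha)=2\phi(x)\Phi(\alpha x)$ together with the symmetry of $X$ about zero, which allows us to rewrite the expectation as an integral over $(0,\infty)$ where the sign is transparent.

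For part (i), let $f$ denote the pdf of $X$. Writing the expectation as an integral and splitting into the positive and negative halves, then using $f(-x)=f(x)$ and $\phi(-t)=\phi(t)$ on the negative half via the substitution $y=-x$, I would obtain
\[
\EE X\phi^{\mathrm s}(bX;\alpha)
= \int_0^\infty 2x\phi(bx)f(x)\bigl[\Phi(\alpha bx)-\Phi(-\alpha bx)\bigr]\diff x
= \int_0^\infty 2x\phi(bx)f(x)\bigl[2\Phi(\alpha bx)-1\bigr]\diff x.
\]
For $x>0$, $b>0$, $\alpha>0$, we have $\alpha bx>0$, hence $2\Phi(\alpha bx)-1>0$. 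The remaining factors are non-negative, and the integrand is strictly positive on a set of positive measure (since $X$ has a well-defined pdf, so $f$ cannot be concentrated at zero), giving strict positivity of the integral.

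For part (ii), the differentiation $\tfrac{\partial}{\partial\alpha}\Phi(\alpha bX)=bX\phi(\alpha bX)$ yields, at least formally,
\[
\frac{\partial}{\partial\alpha}\EE X\phi^{\mathrm s}(bX;\alpha)
= 2b\,\EE\bigl[X^2\phi(bX)\phi(\alpha bX)\bigr].
\]
The interchange of differentiation and expectation is justified by dominated convergence: the integrand is bounded in absolute value by $2bX^2(2\pi)^{-1}$, which is integrable under the assumption $\EE X^2<\infty$. The resulting expression is non-negative, and strictly positive because $X^2\phi(bX)\phi(\alpha bX)>0$ on a set of positive measure (again using the existence of the pdf and $\EE X^2>0$), yielding strict monotonicity in $\alpha$.

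The main obstacle is minor: one needs to be careful that the set on which the integrand vanishes has measure zero in both parts. This is handled by invoking the hypothesis that $X$ has a well-defined pdf (so $\PP\{X=0\}=0$) in part (i), and the additional hypothesis $\EE X^2>0$ in part (ii). No deeper issue arises.
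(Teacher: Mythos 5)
Your proof is correct and follows essentially the same route as the paper: part (ii) is identical (differentiate under the integral to get $2b\,\EE X^2\phi(bX)\phi(\alpha bX)>0$, and you even supply the dominated-convergence justification the paper leaves implicit), while for part (i) the paper changes variables to write the quantity as $b^{-2}\EE\,g(Y)$ for a skew-normal $Y$ and an odd function $g$, which is just a repackaging of your direct folding of the integral onto $(0,\infty)$ using $2\Phi(\alpha bx)-1>0$. No gaps.
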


\begin{proof}[\textbf{\upshape Proof of Lemma~\ref{lemma:positive}:}]
  Let $f$ denote the pdf of $X$, which is a positive even function on $\RR$. 
  Then, by the change of variables,
  \begin{equation} \label{eq:EXphi}
    \EE\, X\phi^{\mathrm s}(bX;\alpha)
    =\int_{-\infty}^\infty x\phi^{\mathrm s}(bx;\alpha)f(x)\diff x
    =b^{-2}\int_{-\infty}^\infty yf(y/b)\phi^{\mathrm s}(y;\alpha)\diff y
    =b^{-2}\,\EE\, g(Y),
  \end{equation}
  where $Y$ is a univariate standard skew-normal random variable with skewness $\alpha$, and $g(x)\coloneqq xf(x/b)$ for $x\in\RR$. Since $\alpha>0$, we have $\EE g(Y)>0$, as $Y$ is strictly positively skewed and $g$ is an odd function on $\RR$. This proves part~(i) of the lemma.
  To show part~(ii) of the lemma, note that if $0<\EE\,X^2<\infty$, then taking partial
  derivative with respect to $\alpha$ on both sides of the first equality in \eqref{eq:EXphi} yields
  (noticing that $\phi^{\mathrm s}(bx;\alpha)=2\phi(bx)\Phi(\alpha b x)$)
  \[
    \frac{\partial}{\partial\alpha}\,\EE X\phi^{\mathrm s}(bX;\alpha)
    =
    2b\int_{-\infty}^\infty x^2\phi(bx)\phi(\alpha bx)f(x)\diff x
    =
    2b\,\EE X^2\phi(bX)\phi(\alpha bX)>0,
  \]
  as desired to be shown.
\end{proof}

\begin{lemma}\label{lemma:orthant-prob-tau}
  Let $(X_1,X_2)\sim\mathrm{SN}(0,\varrho(\rho),\alpha)$ where $\rho\in(-1,1)$ and
  $\alpha\in\RR^2$, and $(X^\star_1,X^\star_2)$ be an independent copy of
  $(X_1,X_2)$. Let $a, b$ be two strictly positive constants. We have
  \[
  \PP\{aX^\star_1>bX_1,\,aX^\star_2>bX_2\} 
  = 4\,\Phi_4\big(0;\mathrm P_\tau(\rho,\delta,v)\big),
  \]
  where $\mathrm P_\tau$ is defined by \eqref{eq:P-tau},
  $\delta=(\delta_1,\delta_2)^\top=\varrho(\rho)\alpha/\sqrt{1+\alpha^\top\varrho(\rho)\alpha}$,
  and $v=(v_1,v_2)$ with $v_1=a/\sqrt{a^2+b^2}$ and $v_2=-b/\sqrt{a^2+b^2}$. 
\end{lemma}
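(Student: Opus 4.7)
The plan is to exploit the conditioning (hidden truncation) representation of the bivariate skew-normal distribution to express the probability in question as a four-dimensional Gaussian orthant probability, and then match the resulting correlation matrix with $\mathrm{P}_\tau(\rho,\delta,v)$. Specifically, I would invoke the standard representation (consistent with Remark~\ref{remark:delta-prameterization}): if $(U_0,U_1,U_2)$ is trivariate normal with mean zero and covariance matrix
\[
\begin{pmatrix} 1 & \delta^\top \\ \delta & \varrho(\rho) \end{pmatrix},
\]
then the conditional law of $(U_1,U_2)$ given $U_0>0$ is exactly $\mathrm{SN}(0,\varrho(\rho),\alpha)$. This can be verified directly by integrating out $U_0$ and using $\Phi(\alpha^\top x) = \PP(U_0>0 \mid U_1=x_1, U_2=x_2)$. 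Let $(U_0^\star,U_1^\star,U_2^\star)$ be an independent copy so that the independent copy $(X_1^\star,X_2^\star)$ admits the analogous representation in terms of the starred variables.

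With this setup, I would decouple the conditioning from the event of interest. Using the independence of the two trivariate blocks and $\PP(U_0>0)=\PP(U_0^\star>0)=1/2$,
\[
\PP\{aX_1^\star>bX_1,\,aX_2^\star>bX_2\}=4\,\PP\{aU_1^\star>bU_1,\,aU_2^\star>bU_2,\,U_0>0,\,U_0^\star>0\}.
\]
Next, I would introduce the four mean-zero, jointly normal linear combinations
\[
Y_1=bU_1-aU_1^\star,\quad Y_2=bU_2-aU_2^\star,\quad Y_3=-U_0^\star,\quad Y_4=-U_0,
\]
so that the event above becomes $\{Y_i<0\text{ for } i=1,2,3,4\}$, an orthant probability for $Y=(Y_1,Y_2,Y_3,Y_4)$.

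The remaining step is a direct covariance computation. Using $\mathrm{Cov}(U_i,U_j)=\rho$ for $i\neq j\in\{1,2\}$, $\mathrm{Cov}(U_0,U_i)=\delta_i$, and the independence between starred and unstarred blocks, I would obtain $\mathrm{Var}(Y_1)=\mathrm{Var}(Y_2)=a^2+b^2$, $\mathrm{Var}(Y_3)=\mathrm{Var}(Y_4)=1$, and correlations
\[
\mathrm{Corr}(Y_1,Y_2)=\rho,\quad \mathrm{Corr}(Y_1,Y_3)=\delta_1 v_1,\quad \mathrm{Corr}(Y_2,Y_3)=\delta_2 v_1,
\]
\[
\mathrm{Corr}(Y_1,Y_4)=\delta_1 v_2,\quad \mathrm{Corr}(Y_2,Y_4)=\delta_2 v_2,\quad \mathrm{Corr}(Y_3,Y_4)=0,
\]
with $v_1=a/\sqrt{a^2+b^2}$ and $v_2=-b/\sqrt{a^2+b^2}$. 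These entries are exactly those of $\mathrm{P}_\tau(\rho,\delta,v)$ in \eqref{eq:P-tau}. Since orthant probabilities depend only on the correlation matrix, it follows that $\PP\{Y_i<0\text{ for all }i\}=\Phi_4(0;\mathrm{P}_\tau(\rho,\delta,v))$, and multiplying by 4 yields the claim.

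There is no serious obstacle: the only non-routine ingredient is the conditioning representation of the skew-normal, which is standard. The main thing to watch is the bookkeeping of signs -- the definitions of $Y_3,Y_4$ as $-U_0^\star,-U_0$ (rather than $U_0^\star,U_0$) and the specific signs of $v_1,v_2$ must be aligned so that all six correlations match the prescribed pattern in $\mathrm{P}_\tau$, in particular producing the $+\delta_i v_1$ entries from the starred block and the $\delta_i v_2 = -b\delta_i/\sqrt{a^2+b^2}$ entries from the unstarred block.
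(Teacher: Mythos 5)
Your proposal is correct and follows essentially the same route as the paper's proof: augment each skew-normal pair to a trivariate Gaussian via the hidden-truncation (conditioning) representation, form the linear combinations whose joint orthant event encodes the probability in question, and verify that the resulting $4\times4$ correlation matrix is $\mathrm P_\tau(\rho,\delta,v)$. The only cosmetic difference is that you flip signs ($Y_3=-U_0^\star$, $Y_4=-U_0$) so the event is directly $\{Y_i<0\}$, whereas the paper works with $\{W_i>0\}$ and appeals to the central symmetry of the Gaussian to identify the probability with $\Phi_4(0;\cdot)$.
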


\begin{proof}[\textbf{\upshape Proof of Lemma~\ref{lemma:orthant-prob-tau}:}]
  Let $Z=(Z_1,\ldots,Z_6)^\top$ be a $6$-dimensional normal random vector with mean zero and covariance matrix (upper-triangular entries omitted)
  \begin{align*}
  \mathrm P_Z=\left[
  \begin{array}{ccc|ccc}
  1&&&&&\\
  \rho&1&&&&\\
  \delta_1&\delta_2&1&&&\\\cline{1-6}
  0&0&0&1&& \\   
  0&0&0&\rho&1&\\   
  0&0&0&\delta_1&\delta_2&1
  \end{array}\right]
  =\begin{bmatrix}\mathrm P_3&\mathrm 0\\\mathrm0&\mathrm P_3\end{bmatrix},\quad \mbox{where}\quad
  \mathrm P_3=
  \begin{bmatrix} \varrho(\rho)&\delta\\\delta^\top&1 \end{bmatrix}.
  \end{align*}
  By the unique property of the multivariate normal distribution where uncorrelated
  components are independent, we have $(Z_1,Z_2,Z_3)\perp(Z_4,Z_5,Z_6)$.
  Furthermore, using the stochastic representation of the
  skew-normal distribution via conditioning \citep[][Section~5.1.3]{Azzalini2014}, we
  can represent the random vectors $(X_1,X_2)$ and $(X_1^\star,X_2^\star)$ satisfying
  the assumptions in the lemma as:
  $(X_1,X_2)=_d(Z_4,Z_5)\given\{Z_6>0\}$ and
  $(X^\star_1,X^\star_2)=_d(Z_1,Z_2)\given\{Z_3>0\}$. Next, for $a,b>0$, we
  define
  \begin{align}
    \begin{bmatrix}Y_1\\Y_2\end{bmatrix}
    =_d
    \left.\left(a\begin{bmatrix}Z_1\\Z_2\end{bmatrix}-b\begin{bmatrix}Z_4\\Z_5\end{bmatrix}\right)
    \,\right|\,\{Z_3>0,Z_6>0\}.\label{eq:Y}
  \end{align}
  Then, to prove the lemma it suffices to show that $\PP\{Y_1>0,\,Y_2>0\}=4\,\Phi_4(0;\mathrm P_\tau(\rho,\delta,v))$. To see this, we define a $4$-dimensional normal random vector $W=(W_1,\ldots,W_4)^\top$ by
  \[
    W=\begin{bmatrix} \mathrm A&\mathrm B\\\mathrm C&\mathrm D \end{bmatrix} Z
    \quad\mbox{where}\quad
    \mathrm A=\begin{bmatrix}a&0&0\\0&a&0\end{bmatrix},\quad
    \mathrm B=\begin{bmatrix}-b&0&0\\0&-b&0\end{bmatrix},\quad
    \mathrm C=\begin{bmatrix}0&0&1\\0&0&0\end{bmatrix}, \quad \mbox{and}\quad 
    \mathrm D=\begin{bmatrix}0&0&0\\0&0&1\end{bmatrix}.
  \]
  so that
  \begin{align}\label{eq:Ws}
  W_1=aZ_1-bZ_4,\quad W_2=aZ_2-bZ_5,\quad W_3=Z_3,\quad W_4=Z_6.
  \end{align}
  By construction, $W\sim\mathrm N(0,\Sigma_W)$, where
  \[
  \Sigma_W
  =\left[
  \begin{array}{cc|cc}
  a^2+b^2&(a^2+b^2)\rho&a\delta_1&-b\delta_1\\ (a^2+b^2)\rho&a^2+b^2&a\delta_2&-b\delta_2\\\cline{1-4}
  a\delta_1&a\delta_2&1&0\\-b\delta_1&-b\delta_2&0&1
  \end{array}
  \right].
  \]
  By \eqref{eq:Y} and \eqref{eq:Ws}, we have
  $(Y_1,Y_2)=_d(W_1,W_2)\given\{W_3>0,W_4>0\}$, and consequently,
  \[
  \PP\{Y_1>0,\,Y_2>0\}
  =\PP\{W_1>0,W_2>0\given W_3>0,W_4>0\}
  =\frac{\PP\{W_1>0,W_2>0,W_3>0,W_4>0\}}{\PP\{W_3>0,W_4>0\}}
  =4\,\Phi_4(0;\varrho(\Sigma_W)).
  \]
  The proof is then complete upon noticing that $\varrho(\Sigma_W)=\mathrm
  P_\tau(\rho,\delta,v)$.
\end{proof}

\begin{lemma}\label{lemma:orthant-prob-S}
  Let $(X_1,X_2)\sim\mathrm{SN}(0,\varrho(\rho),\alpha)$ where $\rho\in(-1,1)$ and
  $\alpha\in\RR^2$. Let $(X^\circ_1,X^\circ_2)$ be a bivariate skew-normal vector
  satisfying $(X^\circ_1,X^\circ_2)\perp (X_1,X_2)$, $X^\circ_1\perp X^\circ_2$ and
  $X^\circ_i=_dX_i$ for i=1,2. Moreover, let $a_1,a_2,b$ be three strictly
  positive constants and $\mathrm P_S$ be defined by \eqref{eq:P-S}. We have
  \[
  \PP\{a_1X^\circ_1>bX_1,\, a_2X^\circ_2>bX_2\}
  =8\,\Phi_5\big(0;\mathrm P_S(\rho,\delta,v)\big),
  \]
  where
  $\delta=(\delta_1,\delta_2)^\top=\varrho(\rho)\alpha/\sqrt{1+\alpha^\top\varrho(\rho)\alpha}$,
  $v=(v_1,v_2,v_1^-,v_2^-,v_3)$ with
  $v_1=a_1/\sigma_1$, $v_2=a_2/\sigma_2$, $v^-_1=-b/\sigma_1$, $v^-_2=-b/\sigma_2$
  and $v_3=b^2/(\sigma_1\sigma_2)$, where $\sigma_1=\sqrt{a_1^2+b^2}$ and
  $\sigma_2=\sqrt{a_2^2+b^2}$.
\end{lemma}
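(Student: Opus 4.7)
The plan is to mirror the proof of Lemma~\ref{lemma:orthant-prob-tau}, replacing the two conditioning variables used there with three: one that generates the dependent pair $(X_1,X_2)$ and one each for the independent marginal copies $X_1^\circ$ and $X_2^\circ$. By Remark~\ref{remark:delta-prameterization}, the $i$-th marginal of $\mathrm{SN}(0,\varrho(\rho),\alpha)$ is $\mathrm{SN}(0,1,\delta_i)$ in the $\delta$-parameterization, so $X_i^\circ$ admits the stochastic representation $X_i^\circ \overset{d}{=} Z_i^\circ \mid \{Z_i'' > 0\}$, where $(Z_i^\circ, Z_i'')$ is a bivariate standard normal pair with correlation $\delta_i$, for $i=1,2$.

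Accordingly, I would first assemble a $7$-dimensional mean-zero normal vector $Z = (Z_1,Z_2,Z_3,Z_1^\circ,Z_1'',Z_2^\circ,Z_2'')^\top$ with block-diagonal correlation matrix
\[
\mathrm P_Z =
\begin{bmatrix} \varrho(\rho) & \delta \\ \delta^\top & 1 \end{bmatrix}
\oplus
\begin{bmatrix} 1 & \delta_1 \\ \delta_1 & 1 \end{bmatrix}
\oplus
\begin{bmatrix} 1 & \delta_2 \\ \delta_2 & 1 \end{bmatrix},
\]
which delivers simultaneously $(X_1,X_2) \overset{d}{=} (Z_1,Z_2)\,|\,\{Z_3>0\}$, $X_1^\circ \overset{d}{=} Z_1^\circ\,|\,\{Z_1''>0\}$, and $X_2^\circ \overset{d}{=} Z_2^\circ\,|\,\{Z_2''>0\}$. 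The block-diagonal structure automatically enforces the mutual independence of $(X_1,X_2)$, $X_1^\circ$, and $X_2^\circ$ required by the hypothesis of the lemma.

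Next, I would introduce the linear combinations $W_1 = a_1 Z_1^\circ - b Z_1$, $W_2 = a_2 Z_2^\circ - b Z_2$, $W_3 = Z_3$, $W_4 = Z_1''$, $W_5 = Z_2''$, so that the target probability becomes
\[
\PP\{a_1 X_1^\circ > b X_1,\, a_2 X_2^\circ > b X_2\}
= \frac{\PP\{W_1>0,W_2>0,W_3>0,W_4>0,W_5>0\}}{\PP\{W_3>0,W_4>0,W_5>0\}}.
\]
Because $W_3,W_4,W_5$ lie in three distinct blocks of $\mathrm P_Z$, they are independent standard normals, so the denominator equals $1/8$, and the numerator is the $5$-dimensional normal orthant probability at zero, equal to $\Phi_5(0;\varrho(\Sigma_W))$ with $\Sigma_W$ the covariance matrix of $W=(W_1,\ldots,W_5)^\top$. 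This immediately produces the factor of $8$ in the claimed identity.

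The bulk of the remaining work -- and the step most prone to arithmetic slips -- is verifying that $\varrho(\Sigma_W)$, after a coordinate permutation, coincides with $\mathrm P_S(\rho,\delta,v)$ as defined in \eqref{eq:P-S}. Direct computation gives $\mathrm{Var}(W_1)=\sigma_1^2$, $\mathrm{Var}(W_2)=\sigma_2^2$, and the block structure kills most cross-covariances, leaving $\mathrm{Cov}(W_1,W_2)=b^2\rho$, $\mathrm{Cov}(W_1,W_3)=-b\delta_1$, $\mathrm{Cov}(W_1,W_4)=a_1\delta_1$, $\mathrm{Cov}(W_1,W_5)=0$, and the analogous quantities for $W_2$. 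Dividing by the appropriate $\sigma_i$ converts these into $\rho v_3$, $\delta_1 v_1^-$, $\delta_1 v_1$, $\delta_2 v_2^-$, $\delta_2 v_2$, with the zeros landing in the required positions; the minus signs in $v_1^-, v_2^-$ track the minus signs attached to $Z_1$ and $Z_2$ in $W_1,W_2$, which is the point most easily mishandled. Because orthant probabilities at the origin are invariant under coordinate permutations, reordering $W$ as $(W_1,W_2,W_4,W_5,W_3)$ aligns the correlation matrix with $\mathrm P_S(\rho,\delta,v)$ exactly, yielding the claim.
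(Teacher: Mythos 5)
Your proposal is correct and follows essentially the same route as the paper: both use the selection (conditioning) representation of the skew-normal to embed everything in a $7$-dimensional Gaussian vector, form the linear combinations $a_iZ_i^\circ-bZ_i$ together with the three latent sign variables, and identify the resulting $5\times5$ correlation matrix with $\mathrm P_S(\rho,\delta,v)$ up to a coordinate permutation. The only difference is bookkeeping — you order the blocks (dependent pair first, then the two independent marginals) differently from the paper, which is immaterial since orthant probabilities at the origin are permutation-invariant.
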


\begin{proof}[\textbf{\upshape Proof of Lemma~\ref{lemma:orthant-prob-S}:}]
  Let $Z=(Z_1,\ldots,Z_7)^\top$ be a normal random vector with mean zero and
  covariance matrix 
  \[
    \mathrm P_Z
    =
    \begin{bmatrix}\mathrm B&\mathrm C\\\mathrm C^\top&\mathrm I_3\end{bmatrix},
    \qquad
    \mathrm B
    =
    \begin{bmatrix}\mathrm I_2&\mathrm 0\\\mathrm 0&\varrho(\rho)\end{bmatrix},
    \qquad
    \mathrm C^\top
    =
    \begin{bmatrix}\mbox{diag}(\delta)&\mathrm 0\\\mathrm 0&\delta^\top\end{bmatrix}.
  \]
  By the unique property of the multivariate normal distribution where uncorrelated
  components are independent, we have $(Z_3,Z_4,Z_7)\perp(Z_1,Z_2,Z_5,Z_7)$ and 
  $(Z_1,Z_5)\perp(Z_2,Z_6)$.
  Furthermore, using the stochastic representation of the skew-normal distribution via
  conditioning \citep[][Section 5.1.3]{Azzalini2014}, we can represent the
  random vectors $(X_1,X_2)$ and $(X_1^\circ,X_2^\circ)$ satisfying the assumptions in
  the lemma as 
  $(X_1,X_2)=_d(Z_3,Z_4)\given\{Z_7>0\}$, $X_1^\circ=_d(Z_1\given Z_5>0)$ and
  $X_2^\circ=_d(Z_2\given Z_6>0)$. Next, for $a_1,a_2,b>0$, define
  \begin{equation}\label{eq:Y-S}
    \begin{bmatrix}Y_1\\Y_2\end{bmatrix}
    =_d
    \left.\left(\begin{bmatrix}a_1Z_1\\a_2Z_2\end{bmatrix}-b\begin{bmatrix}Z_3\\Z_4\end{bmatrix}\right)\,\right|\,\{Z_5>0,Z_6>0,Z_7>0\}.
  \end{equation}
  To prove the lemma it suffices to show $\PP\{Y_1>0,Y_2>0\}=8\Phi_5(0;\mathrm
  P_S(\rho,\delta,v))$. To see this, define a normal random vector
  $W=(W_1,\ldots,W_5)^\top$ by
  \[
    W =
    \begin{bmatrix} \mathrm A&\mathrm 0\\\mathrm 0&\mathrm I_3 \end{bmatrix}Z,
    \qquad
    \mathrm A =
    \begin{bmatrix}a_1&0&-b&0\\0&a_2&0&-b\end{bmatrix} = 
    \begin{bmatrix}\mbox{diag}(a_1,a_2)&-b\mathrm I_2\end{bmatrix},
  \]
  so that
  \begin{equation}\label{eq:Ws-S}
    W_1 = a_1Z_1-bZ_3,\quad
    W_2 = a_2Z_2-bZ_4,\quad
    W_3 = Z_5,\quad
    W_4 = Z_6,\quad
    W_5 = Z_7.
  \end{equation}
  By construction, $W\sim\mathrm N(0,\Sigma_W)$, where
  \[
    \Sigma_W
    =
    \begin{bmatrix}\mathrm A\mathrm B\mathrm A^\top&\mathrm A\mathrm C\\\mathrm
    C^\top \mathrm A^\top&\mathrm I_3 \end{bmatrix}
    =
    \left[
    \begin{array}{cc|ccc}
    a_1^2+b^2&b^2\rho&a_1\delta_1&0&-b\delta_1\\
    b^2\rho&a_2^2+b^2&0&a_2\delta_2&-b\delta_2\\ \cline{1-5}
    a_1\delta_1&0&1&0&0\\
    0&a_2\delta_2&0&1&0\\
    -b\delta_1&-b\delta_2&0&0&1
    \end{array}\right].
  \]
  By \eqref{eq:Y-S} and \eqref{eq:Ws-S}, 
  $(Y_1,Y_2)=_d(W_1,W_2)\given\{W_3>0,W_4>0,W_5>0\}$, and then
  \[
    \PP\{Y_1>0,Y_2>0\}
    = \PP\{W_1>0,W_2>0\given W_3>0,W_4>0,W_5>0\}
    = \frac{\PP\{W_1>0,W_2>0,W_3>0,W_4>0,W_5>0\}}{\PP\{W_3>0,W_4>0,W_5>0\}},
  \]
  which is equal to $8\,\Phi_5(0;\varrho(\Sigma_W))$.
  The proof is then complete upon noticing that $\varrho(\Sigma_W)=\mathrm P_S(\rho,\delta,v)$.
\end{proof}

\begin{lemma}\label{lemma:Phi_derivative}
  Let $\mathrm P$ be a $d\times d$ correlation matrix whose upper-left $2\times 2$
  sub-matrix is given by $\varrho(\rho)$ with $\rho\in(-1,1)$. Then, we have
  \[
    \frac{\partial\Phi_d(0;\mathrm P)}{\partial\rho}=
    \begin{cases}
      \big(2\pi\sqrt{1-\rho^2}\big)^{-1} >0, & d=2, \\
      \big(2\pi\sqrt{1-\rho^2}\big)^{-1}\,
      \Phi_{d-2}\big(0;\mathrm P/\varrho(\rho)\big) >0, & d>2,
    \end{cases}
  \]
  where $\mathrm P/\varrho(\rho)$ is the Schur complement of $\varrho(\rho)$ in
  $\mathrm P$.
\end{lemma}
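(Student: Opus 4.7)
The plan is to handle the two cases separately. For $d=2$, the claim is immediate from identity \eqref{eq:dPhi2-rho} evaluated at $(x_1,x_2)=(0,0)$, since $\phi_2(0,0;\rho)=1/(2\pi\sqrt{1-\rho^2})$ by direct substitution into the bivariate normal pdf. The positivity is then obvious for $\rho\in(-1,1)$.

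For $d>2$, the main tool I would use is the Plackett partial differential equation for the multivariate normal density: for any $d$-variate centered normal density with correlation matrix $\mathrm P$ and for $i\ne j$,
\[
\frac{\partial\phi_d(x;\mathrm P)}{\partial \rho_{ij}}
=\frac{\partial^2\phi_d(x;\mathrm P)}{\partial x_i\,\partial x_j},
\]
which can be verified by straightforward differentiation of the explicit Gaussian density. Applying this identity with $(i,j)=(1,2)$, differentiating $\Phi_d(0;\mathrm P)=\int_{x\le 0}\phi_d(x;\mathrm P)\diff x$ under the integral sign, and then integrating out the $x_1$ and $x_2$ coordinates via the fundamental theorem of calculus yields
\[
\frac{\partial\Phi_d(0;\mathrm P)}{\partial\rho}
=\int_{x_{3:d}\le 0}\phi_d(0,0,x_{3:d};\mathrm P)\diff x_{3:d}.
\]
Factoring the joint density at $(x_1,x_2)=(0,0)$ as $\phi_d(0,0,x_{3:d};\mathrm P)=\phi_2(0,0;\rho)\,\phi_{d-2}(x_{3:d}\mid X_1=X_2=0)$, the remaining integral equals $\Phi_{d-2}(0;\mathrm P/\varrho(\rho))$, because the conditional distribution of $(X_3,\ldots,X_d)$ given $X_1=X_2=0$ is centered Gaussian with covariance matrix given by the Schur complement of $\varrho(\rho)$ in $\mathrm P$. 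Combining these pieces gives the stated formula.

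Positivity for $d>2$ follows because $\mathrm P/\varrho(\rho)$ is positive definite whenever $\mathrm P$ and $\varrho(\rho)$ both are (standard Schur complement fact), and hence $\Phi_{d-2}(0;\mathrm P/\varrho(\rho))$ is the orthant probability of a non-degenerate Gaussian, which is strictly positive since such a Gaussian has full support on $\RR^{d-2}$. The main conceptual obstacle is really just bookkeeping: recognizing that the conditional covariance in the factorization above is precisely the Schur complement $\mathrm P/\varrho(\rho)$, and justifying the differentiation under the integral (routine via dominated convergence, since $\partial\phi_d/\partial\rho$ is integrable and bounded uniformly in $\rho$ on compact subsets of $(-1,1)$). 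Everything else reduces to the Plackett PDE, which is essentially a direct computation.
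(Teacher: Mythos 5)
Your proposal is correct and follows essentially the same route as the paper's proof: the Plackett identity $\partial\phi_d/\partial\rho_{12}=\partial^2\phi_d/\partial x_1\partial x_2$, integration of the $x_1,x_2$ coordinates, and factorization of the density at $(x_1,x_2)=(0,0)$ via the conditional Gaussian whose covariance is the Schur complement $\mathrm P/\varrho(\rho)$. The only additions are your explicit remarks on dominated convergence and on positivity of the Schur complement, which the paper leaves implicit.
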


\begin{proof}[\textbf{\upshape Proof of Lemma~\ref{lemma:Phi_derivative}:}]
  In the case of $d=2$, we have $\mathrm P=\varrho(\rho)$ and hence $\Phi_2(0;\mathrm
  P)=\Phi_2(0;\rho)$ and $\phi_2(x;\mathrm P)=\phi_2(x;\rho)$. Then, 
  \[
    \frac{\partial\Phi_2(0;\rho)}{\partial\rho}
    = \int_{(-\infty,0]^2}\frac{\partial^2\phi_2(x;\rho)}{\partial x_1\partial x_2}\diff x
    = \phi_2(0,0;\rho)=\frac{1}{2\pi\sqrt{1-\rho^2}}>0,
  \]
  where the second equality above holds due to
  $\partial\phi(x;\rho)/\partial\rho=\partial^2\phi(x;\rho)/\partial x_1\partial x_2$
  \citep[][Equation~(3)]{Plackett1954}. If $d>2$, note that
  \[
    \frac{\partial\Phi_d(0;\mathrm P)}{\partial\rho}
    =
    \int_{(-\infty,0]^{d-2}}\phi_d(0,0,x_3,\ldots,x_d;\mathrm P)\diff x_3\cdots \diff x_d,
  \]
  and then, by the conditioning property of multivariate normal distributions, 
  \[
    \phi_d(0,0,x_3,\ldots,x_d;\mathrm P)    
    = \phi_2(0,0;\rho)\phi_{d-2}(x_3,\ldots,x_d;\mathrm P/\varrho(\rho))
    = \frac{\phi_{d-2}(x_3,\ldots,x_d;\mathrm P/\varrho(\rho))}{2\pi\sqrt{1-\rho^2}}.
  \]
  This leads to the claim in the lemma for $d>2$ case.
\end{proof}

\begin{lemma}\label{lemma:Phi4-derivative-delta}
Let $\mathrm P_\tau(\rho,\delta,v)$ be the correlation matrix-valued function defined in \eqref{eq:P-tau}, where $\rho\in(-1,1)$, $\delta=(\delta_1,\delta_2)$ and $v=(v_1,v_2)$, and let $V=(V_1,V_2)$ be the random vector defined in Theorem~\ref{thm:MSN-tau-S}~(i). We have 
\[
  \frac{\partial\Phi_4(0;\mathrm P_\tau(\rho,\delta,V))}{\partial\delta_1}
  =_d
  \frac{V_1\arcsin \tilde V(\delta_1,\delta_2)}{2\pi^2\sqrt{1-V_1^2\delta_1^2}},
  \qquad
  \frac{\partial\Phi_4(0;\mathrm P_\tau(\rho,\delta,V))}{\partial\delta_2}
  =_d
  \frac{V_1\arcsin \tilde V(\delta_2,\delta_1)}{2\pi^2\sqrt{1-V_1^2\delta_2^2}},
\]
where
\[
  \tilde V(\delta_1,\delta_2)
  \coloneqq \frac{V_2(\delta_2-\rho \delta_1)}{\sqrt{1-\delta_1^2}\sqrt{1-\rho^2-(\delta_1^2-2\rho\delta_1\delta_2+\delta_2^2)V_1^2}}.
\]
\end{lemma}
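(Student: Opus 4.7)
The plan is to apply Plackett's identity for the multivariate normal orthant probability,
\[
\frac{\partial \Phi_d(0;\Sigma)}{\partial \sigma_{ij}}
= \phi_2(0,0;\sigma_{ij})\,\Phi_{d-2}\bigl(0;R^{(ij)}\bigr),
\]
where $R^{(ij)}$ denotes the conditional covariance matrix of the remaining coordinates given $X_i = X_j = 0$ (i.e.\ the Schur complement of the $(i,j)$-block in $\Sigma$), together with the chain rule. Inspection of \eqref{eq:P-tau} shows that $\delta_1$ enters $\mathrm P_\tau(\rho,\delta,V)$ only through the two entries $\sigma_{13} = V_1\delta_1$ and $\sigma_{14} = V_2\delta_1$, so
\[
\frac{\partial \Phi_4(0;\mathrm P_\tau)}{\partial \delta_1}
= V_1\,\phi_2(0,0;V_1\delta_1)\,\Phi_2\bigl(0;R^{(1,3)}\bigr)
+ V_2\,\phi_2(0,0;V_2\delta_1)\,\Phi_2\bigl(0;R^{(1,4)}\bigr).
\]

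The next step is a Schur-complement calculation. Writing $\mathrm P_\tau$ as a block matrix with diagonal blocks $\varrho(\rho)$ and $I_2$ and off-diagonal block $\delta V^\top$, and exploiting the key identity $V_1^2 + V_2^2 = 1$ (which holds by construction of $V$ in Theorem~\ref{thm:MSN-tau-S}~(i)), I compute $R^{(1,3)} = \Sigma_{BB} - \Sigma_{BA}\Sigma_{AA}^{-1}\Sigma_{AB}$ with $\{A,B\} = \{\{1,3\},\{2,4\}\}$. After algebraic simplification, its off-diagonal correlation reduces exactly to $\tilde V(\delta_1,\delta_2)$ as defined in the lemma; the analogous calculation for $R^{(1,4)}$ gives the same expression but with $V_1$ and $V_2$ swapped. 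Applying the identities $\phi_2(0,0;r) = 1/(2\pi\sqrt{1-r^2})$ and $\Phi_2(0,0;r) = 1/4 + \arcsin(r)/(2\pi)$ then expresses the partial derivative as an explicit sum of four terms, two of which are constant-type contributions proportional to $1/4$ and two of which carry an $\arcsin$ factor.

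The final step is to collapse these four terms into the single claimed expression by invoking the distributional symmetry $(V_1,V_2) \overset{d}{=} (-V_2,-V_1)$, which follows from the exchangeability of $W_1,W_2$ (see Remark~\ref{remark:MSN-mix-dist-1}). Under this swap $V_i\delta_1 \mapsto -V_{-i}\delta_1$ (so the $\phi_2(0,0;\cdot)$ factors are preserved since $\phi_2(0,0;r)$ is even in $r$), and $\tilde V(\delta_1,\delta_2)$ maps to minus its $V_1\leftrightarrow V_2$ counterpart (since $\arcsin$ is odd). Consequently, in distribution the two $\arcsin$ contributions coincide and combine into the single form $V_1\arcsin\tilde V(\delta_1,\delta_2)/[2\pi^2\sqrt{1-V_1^2\delta_1^2}]$, while the two constant-$1/4$ contributions map to each other with opposite signs and drop out. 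The derivation for $\partial/\partial\delta_2$ is structurally identical, with the roles of $\delta_1$ and $\delta_2$ interchanged throughout.

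The main obstacle will be the Schur-complement algebra in the second step: verifying, after block inversion and collecting terms using $V_1^2 + V_2^2 = 1$, that the off-diagonal correlation of $R^{(1,3)}$ reduces to precisely
$V_2(\delta_2-\rho\delta_1)\bigl/\bigl(\sqrt{1-\delta_1^2}\,\sqrt{1-\rho^2 - V_1^2(\delta_1^2 - 2\rho\delta_1\delta_2 + \delta_2^2)}\bigr)$, and that the corresponding conditional variances produce the denominator in the stated form. The distributional reduction in the third step is conceptually transparent but requires careful sign-tracking, since the map $(V_1,V_2)\mapsto(-V_2,-V_1)$ alters both the numerator of $\tilde V$ and the $V_i^2$ term inside its square-root denominator; each of these sign changes must be accounted for to see the clean cancellation of the constant terms.
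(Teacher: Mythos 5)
Your proposal is essentially the paper's own proof: the paper likewise differentiates the orthant probability with respect to the two entries of $\mathrm P_\tau$ that contain $\delta_1$ (it does so by permuting the matrix into four versions $\mathrm P_a,\mathrm P_b,\mathrm P_c,\mathrm P_d$ and applying its Lemma~\ref{lemma:Phi_derivative}, which is exactly your direct use of Plackett's identity), computes the same $2\times2$ Schur complements to get $\Phi_2(0;\cdot)=\tfrac14+\arcsin(\cdot)/(2\pi)$, and then invokes $(V_1,V_2)\overset{d}{=}-(V_2,V_1)$ to reduce the two-term sum to the stated single term. One shared caveat: the final ``collapse'' is really a statement about expectations rather than a pointwise cancellation --- the two constant-$\tfrac14$ contributions sum to a random variable that is symmetric about zero (hence mean zero) but not almost surely zero (e.g.\ at $\rho=\delta=0$ the derivative equals $(V_1+V_2)/(8\pi)$ while the claimed right-hand side is $0$) --- so the displayed ``$\overset{d}{=}$'' should be read as equality of expectations; this looseness is present in the paper's own proof as well and is harmless for the lemma's only downstream use, which is to sign $\EE\,\partial\Phi_4/\partial\delta_i$.
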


\begin{proof}[\textbf{\upshape Proof of Lemma~\ref{lemma:Phi4-derivative-delta}:}]
  Consider the following four matrices:
  \[
    \mathrm P_a=\left[
    \begin{array}{cc|cc} 1&&&\\a&1&&\\\cline{1-4} \rho&b&1&\\c&0&d&1 \end{array}\right],\quad
    \mathrm P_b=\left[
    \begin{array}{cc|cc} 1&&&\\b&1&&\\\cline{1-4} \rho&a&1&\\d&0&c&1 \end{array}\right], \quad
    \mathrm P_c=\left[
    \begin{array}{cc|cc} 1&&&\\c&1&&\\\cline{1-4} a&0&1&\\\rho&d&b&1 \end{array}\right],\quad
    \mathrm P_d=\left[
    \begin{array}{cc|cc} 1&&&\\d&1&&\\\cline{1-4} b&0&1&\\\rho&c&a&1 \end{array}\right],
  \]
  where $a=v_1\delta_1$, $b=v_1\delta_2$, $c=v_2\delta_1$ and $d=v_2\delta_2$.
  It is clear that 
  $\Phi_4(0;\mathrm P_\tau(\rho,\delta,v))=\Phi_4(0;\mathrm P_a)=\Phi_4(0;\mathrm
  P_b)=\Phi_4(0;\mathrm P_c)=\Phi_4(0;\mathrm P_d)$, and, by
  total differentiation and Lemma~\ref{lemma:Phi_derivative}, we can deduce that
  \begin{align}
    \frac{\partial}{\partial\delta_1}\Phi_4(0;\mathrm P_\tau)
    &=\frac{v_1\Phi_2\big(0;\mathrm P_a/\varrho(a)\big)}{2\pi\sqrt{1-a^2}}
    +\frac{v_2\Phi_2\big(0;\mathrm P_c/\varrho(c)\big)}{2\pi\sqrt{1-c^2}},\label{eq:Phi4-d1}\\
    \frac{\partial}{\partial\delta_2}\Phi_4(0;\mathrm P_\tau)
    &=\frac{v_1\Phi_2\big(0;\mathrm P_b/\varrho(b)\big)}{2\pi\sqrt{1-b^2}}
    +\frac{v_2\Phi_2\big(0;\mathrm P_d/\varrho(d)\big)}{2\pi\sqrt{1-d^2}}.\label{eq:Phi4-d2}
  \end{align}
  Some matrix algebra yields
  \begin{equation} \label{eq:Phi2-abcd}
    \Phi_2\big(0;\mathrm P_s/\varrho(s)\big)
    =
    \frac14+\frac{\arcsin u_s(a,b,c,d)}{2\pi},
    \qquad s\in\{a,b,c,d\},
  \end{equation}
  where
  \begin{align*}
    u_a(a,b,c,d)
    &=\frac{d(1-a^2)-c(\rho-ab)}{\sqrt{1-a^2-c^2}\sqrt{1-\rho^2-(a^2-2\rho ab+b^2)}},\ \
    u_b(a,b,c,d)
    =\frac{c(1-b^2)-d(\rho-ab)}{\sqrt{1-b^2-d^2}\sqrt{1-\rho^2-(a^2-2\rho ab+b^2)}},\\
    u_c(a,b,c,d)
    &=\frac{b(1-c^2)-a(\rho-cd)}{\sqrt{1-a^2-c^2}\sqrt{1-\rho^2-(c^2-2\rho cd+d^2)}}, \ \ 
    u_d(a,b,c,d)
    =\frac{a(1-d^2)-b(\rho-cd)}{\sqrt{1-b^2-d^2}\sqrt{1-\rho^2-(c^2-2\rho cd+d^2)}}.
  \end{align*}
  Using the fact that $(V_1,V_2)=_d -(V_2,V_1)$, we can verify: 
  \begin{align}
    \tilde V(\delta_1,\delta_2)=u_a(V_1\delta_1,V_1\delta_2,V_2\delta_1,V_2\delta_2)
    &=_d
    -u_c(V_1\delta_1,V_1\delta_2,V_2\delta_1,V_2\delta_2),\label{eq:ua}\\
    \tilde V(\delta_2,\delta_1)=u_b(V_1\delta_1,V_1\delta_2,V_2\delta_1,V_2\delta_2)
    &=_d
    -u_d(V_1\delta_1,V_1\delta_2,V_2\delta_1,V_2\delta_2).\label{eq:ub}
  \end{align}
  Noticing $V_1=_d-V_2$, the first statement in the lemma holds due to
  \eqref{eq:Phi4-d1}, \eqref{eq:Phi2-abcd}, \eqref{eq:ua}, and the second statement
  holds due to \eqref{eq:Phi4-d2}, \eqref{eq:Phi2-abcd}, \eqref{eq:ub}.
\end{proof}

\subsection{Proofs of the results in Section~\ref{sec:rank-corr-MN}}
\label{sec:proof-sec4}

\smallskip

\begin{proof}[\textbf{\upshape Proof of Theorem \ref{thm:MN-tau-S}:}] 
  Since $X_1$ and $X_2$ have continuous cdf's, their Kendall's tau and Spearman's rho
  depend solely on their copula.
  It therefore suffices to consider, without loss of generality,
  $X=(X_1,X_2)^\top\sim\mathrm{MN}(0,\varrho(\rho),\beta,F)$, which admits the
  representation $X=W\beta+\sqrt{W}Z$ for some $Z\sim\mathrm N(0,\varrho(\rho))$,
  $W\sim F$, and $W\perp Z$.
  
  To derive Kendall's tau of $X_1$ and $X_2$ in part (i), we define
  $X^\star\coloneqq W^\star\beta+\sqrt{W^\star}Z^\star$ where $W^\star$ and $Z^\star$
  are independent copies of $W$ and $Z$ respectively, and $W^\star$ is independent
  of $Z^\star$. By construction, $X^\star$ is an independent copy of $X$, and
  \begin{align} \label{eq:Y1Y2-0}
  \left.\begin{bmatrix}X^\star\\X\end{bmatrix} \,\right|\,
  \{W,W^\star\} \sim \mathrm N\left(\begin{bmatrix}W^\star\beta\\W\beta\end{bmatrix},\ 
  \begin{bmatrix}W^\star\varrho(\rho)&\mathrm 0\\\mathrm 0&W\varrho(\rho)\end{bmatrix}\right).
  \end{align}
  Now, we define $Y=(Y_1,Y_2)^\top\coloneqq X-X^\star$. Note that, by
  Lemma~\ref{lemma:tau-rhoS-X}~(i), Kendall's tau of $X_1$ and $X_2$ is given by
  $\tau = 4\,\PP\{Y_1<0,Y_2<0\}-1$. To obtain the joint probability of $Y_1<0$ and
  $Y_2<0$, we first deduce from \eqref{eq:Y1Y2-0} that 
  \begin{equation} \label{eq:Y1Y2}
    Y\given\{W,W^\star\} \sim 
    \mathrm N\big((W-W^\star)\,\beta,\ (W+W^\star)\,\varrho(\rho)\big),
  \end{equation}
  and note that $\PP\{Y_1<0,Y_2<0 \given W,W^\star\} = \PP\{S_1<V\beta_1,\,
  S_2<V\beta_2 \given W,W^\star\}$, with
  $S_i:=[Y_i-(W-W^\star)\beta_i]\,/\,\sqrt{W+W^\star}$ for $i=1,2$, and $V$ is
  defined in part~(i) of the theorem.
  Since \eqref{eq:Y1Y2} implies $(S_1,S_2)^\top \given \{W,W^\star\}\sim\mathrm
  N\left(0,\varrho(\rho)\right)$, we have 
  $\PP\{Y_1<0,Y_2<0 \given W,W^\star\} 
  =\PP\{S_1<V\beta_1,\, S_2<V\beta_2 \given W,W^\star\}
  =\Phi_2(V\beta_1,V\beta_2;\rho)$.
  By the law of iterated expectations, $\PP\{Y_1<0,Y_2<0\} = \EE\,\Phi_2(V\beta_1,V\beta_2;\rho)$.
  Part~(i) of the theorem then holds immediately due to Lemma~\ref{lemma:tau-rhoS-X}~(i).
  
  To derive Spearman's rho of $X_1$ and $X_2$ in part (ii), we
  define $X^\circ\coloneqq (X_1^\circ,X_2^\circ)^\top$ with
  $X_i^\circ=\beta_iW_i+\sqrt{W_i}Z^\circ_i$ for $i\in\{1,2\}$,
  where $W_1,W_2 \sim$ i.i.d.$\,F$ and are independent of $W$, and $Z^\circ_1,
  Z^\circ_2\sim$ i.i.d.$\,\mathrm N(0,1)$ and are independent of $W_1,W_2,W$ and $Z$.
  By construction, $X^\circ$ is independent of $X$, has independent marginals, and
  $X_i^\circ =_d X_i$ for $i\in\{1,2\}$. Moreover, 
  \[
    \left.\begin{bmatrix}X^\circ\\X \end{bmatrix}\,\right|\,\{W,W_1,W_2\}
    \sim\mathrm N\left(\,
    \begin{bmatrix}W_1\beta_1\\W_2\beta_2\\W\beta\end{bmatrix}, \ \
    \begin{bmatrix}\mbox{diag}(W_1,W_2) & 0\\ 0 & W\varrho(\rho)\end{bmatrix}\right).
  \]
  Next, define $Y^\circ=(Y^\circ_1,Y^\circ_2)^\top:=X-X^\circ$. By
  Lemma~\ref{lemma:tau-rhoS-X}~(ii), Spearman's tau of $X_1$ and $X_2$ is 
  $\rho_S = 12\,\PP\{Y^\circ_1<0,Y^\circ_2<0\}-3$. To obtain the joint probability of
  $Y^\circ_1<0$ and $Y^\circ_2<0$, we first note that
  $Y^\circ\given\{W,W_1,W_2\}\sim\mathrm N(\mu_Y,\Sigma_Y)$ where 
  \[
  \mu_Y    
  =
  \begin{bmatrix}(W-W_1)\beta_1\\(W-W_2)\beta_2\end{bmatrix},\quad
  \Sigma_Y 
  =
  \begin{bmatrix}W_1&0\\0&W_2\end{bmatrix}+W\begin{bmatrix}1&\rho\\\rho&1\end{bmatrix}
  =
  \begin{bmatrix}W_1+W & W\rho\\ W\rho & W_2+W\end{bmatrix}.
  \]
  Define $S_i^\circ=[Y_i^\circ-(W-W_i)\,\beta_i]\,/\,\sqrt{W+W_i}$ for $i\in\{1,2\}$.
  Then,
  $(S_1^\circ, S_2^\circ)^\top \given \{W,W_1,W_2\}\sim\mathrm N(0,\varrho(V_3\rho))$
  where $V_3=W/\sqrt{(W_1+W)(W_2+W)}$. Moreover, given $V_i\coloneqq (W_i-W)/\sqrt{W_i+W}$
  for $i\in\{1,2\}$, we have
  \[
  \PP\{Y^\circ_1 < 0,Y_2^\circ < 0\given W,W_1,W_2\}
  =\PP\{S^\circ_1 < V_1\beta_1,\, S_2^\circ < V_2\beta_2 \given W,W_1,W_2\}
  =\Phi_2(V_1\beta_1,V_2\beta_2;V_3\rho ).
  \]
  By the law of iterated expectations, $\PP\{Y^\circ_1<0,Y^\circ_2<0\} = \EE\,\Phi_2(V_1\beta_1,V_2\beta_2;V_3\rho )$. Part~(ii) of the theorem then holds immediately due to Lemma~\ref{lemma:tau-rhoS-X}~(ii).
\end{proof}

\begin{proof}[\textbf{\upshape Proof of Proposition \ref{prop:MN-tau-S-properties}:}] 
  Part (i) of the proposition follows directly from the expressions of $\tau$ and
  $\rho_S$ given in Theorem~\ref{thm:MN-tau-S} and the exchangeability of the
  bivariate normal cdf: $\Phi_2(x_1,x_2;\rho)=\Phi_2(x_2,x_1;\rho)$.
  
  Next, we prove part~(ii). First, $\tau(\rho,\beta,F)=\tau(\rho,-\beta,F)$ follows
  directly from the expression of $\tau$ in \eqref{eq:MN-tau} and the fact that $V=_d
  -V$. To show that $\rho_S(\rho,\beta,F)=\rho_S(\rho,-\beta,F)$, it suffices to
  prove that 
  \begin{equation}\label{eq:EPhi2-}
  \EE\,\Phi_2(\beta_1V_1,\beta_2V_2;\rho V_3)=\EE\,\Phi_2(-\beta_1V_1,-\beta_2V_2;\rho V_3),
  \end{equation}
  where the mixing variables $(V_1,V_2,V_3)$ are defined in Theorem~\ref{thm:MN-tau-S}. Evaluating the decomposition of $\Phi_2(\,\cdot\,;\rho)$ in \eqref{eq:Phi2-rep-2} at $(-x_1,-x_2)$ and then using the identity $T(h,a)=T(-h,a)$, see e.g. (B.19) in \citep[][p. 235]{Azzalini2014},
  \begin{align}
  \Phi_2(-x_1,-x_2;\rho)
  = \frac12\,[\Phi(-x_1)+\Phi(-x_2)] - a(-x_1,-x_2) 
  -T\left(x_1,\,\frac{x_2-\rho x_1}{x_1\sqrt{1-\rho^2}}\right)
  -T\left(x_2,\,\frac{x_1-\rho x_2}{x_2\sqrt{1-\rho^2}}\right). \label{eq:Phi2-rep-2-}
  \end{align}
  Since the last two terms in \eqref{eq:Phi2-rep-2-} and in \eqref{eq:Phi2-rep-2} are
  identical, \eqref{eq:EPhi2-} holds if
  $\EE\,\Phi(\beta_iV_i)=\EE\,\Phi(-\beta_iV_i)$ for $i\in\{1,2\}$, and
  $\EE\,a(\beta_1V_1,\beta_2V_2)=\EE\,a(-\beta_1V_1,-\beta_2V_2)$. Since these
  conditions are guaranteed by the fact that $V_i=_d -V_i$ for $i\in\{1,2\}$ and
  by the definition of function $a$, we can show that \eqref{eq:EPhi2-} holds, and
  hence $\rho_S(\rho,\beta,F)=\rho_S(\rho,-\beta,F)$.
  
  To prove part~(iii), first note that $\partial \tau/\partial\rho =
  4\EE\,\phi_2(\beta_1V,\beta_2V;\rho)$ and $\partial\rho_S/\partial\rho =
  12\EE\,V_3\phi_2(\beta_1V_1,\beta_2V_2;\rho V_3)$ by \eqref{eq:dPhi2}. Both
  partial derivatives are strictly positive because the normal pdf $\phi_2$ is
  strictly positive and $V_3$ is a strictly positive random variable. This proves
  part~(iii).
  
  For part (iv), note that Theorem~\ref{thm:MN-tau-S} implies 
  $\tau(0,\beta,F) = 4\EE\Phi_2(\beta_1V, \beta_2V; 0) - 1
   = 4\Cov(\Phi(\beta_1 V), \Phi(\beta_2 V))$ and 
  $\rho_S(0,\beta,F) = 12\EE\Phi_2(\beta_1V, \beta_2V; 0) - 3
   = 12\Cov(\Phi(\beta_1 V), \Phi(\beta_2 V))$.
  So, it suffices to show 
  $\sgn \Cov(\Phi(\beta_1 V), \Phi(\beta_2 V))=\sgn\beta_1\beta_2$.
  First, if $\beta_1\beta_2=0$ (at least one of $\beta_1$ and $\beta_2$ is zero),
  then $\Cov(\Phi(\beta_1 V), \Phi(\beta_2 V))=0$.
  Next, let $V^\star$ be an independent copy of $V$. Using the identity 
  $2\Cov(\Phi(\beta_1 V), \Phi(\beta_2 V))
  = \EE (\Phi(\beta_1 V) - \Phi(\beta_1 V^\star))
  (\Phi(\beta_2 V) - \Phi(\beta_2 V^\star)) $ 
  and the fact that $\Phi$ is strictly increasing, we we can easily see that
  $\Cov(\Phi(\beta_1 V), \Phi(\beta_2 V))$ is strictly positive if and only if
  $\beta_1\beta_2>0$ and strictly negative if and only if $\beta_1\beta_2<0$.
  This completes the proof of part (iv).
  
  Lastly, with part~(iii) in mind, to show parts~(v) and (vi),
  it suffices to show that
  \begin{equation} \label{eq:tau-rhoS}
    \tau(-1,\beta,F) = \rho_S(-1,\beta,F) = -1 \mbox{ if and only if } \beta_1=-\beta_2,
    \quad
    \tau( 1,\beta,F) = \rho_S( 1,\beta,F) =  1 \mbox{ if and only if } \beta_1= \beta_2.
  \end{equation}
  Recall that for continuous margins, $\tau=-1$ (resp. $\tau=1$) 
  if and only if the copula is the Fr\'echet lower bound (resp. Fr\'echet upper bound), 
  which holds if and only if $\rho_S=-1$ (resp. $\rho_S=1$).
  Therefore, to show \eqref{eq:tau-rhoS},
  it suffices to only show the conditions for $\tau$.
  First, by Theorem~\ref{thm:MN-tau-S}, 
  $\tau(-1,\beta,F) = 4\EE\,\Phi_2(\beta_1 V,\beta_2V;-1)-1$.
  For any fixed $v\in\RR$ and $Z\sim N(0,1)$, we have
  $\Phi_2(\beta_1v,\beta_2v\,;1)
  = \PP(Z\leq \beta_1v, -Z\leq \beta_2v)
  = \PP(-\beta_2v\leq Z\leq \beta_1v)\cdot 1\{(\beta_1+\beta_2)v\geq 0\}$.
  Then, it is easy to verify that
  \begin{align*}
  \EE\,\Phi_2(\beta_1 V,\beta_2V;-1)
  &=
  \EE
  \left[\Phi(\beta_1 V)-\Phi(-\beta_2V)\right]
  \mathrm 1\{(\beta_1+\beta_2)V\geq0\} 
  \begin{cases}
  = 0, & \mbox{if } \beta_1+\beta_2=0, \\
  >0,  & \mbox{if } \beta_1+\beta_2 \neq 0.
  \end{cases}
  \end{align*}
  Therefore, $\tau(-1,\beta,F) = -1$ if and only if $\beta_1=-\beta_2$, which proves
  the first statement in \eqref{eq:tau-rhoS} for $\tau$.
  To show the second statement in \eqref{eq:tau-rhoS} for $\tau$, note that
  Theorem~\ref{thm:MN-tau-S} implies that
  $\tau(1,\beta,F) = 4\EE\,\Phi_2(\beta_1 V,\beta_2V;1)-1$.
  For any fixed $v\in\RR$ and $Z\sim N(0,1)$, we have 
  $\Phi_2(\beta_1v,\beta_2v\,;1) 
  = \PP(Z\leq \beta_1v, Z\leq \beta_2v) 
  = \Phi(\min\{\beta_1v,\beta_2v\})$.
  Without loss of generality, assume $\beta_1\leq \beta_2$. Then,
  \begin{align*}
  \EE\,\Phi_2(\beta_1 V,\beta_2V;1)
  &=
  \EE\,\Phi(\min\{\beta_1 V,\beta_2V\})\mathrm 1\{V>0\}
  +
  \EE\,\Phi(\min\{\beta_1 V,\beta_2V\})\mathrm 1\{V<0\}\\
  &=
  \EE\,\Phi(\beta_1 V)\mathrm 1\{V>0\}
  +
  \EE\,\Phi(\beta_2V)\mathrm 1\{V<0\}\\
  &=
  \EE
  \left[\Phi(\beta_1 V)+\Phi(-\beta_2 V)\right]
  \mathrm 1\{V>0\}
  \begin{cases}
  =1/2,  & \mbox{if } \beta_1=\beta_2, \\
  \in(0, 1/2), & \mbox{if } \beta_1\neq \beta_2,
  \end{cases}
  \end{align*}
  where we used the fact that $V=_d-V$ in the last two equalities.
  Therefore, $\tau(1,\beta,F) = 1$ if and only if $\beta_1=\beta_2$, and 
  the second statement in \eqref{eq:tau-rhoS} for $\tau$ is proved.
\end{proof}

\begin{proof}[\textbf{\upshape Proof of Corollary \ref{cor:equi-skewness-MN}:}] 
  The expressions for $\tau$ and $\rho_S$ in the corollary follows directly from 
  Theorem~\ref{thm:MN-tau-S} under the equi-skewness condition.
  For the partial derivatives in part~(i), note that
  \[
  \frac{\partial}{\partial b}\,\tau(\rho,b,F)
  = 4\,\EE\,\frac{\partial}{\partial b}\Phi^{\mathrm s}(bV;\alpha_\rho) 
  = 4\,\EE\,V\phi^{\mathrm s}(bV;\alpha_\rho) >0,
  \]
  where $\alpha_\rho\coloneqq\sqrt{(1-\rho)/(1+\rho)}$, and
  the inequality holds due to Lemma~\ref{lemma:positive}~(i).
  Furthermore, Lemma~\ref{lemma:positive}~(ii) implies 
  $\partial^2\tau/\partial b\partial \alpha_\rho>0$.
  Since $\partial\alpha_\rho/\partial\rho=-(1-\rho)^{-1/2}(1+\rho)^{-3/2}<0$,
  we have $\partial^2\tau/\partial b\partial\rho<0$.
  This completes the proof.
\end{proof}

\begin{proof}[\textbf{\upshape Proof of Proposition~\ref{prop:single-skew-MN}:}] 
  We first prove part~(i) of the proposition. By equation \eqref{eq:Phi2-rep} and a change of variables, 
  \begin{align}
  \Phi_2(x_1,x_2;\rho)
  &=\Phi(x_1)\Phi(x_2)+\frac1{2\pi}\int_0^{\arcsin\rho}\exp\left(-\frac{x_1^2+x_2^2-2x_1x_2\sin\theta}{2\cos^2\theta}\right) \diff\theta, \notag
  \end{align}
  from which we can easily deduce that
  \begin{align}
  \Phi_2(0,x;\rho)+\Phi_2(0,x;-\rho)=\Phi(x), \label{eq:Phi2-2}
  \end{align}
  for any $x\in\RR$ and $\rho\in(-1,1)$. Then, by Theorem~\ref{thm:MN-tau-S}~(i) and \eqref{eq:Phi2-2}, 
  \begin{align*}
  \tau(-\rho,(0,b),F)
  =4\,\EE\,\Phi_2(0,bV;-\rho)-1
  &=4\,\EE\,\Phi(bV)-4\,\EE\,\Phi_2(0,bV;\rho)-1
  =4\,\EE\,\Phi(bV)-2-\tau(\rho,(0,b),F).
  \end{align*}
  Since $V=_d -V$ and $\Phi(x)+\Phi(-x)=1$, we have $\EE\,\Phi(bV)=1/2$. It then
  follows immediately that $\tau(-\rho,(0,b),F)=-\tau(\rho,(0,b),F)$. Similarly, we can
  show $\rho_S(-\rho,(0,b),F)=-\rho_S(\rho,(0,b),F)$ by using
  Theorem~\ref{thm:MN-tau-S}~(ii) and \eqref{eq:Phi2-2} and noticing that $V_2=_d
  -V_2$. This completes the proof of part~(ii).
  
  To prove part~(ii), it suffices to show that $\tau(\rho,(0,b),F)$ and
  $\rho_S(\rho,(0,b),F)$ are both strictly increasing in $b$ for $b>0$ and $\rho<0$,
  provided part~(i) of the proposition and the fact that $\tau$ and $\rho_S$ are
  invariant under the sign change of the skewness vector
  (Proposition~\ref{prop:MN-tau-S-properties}~(ii)). Taking derivatives with respect
  to $\beta_1$ on both sides of \eqref{eq:MN-tau} and \eqref{eq:MN-S} and using the
  partial derivative given by \eqref{eq:dPhi2} yields
  \[
    \frac{\partial}{\partial\beta_1}\tau(\rho,\beta,F)
    =
    4\,\EE\,V\phi(\beta_1V)\Phi\left(\frac{(\beta_2-\rho\beta_1)V}{\sqrt{1-\rho^2}}\right),
    \qquad
    \frac{\partial}{\partial\beta_1}\rho_S(\rho,\beta,F)
    =
    12\,\EE\,V_1\phi(\beta_1V_1)\Phi\left(\frac{\beta_2V_2-\rho\beta_1V_1V_3}{\sqrt{1-\rho^2
    V_3^2}}\right).
  \]
  Evaluated at $\beta_1=b$ and $\beta_2=0$, together with the definition of $\phi^{\mathrm s}$, we obtain: 
  \[
    \frac{\partial}{\partial b}\,\tau(\rho,(0,b),F)
    =
    2\,\EE\,V\phi^{\mathrm s}\left(bV;\,\alpha(\rho)\right),
    \qquad
    \frac{\partial}{\partial b}\,\rho_S(\rho,(0,b),F)
    =
    6\,\EE\,V_1\phi^{\mathrm s}\left(bV_1;\,\alpha(\rho V_3)\right),
  \]
  where $\alpha(r)\coloneqq -r/\sqrt{1-r^2}$ for any $r\in(-1,1)$. Since $\rho<0$ and $V_3>0$ almost surely, $\alpha(\rho)>0$ and $\alpha(\rho V_3)>0$ almost surely. Moreover, as both $V$ and $V_1$ are symmetrically distributed, Lemma~\ref{lemma:positive}~(i) applies so that both partial derivatives above are strictly positive. This proves part~(iii) of the proposition.
\end{proof}

\subsection{Proofs of the results in Section~\ref{sec:rank-corr-MSN}}
\label{sec:proof-sec5}

\smallskip

\begin{proof}[\textbf{\upshape Proof of Theorem~\ref{thm:MSN-tau-S}:}] 
  Since $X_1$ and $X_2$ have continuous marginal cdf's, their rank correlations
  depend solely on their copula. It therefore suffices to consider, without loss of
  generality, $X=(X_1,X_2)\sim\mathrm{MSN}(0,\varrho(\rho),\alpha,F)$, which admits
  the representation $X=\sqrt{W}Z$ for some
  $Z=(Z_1,Z_2)\sim\mathrm{SN}(0,\varrho(\rho),\alpha)$, $W\sim F$, and $W$ is
  independent of $Z$.
  
  To derive Kendall's tau of $X_1$ and $X_2$ stated in part (i) of the theorem, we start by defining $X^\star=(X_1^\star,X_2^\star)^\top\coloneqq\sqrt{W^\star}Z^\star$ where $W^\star$ and $Z^\star$ are independent copies of $W$ and $Z$, respectively, and $W^\star$ is independent of $Z^\star$. Then, by construction, $X^\star$ is an independent copy of $X$. Then, by Lemma~\ref{lemma:tau-rhoS-X}~(i), Kendall's tau of $X_1$ and $X_2$ is given by 
  \begin{align}
  \tau 
  = 4\,\PP\{X_1<X^\star_1,\,X_2<X^\star_2\}-1
  = 4\,\EE\,\PP\left\{\sqrt{W}Z_1 < \sqrt{W^\star}Z^\star_1,\,\sqrt
  WZ_2<\sqrt{W^\star}Z^\star_2 \,\big|\, W,W^\star\right\}-1.
  \label{eq:pf-thm2-1}
  \end{align}
  Applying Lemma~\ref{lemma:orthant-prob-tau} to $(Z_1,Z_2)$ and $(Z^\star_1,Z_2^\star)$, we have 
  \begin{align}\label{eq:pf-thm2-2}
  \PP
  \left\{\sqrt{W}Z_1<\sqrt{W^\star}Z^\star_1, \sqrt{W}Z_2<\sqrt{W^\star}Z^\star_2
  \,\big|\, W,W^\star\right\} 
  =
  4\,\Phi_4\big(0;\mathrm P_\tau(\rho,\delta,S)\big),
  \end{align}
  where $S = \big(\sqrt{W^\star/(W^\star+W)},-\sqrt{W/(W^\star+W)}\big)$. Part~(i) of
  the theorem then holds due to \eqref{eq:pf-thm2-1}, \eqref{eq:pf-thm2-2}, and the
  fact that $S =_d V$ for $V$ being defined in part~(i) of the theorem.
  
  To derive Spearman's rho of $X_1$ and $X_2$ given in part (ii) of the theorem, we
  define $(Z_1^\circ,Z_2^\circ)$ to be a bivariate skew-normal vector satisfying
  $(Z_1^\circ,Z_2^\circ)\perp(Z_1,Z_2)$, $Z_1^\circ\perp Z_2^\circ$ and
  $Z_i^\circ=_dZ_i$ for $i\in\{1,2\}$.
  Next, define
  $X_1^\circ\coloneqq\sqrt{W_1}Z_1^\circ$ and
  $X_2^\circ\coloneqq\sqrt{W_2}Z_2^\circ$, where $W_1,W_2\sim$ i.i.d.\,$F$ and are
  independent of $W$, $Z_1^\circ$ and $Z_2^\circ$. By construction,
  $(X_1^\circ,X_2^\circ)\perp(X_1,X_2)$, $X^\circ_1\perp X^\circ_2$, and
  $X_i^\circ=_dX_i$ for $i\in\{1,2\}$.
  By Lemma~\ref{lemma:tau-rhoS-X}~(ii), Spearman's rho of $X_1$ and $X_2$ is 
  \begin{equation} \label{eq:pf-thm2-3}
  \rho_S 
  = 12\,\PP\{X_1<X^\circ_1,\,X_2<X^\circ_2\}-3
  = 12\,\EE\,\PP\left\{\sqrt WZ_1 < \sqrt{W_1}Z^\circ_1, \sqrt WZ_2 <
  \sqrt{W_2}Z^\circ_2 \,\big|\, W,W_1,W_2\right\}-3.
  \end{equation}
  Applying Lemma~\ref{lemma:orthant-prob-S} to $(Z_1,Z_2)$ and $(Z^\circ_1,Z^\circ_2)$, we have
  \begin{align}\label{eq:pf-thm2-4}
  \PP\left\{\sqrt WZ_1 < \sqrt{W_1}Z^\circ_1, \sqrt WZ_2 < \sqrt{W_2}Z^\circ_2 
    \,\big|\, W,W_1,W_2\right\}
  =
  8\Phi_5\big(0;\mathrm P_S(\rho,\delta,T)\big),
  \end{align}
  where $T=_dV$, where $V$ is defined in part~(ii) of the theorem.
  Then, part~(ii) follows from \eqref{eq:pf-thm2-3} and \eqref{eq:pf-thm2-4}.
\end{proof}

\begin{proof}[\textbf{\upshape Proof of Corollary~\ref{cor:MSN-tau-S-Phi2}:}] 
  We prove part~(i) first. Define a four-dimensional random vector $X=(X_1,X_2,X_3,X_4)$ satisfying $X\given V\sim\mathrm N(0,\mathrm P_\tau(\rho,\delta,V))$ where $\mathrm P_\tau$ is defined by \eqref{eq:P-tau}. Then, we have
  \begin{equation}
  \Phi_4\big(0;\mathrm P_\tau(\rho,\delta(\rho,\alpha),V)\big) 
  = \PP\{X_1<0,X_2<0,X_3<0,X_4<0 \given V\}
  = \frac14\,\PP\{X_1<0,X_2<0\given X_3<0,X_4<0, V\}, \label{eq:Phi4-2}
  \end{equation}
  where in the second equality we used $\PP\{X_3<0,X_4<0\given V\}=(1/2)^2=1/4$,
  since $(X_3,X_4)\given V\sim\mathrm N(0,\mathrm I_2)$ and uncorrelated components
  are independent under multivariate normality. Consequently, by the
  original expression \eqref{eq:MSN-tau} for Kendall's tau and \eqref{eq:Phi4-2}, we
  have
  \begin{equation}\label{eq:tau-alternative}
  \tau(\rho,\alpha,F) = 4\,\EE\,\PP\{X_1<0,X_2<0\given X_3<0,X_4<0, V\} - 1.
  \end{equation}
  Using the conditioning property of the multivariate normal distribution and the identity $V_1^2+V_2^2=1$, we can deduce that $(X_1,X_2)\given \{X_3,X_4,V\} \sim \mathrm N(\mu, \Sigma)$, where 
  \[
  \mu =
  \begin{bmatrix} (V_1X_3+V_2X_4)\delta_1\\ (V_1X_3+V_2X_4)\delta_2 \end{bmatrix},
  \qquad
  \Sigma = 
  \begin{bmatrix}
    1-\delta_1^2 & \rho - \delta_1\delta_2 \\ \rho - \delta_1\delta_2 & 1-\delta_2^2
  \end{bmatrix}.
  \]
  Therefore, $\PP\{X_1<0,X_2<0\given X_3,X_4,V\}$ is equal to
  \begin{equation}\label{eq:tau-alternative-2}
  \Phi_2\big(-\alpha_1^\dagger(V_1X_3+V_2X_4)\,,\,-\alpha_2^\dagger(V_1X_3+V_2X_4)\,;\,\rho^\dagger\big),
  \end{equation}
  where $\alpha^\dagger_1,\alpha^\dagger_2$ and $\rho^\dagger$ are defined by \eqref{eq:alpha-dagger} and \eqref{eq:rho-dagger}. Since $(X_3,X_4)\given V$ is a pair of independent standard normal random variates, equation~\eqref{eq:MSN-tau-Phi2} follows from the definition of $(Y_1,Y_2)$, along with \eqref{eq:tau-alternative} and \eqref{eq:tau-alternative-2}.    
  
  Next, we prove part~(ii). Define $X=(X_1,X_2,X_3,X_4,X_5)$ satisfying $X\given V\sim\mathrm N(0,\mathrm P_S(\rho,\delta,V))$ where $\mathrm P_S$ is defined by \eqref{eq:P-S}. Then, we have
  \begin{align}
  \Phi_5\big(0;\mathrm P_S(\rho,\delta(\rho,\alpha),V)\big) 
  &= \PP\{X_1<0,X_2<0,X_3<0,X_4<0,X_5<0 \given V\}\notag\\
  &= \frac18\,\PP\{X_1<0,X_2<0\given X_3<0,X_4<0,X_5<0, V\}, \label{eq:Phi5-2}
  \end{align}
  where in the second equality we used $\PP\{X_3<0,X_4<0,X_5<0\given
  V\}=(1/2)^3=1/8$, since $(X_3,X_4,X_5)\given V\sim\mathrm N(0,\mathrm I_3)$ and
  uncorrelated components are independent under normality.
  Consequently, it follows from \eqref{eq:MSN-rhoS} and \eqref{eq:Phi5-2} that 
  \begin{equation}\label{eq:S-alternative}
  \rho_S(\rho,\alpha,F) = 12\,\EE\,\PP\{X_1<0,X_2<0\given X_3<0,X_4<0,X_5<0, V\} - 3.
  \end{equation}
  By the conditioning property of the multivariate normal and the identities
  $V_i^2+(V_i^-)^2=1$, for $i=1,2$, and $V_1^-V_2^-=V_3$, we obtain 
  $(X_1,X_2)\given \{X_3,X_4,X_5,V\} \sim \mathrm N(\mu, \Sigma)$, where 
  \[
  \mu = 
  \begin{bmatrix}
    (V_1X_3+V_1^-X_5)\delta_1\\ (V_2X_4+V_2^-X_5)\delta_2
  \end{bmatrix},
  \qquad
  \Sigma = 
  \begin{bmatrix}
    1-\delta_1^2 & V_3(\rho - \delta_1\delta_2) \\ 
    V_3(\rho - \delta_1\delta_2) & 1-\delta_2^2
  \end{bmatrix}.
  \]
  Therefore, $\PP\{X_1<0,X_2<0\given X_3,X_4,X_5,V\}$ is equal to
  \begin{equation}\label{eq:S-alternative-2}
    \Phi_2\big(-\alpha_1^\dagger(V_1X_3+V_1^-X_5)\,,\,-\alpha_2^\dagger(V_2X_4+V_2^-X_5)\,;\,\rho^\dagger
    V_3\big).
  \end{equation}
  Since $(X_3,X_4,X_5)\given V$ are mutually independent standard normal random
  variates, \eqref{eq:MSN-S-Phi2} follows from the definition of $(Y_1,Y_2,Y_3)$,
  along with \eqref{eq:S-alternative} and \eqref{eq:S-alternative-2}.
\end{proof}

\begin{proof}[\textbf{\upshape Proof of Proposition~\ref{prop:MSN-tau-S-properties}:}] 
  We will analyze the rank correlations using the expressions given in 
  Corollary~\ref{cor:MSN-tau-S-Phi2}.
  Part~(i) of the proposition follows directly
  from the exchangeability of bivariate normal cdf and the fact that 
  $\alpha^\dagger_1(\rho,(\alpha_1,\alpha_2))=\alpha^\dagger_2(\rho,(\alpha_2,\alpha_1))$
  and
  $\alpha^\dagger_2(\rho,(\alpha_1,\alpha_2))=\alpha^\dagger_1(\rho,(\alpha_2,\alpha_1))$,
  which can be inferred from \eqref{eq:alpha-dagger}.
  For part~(ii), note that $\tau(\rho,\alpha,F)=\tau(\rho,-\alpha,F)$ follows directly
  from the expression of $\tau$ in Corollary~\ref{cor:MSN-tau-S-Phi2}~(i) and $Z=_d-Z$.
  Moreover, by equation~(2) in Corollary~\ref{cor:MSN-tau-S-Phi2}~(ii), to show
  $\rho_S(\rho,\alpha,F)=\rho_S(\rho,-\alpha,F)$, it suffices to show
  \[
  \EE\,\Phi_2\big(\alpha_1^\dagger Z_1, \alpha_2^\dagger Z_2; \rho^\dagger V_3\big)
  =\EE\,\Phi_2(-\alpha_1^\dagger Z_1, -\alpha_2^\dagger Z_2; \rho^\dagger V_3),
  \]
  where the mixing variables $(Z_1,Z_2,V_3)$ are defined in
  Corollary~\ref{cor:MSN-tau-S-Phi2}~(ii). This equality can be established using the
  same proof strategy in the proof of
  Proposition~\ref{prop:MN-tau-S-properties}~(ii), along with the fact that
  $Z_i=_d-Z_i$, for $i=\{1,2\}$.
  
  For parts~(iii) and (iv), first note that $\tau$ or $\rho_S$ cannot attain the
  boundary values $-1$ or $1$ if $\rho\in(-1,1)$. This follows from the fact that, by
  construction, the skew-normal mixture copula is always absolutely continuous with 
  respect to the Lebesgue measure on $(0,1)^2$ for all $\rho\in(-1,1)$, whereas neither
  the Fr\'echet lower nor upper bound copula is absolutely continuous. Hence, the
  copula cannot coincide with either Fr\'echet bound when $\rho\in(-1,1)$, and
  therefore the rank correlations cannot reach $-1$ or $1$ when $\rho\in(-1,1)$.
  Moreover, since $\tau=\pm1$ if and only if $\rho_S=\pm1$, to establish parts~(iii)
  and (iv) it remains to show that 
  \[
    \tau(-1,\alpha,F) = -1,
    \qquad
    \tau(1,\alpha,F) = 1.
  \]
  We first observe that, for any $\alpha\in\RR^2$, 
  $\delta_1(-1,\alpha) = -\delta_2(-1,\alpha)$,
  and
  $\delta_1(1,\alpha)=\delta_2(1,\alpha)$,
  so that, by definition,
  $\alpha_1^\dagger=-\alpha_2^\dagger\coloneqq \alpha^\circ$
  and 
  $\rho^\dagger=-1$ when $\rho=-1$,
  and 
  $\alpha_1^\dagger=\alpha_2^\dagger\coloneqq \alpha_\circ$ and $\rho^\dagger=1$
  when $\rho=1$.
  Substituting $\rho=-1$ into equation~(1) in Corollary~\ref{cor:MSN-tau-S-Phi2}2~(i) yields
  $\tau(-1,\alpha,F) = 4\,\EE\,\Phi_2\big(\alpha^\circ Z,-\alpha^\circ Z;-1\big) -1$.
  Using the identity $\Phi_2(x,-x;-1)=\Phi(x)-\Phi(x)=0$ for all $x\in\RR$,
  we obtain that 
  $\tau(-1,\alpha,F) = 4\times0 - 1= -1$.
  For $\rho=1$, substituting $\rho=1$ into equation~(1) in Corollary~\ref{cor:MSN-tau-S-Phi2}2~(i)
  and applying Lemma~\ref{lemma:bi-normal-diag} for the case  $\rho=1$ gives
  $\tau(1,\alpha,F) 
  = 4\,\EE\,\Phi_2\big(\alpha_\circ Z,\alpha_\circ Z;1\big) -1 
  = 4\,\EE\,\Phi^{\mathrm{s}}(\alpha_\circ Z;0)-1
  = 4\,\EE\,\Phi(\alpha_\circ Z)-1$,
  where $Z=V_1Y_1+V_2Y_2$ as defined in Corollary~\ref{cor:MSN-tau-S-Phi2}~(i).
  Since $\alpha_\circ Z =_d -\alpha_\circ Z$ and 
  $\Phi(\alpha_\circ Z)+\Phi(-\alpha_\circ Z)=1$,
  we have 
  $\EE\,\Phi(\alpha_\circ Z) = 1/2$, and hence $\tau(1,\alpha,F) = 4\times1/2-1=1$. 
  This completes the proof of the proposition.
\end{proof}

\begin{proof}[\textbf{\upshape Proof of Corollary \ref{cor:SN-tau-S}:}] 
  Since the skew-normal copula is the special case of the skew-normal scale mixture
  copula when the mixing distribution $F$ is degenerated at $1$, substituting this
  degenerate distribution in the statements of Theorem~\ref{thm:MSN-tau-S} and
  Corollary~\ref{cor:MSN-tau-S-Phi2} immediately yields this corollary.
\end{proof}

\begin{proof}[\textbf{\upshape Proof of Proposition~\ref{prop:equi-skewness-MSN}:}] 
  The expressions of $\tau$ of $\rho_S$ stated in part~(i) and part~(ii) of the
  proposition follow directly from Corollary~\ref{cor:MSN-tau-S-Phi2} under the equi-skew setting. To
  analyze $\partial\tau/\partial\rho$, we deduce from \eqref{eq:dPhi2} that
  \[
  \frac{\partial}{\partial x}\Phi_2(x,x;\rho) 
  =
  2\phi(x)\Phi\left(\alpha_\rho\, x\right) = \phi^{\mathrm s}(x;\alpha_{\rho^\dagger}),
  \qquad
  \frac{\partial}{\partial \rho}\Phi_2(x,x;\rho) = \phi_2(x,x;\rho),
  \]
  where $\alpha_{\rho^\dagger}\coloneqq\sqrt{(1-\rho^\dagger)/(1+\rho^\dagger)}>0$. Then, we have
  \[
  \frac{\partial}{\partial\rho}\EE\,\Phi_2(a^\dagger Z, a^\dagger Z;\rho^\dagger)
  =\frac{\partial a^\dagger}{\partial \rho}\,\EE\,\phi^{\mathrm{s}}(a^\dagger Z;\alpha_{\rho^\dagger}) Z 
  +\frac{\partial \rho^\dagger}{\partial \rho}\,\EE\,\phi_2(a^\dagger Z,a^\dagger Z;\rho^\dagger).
  \]
  We note that (i) $\partial a^\dagger/\partial\rho>0$ when $a>0$, (ii)
  $\EE\,\phi^{\mathrm{s}}(a^\dagger Z;\alpha_{\rho^\dagger}) Z >0$ when $a>0$, by
  Lemma~\ref{lemma:positive}~(i) and the fact that $Z=_d-Z$, (iii)
  $\partial\rho^\dagger/\partial\rho>0$, and (iv) $\phi_2$ is a strictly positive
  function. Consequently, $\partial \EE\,\Phi_2(a^\dagger Z, a^\dagger
  Z;\rho^\dagger)/\partial \rho>0$, and hence $\partial
  \tau(\rho,a,F)/\partial\rho>0$. 
  Next, we prove $\partial\tau/\partial a<0$. Since $\partial\bar\delta/\partial a>0$ and $\bar\delta$ shares the same sign as $a$, to show $\partial\tau/\partial a<0$ it suffices to show 
  \begin{align}
  \frac{\partial}{\partial\bar\delta}\,\EE\,\Phi_4\big(0;\mathrm P_\tau(\rho,(\bar\delta,\bar\delta),V)\big) < 0 \quad\text{for}\quad \bar\delta>0.
  \label{eq:Phi4-derivative-delta-star}
  \end{align}
  By Lemma~\ref{lemma:Phi4-derivative-delta}, we have 
  $\partial\,\Phi_4\big(0;\mathrm P_\tau(\rho,(\bar\delta,\bar\delta),V)\big)/\partial\bar\delta
  =_d
  \pi^{-2} V_1\arcsin \tilde V(\bar\delta,\bar\delta)(1-V_1^2\bar\delta^2)^{-1/2}$. 
  Since $V_1>0$ and
  $\tilde V(\bar\delta,\bar\delta)
  =
  V_2(1-\rho)\bar\delta\,[(1-\bar\delta^2)(1-\rho^2-2V_1^2\bar\delta^2(1-\rho))]^{-1/2}<0$
  when $\bar\delta>0$, we can easily deduce \eqref{eq:Phi4-derivative-delta-star}
  upon verifying the conditions of interchanging expectation and differentiation.
  This completes the proof.
\end{proof}

\begin{proof}[\textbf{\upshape Proof of Proposition~\ref{prop:single-skew-MSN}:}] 
  We first note that the expressions of $\tau$ and $\rho_S$ stated in parts~(ii) and
  (iii) of the proposition follow directly from Corollary~\ref{cor:MSN-tau-S-Phi2} and the parameters given
  in \eqref{eq:ra-dagger-single} in the single-skew setting.
  Then, we prove part~(i) of the proposition. To show that
  $\tau(\rho,(0,a),F)$ is an odd functions of $\rho$, or
  $\tau(\rho,(0,a),F)+\tau(-\rho,(0,a),F)=0$, it suffices to show that (applying
  Corollary~\ref{cor:MSN-tau-S-Phi2}~(i))  
  \begin{equation}\label{eq:E1/2}
    \EE\left[\Phi_2(aZ,a\rho^\dagger Z;\rho^\dagger)
    +
    \Phi_2(aZ,-a\rho^\dagger Z;-\rho^\dagger)\right] 
    = 
    \frac12,
  \end{equation}
  where $\rho^\dagger$ is defined by \eqref{eq:ra-dagger-single}. Applying the
  representation of $\Phi_2(x_1,x_2;\rho)$ in \eqref{eq:Phi2-rep} and then using the
  identities $\phi_2(x_1,x_2;-\rho)=\phi_2(x_1,-x_2;\rho)$ and $\phi(x)+\phi(-x)=1$,
  we can deduce that the left-hand side of \eqref{eq:E1/2} is equal to
  $\EE\,\Phi(aZ)$. Since $\EE\,\Phi(aZ)=1/2$, as shown above,
  equation~\eqref{eq:E1/2} holds. Analogously, we can show that
  \[
  \EE\left[\Phi_2(aZ_1,a\rho^\dagger Z_2;\rho^\dagger V_3)
  +\Phi_2(aZ_1,-a\rho^\dagger Z_2;-\rho^\dagger V_3)\right] = \frac12,
  \]
  where $Z_1,Z_2$ and $V_3$ are defined in Corollary~\ref{cor:MSN-tau-S-Phi2}~(ii). Consequently,
  $\rho_S(\rho,(0,a),F)$ is an odd functions of $\rho$. 
  
  Provided that Kendall's tau is an odd function of $\rho$, to prove part~(ii) of the
  proposition, we focus on $\rho>0$, and show that $\partial\tau/\partial\rho>0$ and
  $\partial\tau/\partial a<0$, when $\rho>0$ and $a>0$. To analyze
  $\partial\tau/\partial\rho$, we first deduce from \eqref{eq:dPhi2} that
  \[
    \frac{\partial}{\partial x_2}\Phi_2(x_1,x_2;\rho) 
    = \phi(x_2)\Phi\left(\frac{x_1-\rho x_2}{\sqrt{1-\rho^2}}\right) 
    = \frac12\,\phi^{\mathrm s}\left(x_2;\,\frac{x_1/x_2-\rho}{\sqrt{1-\rho^2}}\right),
    \qquad
    \frac{\partial}{\partial \rho}\Phi_2(x_1,x_2;\rho) 
    = \phi_2(x_1,x_2;\rho).
  \]
  Then, we have
  \[
    \frac{\partial}{\partial\rho}\EE\,\Phi_2(aZ,a\rho^\dagger Z;\rho^\dagger)=
    \frac12\frac{\partial\rho^\dagger}{\partial\rho}
    \EE\left[a\,\phi^{\mathrm{s}}\left(a\rho^\dagger Z;\alpha_{\rho^\dagger}\right)Z 
    +2\phi_2(aZ,a\rho^\dagger Z;\rho^\dagger)
    \right],
  \]
  where 
  $\alpha_{\rho^\dagger}\coloneqq (1-\rho^{\dagger2})/\rho^\dagger\sqrt{1-\rho^{\dagger 2}}$.
  Since
  $\alpha_{\rho^\dagger}>0$ and $a\rho^\dagger>0$, when both $\rho$ and $a$ are
  strictly positive, it follows from Lemma~\ref{lemma:positive}~(i) that 
  $\EE \,\phi^{\mathrm{s}}\left(a\rho^\dagger Z;\alpha_{\rho^\dagger}\right)Z>0$.
  Along with
  $\partial\rho^\dagger/\partial\rho>0$ and $\phi_2(\,\cdot\,;\rho^\dagger)>0$,
  we deduce
  $\partial\EE\,\Phi_2(aZ,a\rho^\dagger Z;\rho^\dagger)/\partial\rho>0$ 
  and hence $\partial\tau/\partial\rho>0$, as desired to be shown.
  
  To analyze $\partial\tau/\partial a$, we first note that
  $\partial\delta_\circ/\partial a>0$ and
  $\mathrm{sgn}(\delta_\circ)=\mathrm{sgn}(a)$.
  Therefore, to show $\partial\tau/\partial a<0$ when $\rho>0$ and $a>0$, it suffices
  to prove that
  \begin{equation} \label{eq:Phi4-derivative-delta-circ}
    \frac{\partial}{\partial\delta_\circ}\EE\,\Phi_4\big(0;\mathrm
    P_\tau(\rho,(\delta_\circ,\delta_\circ\rho),V)\big)<0,
    \qquad
    \rho>0,\quad \delta_\circ>0,
  \end{equation}
  using the expression of Kendall's tau in Theorem~\ref{thm:MSN-tau-S}~(i). By
  Lemma~\ref{lemma:Phi4-derivative-delta}, we have
  $
    \partial\,\Phi_4\big(0;\mathrm P_\tau(\rho,(\delta_\circ,\delta_\circ\rho),V)\big)/\partial\delta_\circ
    =_d
    (2\pi^2)^{-1}\rho
    V_1\arcsin \tilde
    V(\delta_\circ\rho,\delta_\circ)[1-(V_1\delta_\circ\rho)^2]^{-1/2}
  $.
  Since $V_1>0$ and 
  $
    \tilde V(\delta_\circ\rho,\delta_\circ)
    =
    V_2\delta_\circ(1-\rho^2)\{(1-\rho^2\delta_\circ^2)
    [(1-\rho^2)(1-V_1^2\delta_\circ^2)]\}^{-1/2}<0
  $
  when $\delta_\circ>0$, we deduce \eqref{eq:Phi4-derivative-delta-circ} after
  verifying the conditions of interchanging expectation and differentiation. This
  completes the proof of part~(ii).
\end{proof}

\subsection{Proofs of the results in Section~\ref{sec:invertibility}}
\label{sec:proof-invertibility}

\smallskip

\begin{proof}[\textbf{\upshape Proof of Proposition~\ref{prop:J-equi}:}] 
  The expression of $J_{\mathrm{mn}}(\rho,b)$ in the first half of the proposition can
  be easily deduced from the formulas of $\tau$ and $\rho_S$ for normal location-scale
  mixture copulas under equi-skewness in Corollary~\ref{cor:equi-skewness-MN}, by
  applying \eqref{eq:dPhi2}. 
  As for $J_{\mathrm{msn}}(\rho,a)$ in the second half of the
  proposition, we begin with the formulas of $\tau$ and $\rho_S$ for skew-normal scale
  mixture copulas under equi-skewness in Proposition~\ref{prop:equi-skewness-MSN}.
  First, define $s\coloneqq 1+a^2(1-\rho^2)$ so that 
  $a^\dagger = a(1+\rho)s^{-1/2}$ and $\rho^\dagger = (1+\rho)s^{-1}-1$.
  Given $\partial s/\partial a = 2a(1-\rho^2)$ and $\partial s/\partial\rho = -2\rho a^2$,
  we can easily verify that 
  \[
  \begin{bmatrix}
  \partial\rho^\dagger/\partial \rho & \partial\rho^\dagger/\partial a \\
  \partial a^\dagger/\partial\rho    & \partial a^\dagger/\partial a
  \end{bmatrix} = B,
  \]
  where $B$ is the matrix stated in the proposition.
  Moreover, note that
  \begin{align*}
    \frac{\partial}{\partial\rho} \Phi_2(a^\dagger Z, a^\dagger Z; \rho^\dagger)
    &= \phi_2(a^\dagger Z, a^\dagger Z;\rho^\dagger)
    \frac{\partial\rho^\dagger}{\partial \rho} 
    +  Z \phi^{\mathrm s}(a^\dagger Z;\alpha_{\rho^\dagger})
    \frac{\partial a^\dagger}{\partial \rho},
    \\
    \frac{\partial}{\partial a} \Phi_2(a^\dagger Z, a^\dagger Z; \rho^\dagger)
    &= \phi_2(a^\dagger Z, a^\dagger Z;\rho^\dagger)\frac{\partial\rho^\dagger}{\partial a} 
    +  Z \phi^{\mathrm s}(a^\dagger Z;\alpha_{\rho^\dagger})
    \frac{\partial a^\dagger}{\partial a},
    \\
    \frac{\partial}{\partial\rho} \Phi_2(a^\dagger Z_1, a^\dagger Z_2; \rho^\dagger
    V_3)
    &= V_3 \phi_2(a^\dagger Z_1, a^\dagger Z_2;\rho^\dagger V_3)
    \frac{\partial\rho^\dagger}{\partial \rho} 
    + 
    \left[
      g(Z_1,Z_2,V_3;\rho^\dagger, a^\dagger) + g(Z_2,Z_1,V_3;\rho^\dagger, a^\dagger)
    \right]
    \frac{\partial a^\dagger}{\partial \rho}, 
    \\
    \frac{\partial}{\partial a} \Phi_2(a^\dagger Z_1, a^\dagger Z_2; \rho^\dagger V_3)
    &= V_3 \phi_2(a^\dagger Z_1, a^\dagger Z_2;\rho^\dagger V_3)
    \frac{\partial\rho^\dagger}{\partial a} 
    + 
    \left[
      g(Z_1,Z_2,V_3;\rho^\dagger, a^\dagger) + g(Z_2,Z_1,V_3;\rho^\dagger, a^\dagger)
    \right]
    \frac{\partial a^\dagger}{\partial a},
  \end{align*}
  where $\alpha_{\rho^\dagger}$ and $g$ are defined in the proposition. 
  The expression of $J_{\mathrm{msn}}(\rho,a)$ then follows immediately.
\end{proof}

\begin{proof}[\textbf{\upshape Proof of Proposition~\ref{prop:J-single}:}] 
  The expression of $J_{\mathrm{mn}}^\circ(\rho,b)$ in the first half of the proposition can
  be easily deduced from the formulas of $\tau$ and $\rho_S$ for normal location-scale
  mixture copulas under single-skewness in Proposition~\ref{prop:single-skew-MN}, by applying
  \eqref{eq:dPhi2}. 
  To show the expression of $J_{\mathrm{msn}}^\circ(\rho,a)$, 
  we will use the formulas of $\tau$ and $\rho_S$ for skew-normal scale
  mixture copulas under single-skewness are given in Proposition~\ref{prop:single-skew-MSN}.
  Define $s\coloneqq 1+a^2(1-\rho^2)$ so that $\rho^\dagger = \rho/\sqrt{s}$.
  Given $\partial s/\partial a = 2a(1-\rho^2)$ and $\partial s/\partial\rho = -2\rho a^2$,
  we have $\partial\rho^\dagger/\partial\rho = \gamma_1$ and  
  $\partial\rho^\dagger/\partial a= \gamma_2$, where $\gamma_1$ and $\gamma_2$ are
  defined in the proposition. Moreover, note that
  \begin{align*}
    \frac{\partial}{\partial\rho} \Phi_2(aZ, a\rho^\dagger Z; \rho^\dagger)
    &= 
    h(Z;a,\rho^\dagger) \frac{\partial\rho^\dagger}{\partial \rho},
    \quad
    \frac{\partial}{\partial a} \Phi_2(aZ, a\rho^\dagger Z; \rho^\dagger)
    =
    c(Z;a,\rho^\dagger)
    +
    h(Z;a,\rho^\dagger) \frac{\partial\rho^\dagger}{\partial a},
    \\
    \frac{\partial}{\partial\rho} 
    \Phi_2(aZ_1, a\rho^\dagger Z_2; \rho^\dagger V_3)
    &=
    \tilde h(Z_1,Z_2,V_3;a,\rho^\dagger) 
    \frac{\partial\rho^\dagger}{\partial \rho}, 
    \\
    \frac{\partial}{\partial a}
    \Phi_2(aZ_1, a\rho^\dagger Z_2; \rho^\dagger V_3)
    &=
    \tilde c(Z_1,Z_2,V_3;a,\rho^\dagger) 
    +
    \tilde h(Z_1,Z_2,V_3;a,\rho^\dagger) \frac{\partial\rho^\dagger}{\partial a},
  \end{align*}
  where $h, c, \tilde h$ and $\tilde c$ are defined in the proposition.
  The expression of $J^\circ_{\mathrm{msn}}(\rho,a)$ then follows immediately.
\end{proof}

\bibliographystyle{myjmva}
\bibliography{rank-corr-SE}

\end{document}